\documentclass[12pt,dvips]{amsart}
\usepackage{euler, amssymb, epsfig, hyperref}
\usepackage[pdftex]{color}


\include{defn}           

\setlength{\oddsidemargin}{0in}
\setlength{\evensidemargin}{0in}
\setlength{\marginparwidth}{0in}
\setlength{\marginparsep}{0in}
\setlength{\marginparpush}{0in}
\setlength{\topmargin}{0in}
\setlength{\headheight}{0pt}
\setlength{\headsep}{0pt}
\setlength{\footskip}{.3in}
\setlength{\textheight}{9.2in}
\setlength{\textwidth}{6.5in}
\setlength{\parskip}{4pt}


\newtheorem{Theorem}{Theorem} 
\newtheorem{Proposition}{Proposition} 
\newtheorem{Lemma}{Lemma}

\newtheorem{Corollary}{Corollary}
\newtheorem*{Corollary*}{Corollary}
\newtheorem{Definition}{Definition} 
\newtheorem*{Theorem*}{Theorem}
\theoremstyle{remark} 
\newtheorem{rem}{Remark}

\newcommand\junk[1]{}

\newcommand\GG{\Gamma}
\newcommand\BV{{\bf V}}

\newcommand\Bdd{{\bf \partial}}

\newcommand\BE{{\bf E}}
\newcommand\dd{{\partial}}
\newcommand\GD{{\Delta}}
\newcommand\smin{{\smallsetminus}}
\newcommand\konj{\overline}
\newcommand\A{{\mathbb A}}
\newcommand\Gg{\gamma}
\newcommand\BS{{\bf S}}
\newcommand\BG{{\bf G}}

\newcommand\cT{{\mathcal T}}

\newcommand\Z{{\mathbb Z}}
\renewcommand\L{{\mathbb L}}
\newcommand\cV{{\mathcal V}}

\renewcommand\P{{\mathbb P}}
\newcommand\C{{\mathbb C}}
\newcommand\R{{\mathbb R}}

\newcommand\Go{\omega}

\newcommand\cL{{\mathcal L}}
\newcommand\cS{{\mathcal S}}
\newcommand\cG{{\mathcal G}}
\newcommand\cF{{\mathcal F}}
\newcommand\cN{{\mathcal N}}
\newcommand{\bG}{{\mathbb G}}
\newcommand{\bT}{{\mathbb T}}
\newcommand\cM{{\mathcal M}}

\theoremstyle{plain}






\newcommand\codim{{\rm co}\!\dim}

\numberwithin{equation}{section}

\begin{document}
\pagestyle{plain}

\title{Feynman integrals and motives of configuration spaces}
\author[Ceyhan]{\"Ozg\"ur Ceyhan}
\author[Marcolli]{Matilde Marcolli}
\date{December 25, 2010}
\address{\"O.~Ceyhan: Korteweg-de Vries Institute for Mathematics, University of Amsterdam
P. O. Box 94248, 1090 GE Amsterdam, Netherlands} 
\email{o.ceyhan@uva.nl}
\address{M.~Marcolli: Mathematics Department, California Institute of Technology \\
1200 E. California Blvd.f \\ Pasadena, CA 91125, USA} 
\email{matilde@caltech.edu}

\date{\today}

\begin{abstract}
We formulate the problem of renormalization of Feynman integrals and its relation to
periods of motives in configuration space instead of momentum space.
The algebro-geometric setting is provided by the wonderful compactifications $\overline{Conf}_\Gamma(X)$ of arrangements of subvarieties associated to the subgraphs of a 
Feynman graph $\Gamma$, with $X$ a (quasi)projective variety.
The motive and the class in the Grothendieck ring are computed explicitly
for these wonderful compactifications, in terms of the motive of $X$ and the combinatorics of the
Feynman graph, using recent results of Li Li. 
The pullback to the wonderful compactification of the
form defined by the unrenormalized Feynman amplitude has singularities
along a hypersurface, whose real locus is contained in the exceptional divisors of
the iterated blowup that gives the wonderful compactification. A regularization
of the Feynman integrals can be obtained by modifying the cycle of
integration, by replacing the divergent locus with a Leray coboundary. The
ambiguities are then defined by Poincar\'e residues. While these
residues give mixed Tate periods associated to the cohomology of the
exceptional divisors and their intersections, the regularized integrals
give rise to periods of the hypersurface complement in the wonderful
compactification, which can be motivically more complicated.
\end{abstract}

\maketitle

\section{Introduction}

In recent years a lot of attention has been devoted to motivic aspects of perturbative quantum field
theory, aimed at providing an interpretation of Feynman integrals of a (massless, scalar)
quantum field theory and their renormalization in terms of periods of algebraic varieties. If one can control the nature of the motive of the algebraic variety, then one constraints the kind of numbers that can arise as periods. In particular, the original evidence of \cite{BruKre} suggested that multiple zeta values, hence mixed Tate motives would be the typical outcome of these Feynman integral calculations. When computing Feynman integrals in momentum space, the parametric form of Feynman integrals (see \cite{BjDr2}, \cite{ItZu}) expresses the unrenormalized Feynman amplitude as an integral on the complement of a hypersurface defined by the vanishing of the
Kirchhoff polynomial of the graph. The motivic properties of these hypersurfaces have been widely studied. Contrary to an earlier conjecture of Kontsevich (which was verified in \cite{Stem} for graphs with up to 12 edges), these hypersurfaces are not always mixed Tate motives. More precisely,
it was shown by Belkale and Brosnan \cite{BelBr1} that their classes span the (localized) Grothendieck ring of varieties, hence they can be very far from mixed Tate as motives (see, however, \cite{AluMa5} for the case of the Grothendieck ring without localization). More recently, it was proved rigorously by Francis Brown in \cite{Brown} why all the original cases computed in \cite{BruKre} gave rise to periods of mixed Tate motives, while the smallest explicit counterexample to Kontsevich's conjecture was identified by Doryn in \cite{Doryn}, see
also \cite{BroSch}. A reformulation of the original question in terms of the mixed Tate nature of certain relative cohomology groups for divisors in the complement of the determinant hypersurface and intersections of unions of Schubert cells in flag varieties was given in \cite{AluMa3}. For some related aspects of the interaction between Feynman integrals and motives see also \cite{mar}.

It is natural to consider, from a similar motivic perspective, also the dual picture, where the Feynman integrals and the renormalization procedure take place in configuration space, 
instead of working in momentum space. That is the natural setting of Epstein--Glaser renormalization \cite{EpGla}. As was shown in the seminal papers of Axelrod--Singer \cite{AxSing1}, \cite{AxSing2} in the case of Chern--Simons theory, renormalization of Feynman integrals in configuration space is closely related to the algebro-geometric construction
of the Fulton--MacPherson (FM) compactifications of configuration spaces \cite{fm}. In fact, they associate to a Feynman graph a differential geometric version of the FM compactification of the configuration space on the set of vertices of the graph. The result is a real manifold with corners, which is obtained, like the FM compactification, from a series of blowups, and on which the Feynman integrand extends smoothly. In fact, in the FM compactifications of \cite{fm} and of
\cite{AxSing1}, \cite{AxSing2}, one considers the complement of {\em all diagonals}. It was then observed in \cite{kon1}, \cite{kt}, \cite{BT}, that one can consider configuration spaces associated to graphs, where only the diagonals that correspond to edges in the graph are removed. These
also have compactifications, obtained in a similar way. In fact, the resulting compactifications
are a particular case of a wider class of generalizations of the FM compactifications, namely the ``wonderful models" in the sense of De Concini--Procesi, \cite{DecoPro}.  More precisely, the
recent paper of L.~Li \cite{li2} describes a general procedure to construct configuration spaces
and wonderful compactifications associated to certain arrangements of subvarieties. The graph configuration spaces and the compactifications of Kuperberg--Thurston \cite{kt} are shown in
\cite{li2} to be a special case of this general construction. The FM case is also a special
case that corresponds to the complete graph. We show that the configuration spaces
of graphs and their compactifications required for the regularization of Feynman amplitudes are in fact combinatorially the same as those of Kuperberg--Thurston \cite{kt}, 
using the result of Li \cite{li2}.
The use in Epstein--Glaser renormalization of these graph configuration spaces and their compactifications in the 
De Concini--Procesi sense was recently analyzed in the work of  \cite{BerBruKr},  \cite{BerKr},
\cite{Nikolov}. In particular, the recent paper \cite{BerBruKr} gives a careful description of several geometric aspects of Epstein--Glaser renormalization, formulated in terms of the wonderful compactifications of \cite{DecoPro} for graph configuration spaces.  The role of the Connes--Kreimer Hopf algebra in the Epstein--Glaser setting was also discussed in \cite{BerBruKr}, \cite{Nikolov}, while a version of the motivic Galois group incarnation of the renormalization group of \cite{cm}, \cite{cm1} was formulated in the Epstein--Glaser setting in \cite{Cey}. 

Here we give a reformulation of the motivic question in the configuration space setting.
We begin by describing briefly the geometry of our graph configuration spaces $Conf_\Gamma(X)$
and their compactifications $\overline{Conf}_\Gamma(X)$, and showing that they fit in
the general formalism of \cite{li2} and are in fact equivalent to the Kuperberg--Thurston \cite{kt}
compactifications. We then use another recent results of L.~Li, \cite{li}, on the
Chow motives of wonderful compactifications for smooth projective $X$ (as well as a similar 
result for Voevodsky motives in the quasi-projective case) to obtain an explicit formula for 
the class of $\overline{Conf}_\Gamma(X)$ in the Grothendieck ring of varieties. We obtain from
that also an explicit expression for the virtual Hodge polynomial that generalizes to the
$\overline{Conf}_\Gamma(X)$ the known formula of \cite{Cheah}, \cite{Getzler} for the FM case. 

We then concentrate on the residues of divergent Feynman integrals. We show that,
in the log divergent case, by pulling back the form to the wonderful compactification,
one has simple poles along the exceptional divisor of the deepest diagonal. Using 
a regularization obtained by replacing the divergence locus with a Leray coboundary,
we show that the ambiguity in the renormalization is due to a single Poincar\'e residue.
In the case where there are worse than logarithmic divergences, the pullback to the
wonderful compactification has higher order poles along the exceptional divisors and
the Poincar\'e residues in this case correspond to pieces of the Hodge filtration 
on the primitive cohomology. 

These Poincar\'e residues, that measure the ambiguities of the regularization by
Leray coboundaries, determine mixed Tate periods associated to the cohomology
of the exceptional divisors of the iterated blowups and their intersections, while the
regularized integrals give periods of the complement 
$\overline{Conf}_\Gamma(X)\smallsetminus Z_\Gamma$, where the 
hypersurface $Z_\Gamma$ is a quadric determined by the configuration space 
propagators of the graph. This hypersurface complement can be more complicated
motivically, due to the fact that one does not have a good control over the motivic
nature of the intersections of the components of $Z_\Gamma$ away from the real locus.

\section{Configuration spaces and their combinatorial compactifications}

We describe here briefly the geometry of configuration spaces associated to Feynman 
graphs and their wonderful compactifications. Some of what we discuss here can be
traced to the literature on the subject, especially \cite{AxSing2}, \cite{BerKr}, \cite{fm},
\cite{kon1}, \cite{kt}, \cite{li2}. See also the recent extensive treatment in \cite{BerBruKr}.
We focus here on those aspects that we directly need to obtain the explicit formulae for 
the motive, the class in the Grothendieck ring, and the virtual Hodge polynomial.

\subsection{Configuration spaces of graphs}

In the following, by a {\em graph} $\Gamma$ we always mean a finite graph. We use
the notation $\BV_\Gamma$ for the set of vertices of $\Gamma$ and $\BE_\Gamma$
for the set of edges, and we write $\Bdd_{\GG}: \BE_{\GG} \to S^2(\BV_{\GG})$ for the
boundary map that assigns to an edge its endpoints. (We consider here the graph as
un-oriented, hence the endpoints are defined up to ordering, in the symmetric product
$S^2(\BV_{\GG})$.) A looping edge is an edge for
which the two endpoints coincide and multiple edges are edges between the same pair
of endpoints. We assume that all our graph have no multiple edges and no looping edges,
see Remark \ref{multedgerem} below.

For a subgraph $\gamma \subseteq \Gamma$ we write $\Gamma//\gamma$ to
indicate the graph obtained from $\Gamma$ by shrinking each connected
component of $\gamma$ to a single (different) vertex, and then replacing 
each set of multiple edges with a new single edge. Similarly, we denote by $\Gamma/\gamma$
the quotient where all of $\gamma$ is identified to the same vertex and then each set of 
multiple edges is identified to one single edge. 

Notice that, even though we require the original graph to be free of multiple edges
and looping edges, the quotient graphs can in general have both, hence the reason
why we identify multiple edges in the quotients $\Gamma//\gamma$ and $\Gamma/\gamma$. 
Replacing multiple edges by simple edges does not affect anything in the construction,
see Remark \ref{multedgerem}.
The problem of looping edges in the quotients does not arise, as long as we consider
only induced subgraphs, in the sense of Definition \ref{SGdef}. 

\begin{Definition}\label{CGammaDef}
Let $X$ be a smooth quasi-projective variety and let $\Gamma$ be a graph. 
The configuration space $Conf_\GG(X)$ of $\GG$  in $X$ is the complement 
in the cartesian product $X^{\BV_\GG}=\{(x_v \mid v \in \BV_\GG)\}$ 
of the diagonals associated to the edges of $\GG$, namely
\begin{equation} \label{eqn_open}
Conf_\GG(X) \cong X^{\BV_\GG} \smin \bigcup_{e \in \BE_\GG} \GD_e,
\end{equation}
with 
\begin{equation}\label{Deltae}
\GD_e \cong \{(x_v \mid v \in \BV_\GG) \mid x_{v_1} = x_{v_2} \ {\text for} \ \Bdd_\GG(e) =  \{v_1,v_2\} \}.
\end{equation}
\end{Definition}

\begin{rem}\label{degenmapsG}
By identifying the product $X^{\BV_\GG}$ with the set of all maps
$f: \BV_\GG \to X$, one sees that the configuration space $Conf_\GG(X)$
consist of those maps that are ``non-degenerate along the edges of
$\GG$", that is, such that $f(v)\neq f(v')$ whenever $v$ and $v'$ are connected
by an edge in $\GG$. 
Notice that one can also consider the configuration space of all
non-degenerate maps $f: \BV_\GG \to X$, that is, all maps such that $f(v)\neq
f(v')$ whenever $v \neq v'$. This would correspond to removing
all the diagonals $x_v = x_{v'}$ from $X^{\BV_\GG}$, regardless of whether
the vertices $v$ and $v'$ are connected by an edge in $\GG$ or not. This would
correspond to the configuration space of Definition \ref{CGammaDef} above,
but for the {\em complete graph} with the same set of vertices $V_\GG$ as $\GG$.
\end{rem}

\begin{rem}\label{multedgerem}
Note that the definition of configurations does not detect multiple edges 
in the graph. In fact, in essence the notion of degeneration that defines the
diagonals \eqref{Deltae} is based on collisions of points and not on contracting 
the edges connecting them.  This is why we can assume, to begin with, that the graphs 
we consider have no multiple edges.
On the other hand,  the definition of configuration space is void
in the presence of  looping edges. In fact, a looping edge only
gives the trivial equivalence relation $x_v=x_v$, so that the diagonal
$\Delta_e$ associated to a looping edge is the whole space
$X^{\BV_\Gamma}$, and the complement $X^{\BV_\Gamma}\smallsetminus \Delta_e=\emptyset$.
To avoid this degenerate case, we also assume that graphs have no looping edges.
As observed above, the quotients by induced subgraphs (in the sense of Definition \ref{SGdef} below) will then also have no looping edges.
\end{rem}

\subsubsection{Subgraphs and corresponding diagonals}

We now consider diagonals associated not only to edges of a graph $\Gamma$,
but to certain classes of subgraphs $\gamma \subseteq \Gamma$.

\begin{Definition}\label{SGdef}
A subgraph $\gamma \subseteq \Gamma$ is called an {\em induced subgraph} if
two vertices $v,v'\in \BV_\gamma$ are connected by an edge $e\in \BE_\gamma$
if and only if they are connected by an edge $e\in \BE_\Gamma$, that is, $\gamma$
has all edges of $\Gamma$ on the same set of vertices. Let  $\BS\BG(\GG)$ denote
the set of all connected induced subgraphs of $\GG$. Let
\begin{equation}\label{BSGk}
\BS\BG_k (\GG)  = \{\Gg \in  \BS\BG(\GG) \mid  |\BV_\Gg|=k \},
\end{equation}
be the subset $\BS\BG_k (\GG) \subseteq \BS\BG(\GG)$ of all the connected
induced subgraphs on $k$ vertices. Then $\BS\BG(\GG)$ is a
disjoint union $\BS\BG(\GG) =\cup_{k=1}^{|\BV_\GG|} \BS\BG_k (\GG)$,
where $\BS\BG_{|\BV_\GG|} (\GG)=\{ \GG \}$.
Also let $\widehat{\BS\BG}(\Gamma)$ denote the set of all subgraphs $\gamma$ that
are unions of disjoint connected induced subgraphs. One similarly has subsets
$\widehat{\BS\BG}_k(\Gamma)\subseteq \widehat{\BS\BG}(\Gamma)$ of subgraphs with a
fixed number of vertices. 
\end{Definition}

We now consider diagonals associated to the induced subgraphs in the
following way.

\begin{Definition}\label{DgDef}
For each induced subgraph $\Gg$, the corresponding diagonal is
\begin{equation}\label{oldDgeq}
\GD_\Gg = \{(x_{v_1},\cdots,x_{v_n})\in X^{\BV_\GG}
\mid   x_{v_i} = x_{v_j} \ {\text for \ all}\  v_i,v_j \in \BV_\Gg \}
\end{equation}
while the (poly)diagonal  is
\begin{equation}\label{Dgeq}
\hat\GD_\Gg = \{(x_{v_1},\cdots,x_{v_n})\in X^{\BV_\GG}
\mid   x_{v} = x_{v'} \ {\text for } \
\{ v,v' \} =\Bdd(e),  e\in \BE_\gamma \}.
\end{equation}
\end{Definition}

We then have the following simple property.

\begin{Lemma}\label{diagslem1}
For an arbitrary graph $\Gamma$ and an induced subgraph $\gamma$, the
diagonal $\GD_\Gg$ is isomorphic to  $X^{\BV_{\Gamma/\Gg}}$, while the
(poly)diagonal $\hat\GD_\Gg$
is isomorphic to $X^{\BV_{\Gamma//\Gg}}$. When the graph $\gamma$ is
connected, then $\Delta_\gamma=\hat\Delta_\gamma$.
\end{Lemma}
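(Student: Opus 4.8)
The plan is to unwind the definitions \eqref{oldDgeq} and \eqref{Dgeq} and exhibit the isomorphisms as coordinate projections / restrictions. First I would treat $\GD_\gamma$. By definition it consists of tuples $(x_v \mid v \in \BV_\Gamma)$ for which all the coordinates indexed by vertices of $\gamma$ agree. Since $\gamma$ is an induced subgraph, $\BV_\gamma$ is just a subset of $\BV_\Gamma$, and imposing $x_{v_i}=x_{v_j}$ for all $v_i,v_j\in\BV_\gamma$ is exactly the condition that the tuple factors through the quotient set $\BV_\Gamma/\!\!\sim$, where $\sim$ collapses $\BV_\gamma$ to a single point. But that quotient set is precisely $\BV_{\Gamma/\gamma}$: by the definition of $\Gamma/\gamma$ recalled before Definition \ref{CGammaDef}, the vertex set of $\Gamma/\gamma$ is obtained from $\BV_\Gamma$ by identifying all of $\gamma$ to one vertex (the edge identifications in $\Gamma/\gamma$ do not change the vertex set). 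Hence the assignment sending a tuple in $\GD_\gamma$ to its induced map $\BV_{\Gamma/\gamma}\to X$ is a bijection, and it is manifestly a morphism with morphism inverse (it is a linear coordinate embedding followed by a coordinate projection), so it is an isomorphism of varieties $\GD_\gamma \cong X^{\BV_{\Gamma/\gamma}}$.

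Next I would treat the polydiagonal $\hat\GD_\gamma$. Here the defining condition is $x_v=x_{v'}$ only for pairs $\{v,v'\}=\partial(e)$ with $e\in\BE_\gamma$. This generates an equivalence relation on $\BV_\Gamma$ whose classes are the vertex sets of the connected components of $\gamma$ together with all remaining vertices as singletons. Imposing this is exactly the condition that the tuple factors through the quotient of $\BV_\Gamma$ by that relation, which by definition of $\Gamma/\!/\gamma$ (shrink each connected component of $\gamma$ to its own vertex) is $\BV_{\Gamma/\!/\gamma}$. Again the map ``tuple $\mapsto$ induced map on $\BV_{\Gamma/\!/\gamma}$'' is an isomorphism of varieties, giving $\hat\GD_\gamma \cong X^{\BV_{\Gamma/\!/\gamma}}$. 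Finally, when $\gamma$ is connected it has a single connected component, so shrinking each component to a vertex is the same as identifying all of $\gamma$ to one vertex; thus $\Gamma/\!/\gamma = \Gamma/\gamma$ and correspondingly $\GD_\gamma=\hat\GD_\gamma$, since both are cut out by the same equivalence relation on coordinates.

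The argument is essentially bookkeeping, so there is no serious obstacle; the one point that deserves a sentence of care is the claim that these set-theoretic bijections are isomorphisms of varieties and not merely of sets. This is immediate because each $\GD_\gamma$ and $\hat\GD_\gamma$ is a linear subspace of $X^{\BV_\Gamma}$ (cut out by equations of the form $x_v = x_{v'}$ in the sense of fibre products over a point, i.e. repeated diagonals), and projecting onto a choice of one coordinate per equivalence class is an isomorphism onto $X^{\BV_\Gamma/\sim}$ with inverse given by the diagonal embedding along each class — both are morphisms, functorially in $X$. The only mild subtlety is ensuring the equivalence classes are correctly matched with the vertex sets of $\Gamma/\gamma$ and $\Gamma/\!/\gamma$; this is exactly where the hypothesis that $\gamma$ is an \emph{induced} subgraph is used, so that $\BV_\gamma\subseteq\BV_\Gamma$ and the combinatorics of the quotient graphs recalled in \secref{} above apply verbatim.
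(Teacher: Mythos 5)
Your proof is correct and follows essentially the same route as the paper's: unwind \eqref{oldDgeq} and \eqref{Dgeq}, observe that the resulting coordinate identifications correspond precisely to the vertex sets of $\Gamma/\gamma$ and $\Gamma/\!/\gamma$ respectively, and note that for connected $\gamma$ the two quotient graphs coincide. One small caveat about your closing remark: the hypothesis that $\gamma$ be \emph{induced} is not actually what makes the equivalence classes match the vertex sets of the quotients --- that identification works for any subgraph (indeed Lemma \ref{igamma} later shows $\hat\Delta_\gamma=\hat\Delta_{\iota(\gamma)}$); inducedness is imposed for other reasons in this paper, chiefly to guarantee transversality of the diagonal intersections, as illustrated in Remark \ref{needinduced}.
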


\proof
In the case where $\gamma$ is not necessarily connected, 
an element $(x_v)\in \hat\Delta_\gamma$
has $x_v=x_{v'}$ for all $v,v'\in \BV_\gamma$ that belong to the same connected
component of $\gamma$. Thus, one can identify
$\hat\Delta_\gamma$ with $X^{\BV_{\Gamma//\gamma}}$. The space
$\Delta_\gamma$ sits as a diagonal in $\hat\Delta_\gamma$ where the values $x_v$
assigned to vertices in the different connected components all agree. It can be
identified with $X^{\BV_{\Gamma/\Gg}}$ where all of $\gamma$ is reduced to just one vertex.

When the graph $\gamma$ is connected, $\Gamma//\gamma =\Gamma/\gamma$ 
is the  graph where all of $\gamma$ is identified to a single vertex. 
One then has an isomorphism between the subspace $\Delta_\gamma$ of
$X^{\BV_\Gamma}$ and the space $X^{\BV_{\Gamma//\gamma}}$.  
\endproof

One can see easily how, in the case of subgraphs that are not connected, the intersection of the diagonals $\Delta_\gamma$ does not behave as nicely as the intersection of the $\hat\Delta_\gamma$. For example, 
let $\gamma\subseteq \Gamma$ be an induced subgraph with two connected components $\gamma= \gamma_1 \cup \gamma_2$. Then $\hat\Delta_\gamma = \hat\Delta_{\gamma_1} \cap \hat\Delta_{\gamma_2}$, while
$\Delta_\gamma \subsetneq  \Delta_{\gamma_1} \cap \Delta_{\gamma_2}$.
This observation follows directly from the previous lemma, using
$\hat\Delta_\gamma = X^{\BV_{\Gamma//\gamma}}$ and $\Delta_\gamma = 
X^{\BV_{\Gamma/\Gg}}$ and the fact that, for the connected graphs $\gamma_i$,
one has $\hat\Delta_{\gamma_i}=\Delta_{\gamma_i}$. 
These have dimensions
\begin{equation}\label{dimDeltas}
\begin{array}{rll}
\dim \Delta_\gamma & =  \dim X^{\BV_{\Gamma/\gamma}}
& =  \dim(X) (|\BV_\Gamma| - |\BV_\gamma| +1), \\  \dim \hat\Delta_\gamma & = \dim
X^{\BV_{\Gamma//\gamma}} & =  \dim(X) (|\BV_\Gamma| - |\BV_\gamma| +b_0(\gamma)).
\end{array}
\end{equation}

\begin{Lemma}\label{disjunLem}
For any graph $\Gamma$, if $\gamma_1$ and $\gamma_2$ are disjoint 
induced subgraphs,
with $\gamma= \gamma_1 \cup \gamma_2$ their disjoint union, then $\hat \Delta_{\gamma_1}$ and $\hat \Delta_{\gamma_2}$ intersect transversely with
$\hat\Delta_\gamma =\hat \Delta_{\gamma_1} \cap \hat \Delta_{\gamma_2}$.

For any graph $\Gamma$, if $\gamma_1$ and $\gamma_2$ are induced subgraphs
which intersect at a single vertex $\gamma_1\cap \gamma_2 =\{ v\}$, then the diagonals
$\hat \Delta_{\gamma_1}$ and $\hat \Delta_{\gamma_2}$ also intersect transversely
with $\hat\Delta_\gamma =\hat \Delta_{\gamma_1} \cap \hat \Delta_{\gamma_2}$, for
$\gamma = \gamma_1 \cup \gamma_2$. 
\end{Lemma}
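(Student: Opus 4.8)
The plan is to extract from both parts of the statement a single combinatorial hypothesis — that no connected component of $\gamma_1$ and no connected component of $\gamma_2$ share more than one vertex — and to prove, under this hypothesis, that $\hat\Delta_{\gamma_1}$ and $\hat\Delta_{\gamma_2}$ meet transversely with $\hat\Delta_{\gamma_1}\cap\hat\Delta_{\gamma_2}=\hat\Delta_\gamma$ for $\gamma=\gamma_1\cup\gamma_2$. Both hypotheses of the lemma imply this: if $\gamma_1$ and $\gamma_2$ are vertex-disjoint then so are their components, and if $\gamma_1\cap\gamma_2=\{v\}$ then any two components, one from each, meet inside $\BV_{\gamma_1}\cap\BV_{\gamma_2}=\{v\}$.

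The identity $\hat\Delta_\gamma=\hat\Delta_{\gamma_1}\cap\hat\Delta_{\gamma_2}$ comes for free. In either case $\BV_{\gamma_1}\cap\BV_{\gamma_2}$ has at most one vertex, so $\BE_{\gamma_1}\cap\BE_{\gamma_2}=\emptyset$ (a shared edge would be a looping edge, which we have excluded) and $\BE_\gamma=\BE_{\gamma_1}\sqcup\BE_{\gamma_2}$. Since, by \eqref{Dgeq}, $\hat\Delta_{\gamma_i}$ is cut out in $X^{\BV_\Gamma}$ precisely by the equations $x_w=x_{w'}$ with $\{w,w'\}=\Bdd(e)$, $e\in\BE_{\gamma_i}$, and similarly for $\hat\Delta_\gamma$, this decomposition of the edge set gives $\hat\Delta_\gamma=\hat\Delta_{\gamma_1}\cap\hat\Delta_{\gamma_2}$ both set-theoretically and scheme-theoretically (for the disjoint case this was essentially observed just before the lemma using Lemma \ref{diagslem1}). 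So everything reduces to transversality.

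For transversality I would argue pointwise. Fix $p=(p_w)_{w\in\BV_\Gamma}$ in the intersection; since $p_w=p_{w'}$ whenever $w,w'$ lie in a common component of $\gamma_1$ or of $\gamma_2$, one can choose coordinate charts on the factors of $X^{\BV_\Gamma}$ compatibly with these coincidences, obtaining affine coordinates $x^{(k)}_w$ ($1\le k\le \dim X$) near $p$ in which each $\hat\Delta_{\gamma_i}$ is the linear subspace $\{x^{(k)}_w=x^{(k)}_{w'}\ \text{for}\ \{w,w'\}=\Bdd(e),\ e\in\BE_{\gamma_i}\}$. Then $N^*_p\hat\Delta_{\gamma_i}$ is spanned by the differences $dx^{(k)}_w-dx^{(k)}_{w'}$, so it consists of the covectors $\sum_{w,k}a^{(k)}_w\,dx^{(k)}_w$ for which, for every $k$, the coefficient vector $(a^{(k)}_w)_w$ is supported on $\BV_{\gamma_i}$ and has zero sum over each connected component of $\gamma_i$ (the classical description of the coboundary space of a graph, so the dimension matches $\codim \hat\Delta_{\gamma_i}$ of \eqref{dimDeltas}). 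Transversality at $p$ amounts to $N^*_p\hat\Delta_{\gamma_1}\cap N^*_p\hat\Delta_{\gamma_2}=0$; a covector in this intersection is supported on $\BV_{\gamma_1}\cap\BV_{\gamma_2}$, hence on at most one vertex $v$, and the zero-sum condition over the $\gamma_1$-component of $v$ — a nonempty vertex set containing $v$ on which all other coefficients already vanish — forces $a^{(k)}_v=0$ for all $k$. Hence the conormal spaces meet only in $0$ and the intersection is transverse along $\hat\Delta_\gamma$.

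I do not anticipate a serious obstacle; the only step needing care is the choice of local charts on $X^{\BV_\Gamma}$, which must be made compatibly with the vertex identifications coming from $\gamma_1\cup\gamma_2$ so that the polydiagonals become genuine linear subspaces — this is where smoothness of $X$ enters — after which the claim is the elementary linear algebra of graph incidence matrices. An equivalent and equally short route, if one prefers to avoid conormal spaces, is to produce for an arbitrary $(\eta_w)\in T_pX^{\BV_\Gamma}$ an explicit splitting $\eta=\xi+\zeta$ with $\xi\in T_p\hat\Delta_{\gamma_1}$ and $\zeta\in T_p\hat\Delta_{\gamma_2}$ — for instance take $\xi$ to vanish on $\BV_{\gamma_1}$ and set $\zeta=\eta-\xi$, correcting only the single common value at $v$ in the second case — giving $T_p\hat\Delta_{\gamma_1}+T_p\hat\Delta_{\gamma_2}=T_pX^{\BV_\Gamma}$ directly.
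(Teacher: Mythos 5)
Your proof is correct, and it follows a genuinely different route from the paper's. The paper establishes the set-theoretic identity $\hat\Delta_\gamma=\hat\Delta_{\gamma_1}\cap\hat\Delta_{\gamma_2}$ (via the identifications $\Gamma//\gamma=(\Gamma//\gamma_1)//\gamma_2$ in the disjoint case, and by hand in the one-vertex case) and then concludes transversality from the dimension count \eqref{dimDeltas}: $\dim\hat\Delta_{\gamma_1}+\dim\hat\Delta_{\gamma_2}-\dim\hat\Delta_\gamma=\dim X^{\BV_\Gamma}$. That count does prove transversality here, but only because the polydiagonals are locally coordinate linear subspaces — a fact the paper leaves implicit. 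You make it explicit: after choosing compatible charts you pass to the conormal-space criterion $N^*_p\hat\Delta_{\gamma_1}\cap N^*_p\hat\Delta_{\gamma_2}=0$ and carry out the linear algebra on the coboundary (zero-sum) spaces of the graphs; the support of a common conormal covector must lie in $\BV_{\gamma_1}\cap\BV_{\gamma_2}$ (at most one vertex), and the zero-sum condition on the $\gamma_1$-component of that vertex kills it. You also unify the two hypotheses of the lemma into the single condition that components of $\gamma_1$ and $\gamma_2$ share at most one vertex, and you get the scheme-theoretic equality $\hat\Delta_\gamma=\hat\Delta_{\gamma_1}\cap\hat\Delta_{\gamma_2}$ more directly from the decomposition $\BE_\gamma=\BE_{\gamma_1}\sqcup\BE_{\gamma_2}$ of defining equations. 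What the paper's dimension count buys is brevity; what your argument buys is an explicit pointwise transversality proof that doesn't lean on an unstated local-linearity fact, and a cleaner statement of what is actually being used combinatorially. One small stylistic note: in the alternative tangent-splitting route you sketch, "$\xi$ vanishes on $\BV_{\gamma_1}$" would fail exactly at $v$ in the one-vertex case, so the correction you mention is genuinely needed (set $\xi_w=\eta_v$ on the $\gamma_1$-component of $v$, arbitrary constants on the other $\gamma_1$-components, and $\xi_w=\eta_w$ off $\BV_{\gamma_1}$); you clearly know this, but spelling it out would make the sketch fully self-contained.
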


\begin{rem}\label{induceunion}
The union of induced subgraphs is not in 
general an induced subgraph, as one can see by taking two sides of a triangle, 
or, for disjoint unions, the opposite sides of a square. 
However, one can still define $\hat\Delta_\gamma$ as in  \eqref{Dgeq}.
\end{rem}

\proof {\em (Lemma \ref{disjunLem})} In the case of  disjoint induced subgraphs, the inclusion
$\hat\Delta_\gamma \subseteq \hat \Delta_{\gamma_1} \cap \hat \Delta_{\gamma_2}$
is certainly satisfied. Since the two graphs are disjoint, 
$\Gamma //  \gamma = (\Gamma //  \gamma_1) // \gamma_2=
 (\Gamma //  \gamma_2) // \gamma_1$, and the reverse inclusion also holds.
The dimension counting \eqref{dimDeltas} gives
$\dim(X^{\BV_\GG})=\dim(X) |\BV_\GG|=  \dim(X) ((|\BV_\Gamma| - |\BV_{\gamma_1}| +b_0(\gamma_1))+(|\BV_\Gamma| - |\BV_{\gamma_2}| +b_0(\gamma_2))-
(|\BV_\Gamma| - |\BV_\gamma| +b_0(\gamma)) = \dim(\hat\Delta_{\gamma_1})+\dim(\hat\Delta_{\gamma_2})-\dim(\hat\Delta_\gamma)$.

In the second case, let $\gamma_{1,i}$ and $\gamma_{2,j}$ be the connected
components of $\gamma_1$ and $\gamma_2$, respectively, numbered so that
$\gamma_{1,0}$ and $\gamma_{2,0}$ are the components that contain the vertex $v$.
Then a point $(x_v)$ in the intersection $\hat \Delta_{\gamma_1} \cap \hat \Delta_{\gamma_2}$
satisfies $x_v=x_{1,i}$ for all $v\in \BV_{\gamma_{1,i}}$ and $x_v=x_{2,j}$ for all
$v\in \BV_{\gamma_{2,j}}$ and with $x_{1,0}=x_{2,0}$, so that set-theoretically
$\hat\Delta_\gamma = \hat \Delta_{\gamma_1} \cap \hat \Delta_{\gamma_2}$. Since
$\gamma_1\cap \gamma_2$ consists of a single vertex, $\hat \Delta_{\gamma_1\cap\gamma_2}
=X^{\BV_\Gamma}$, while $|\BV_{\gamma}|=|\BV_{\gamma_1}|+|\BV_{\gamma_2}|-1$
and $b_0(\gamma)=b_0(\gamma_1)+b_0(\gamma_2)-1$, so that
the same dimension counting as above holds.
\endproof

\begin{rem} 
The proof above essentially needs the condition 
$$\Gamma //  \gamma = (\Gamma //  \gamma_1) // \gamma_2= (\Gamma //  \gamma_2) // \gamma_1$$ 
to be satisfied. The cases, (1) $\gamma_1$ \& $\gamma_2$
are disjoint and (2) $\gamma_1$ \& $\gamma_2$ intersect at a vertex, examined in Lemma
\ref{disjunLem} are the only possible cases. In all other cases, 
$\gamma_1 \not\subset (\Gamma //  \gamma_2)$ and $\gamma_2 \not\subset (\Gamma //  \gamma_1)$ . 
\end{rem}

We have also the following property of the diagonals associated to induced subgraphs.

\begin{Lemma}\label{lemsubgr}
For arbitrary $\Gamma$, if $\gamma_1 \subseteq \gamma_2$ are induced subgraphs,
then $\hat\Delta_{\gamma_2} \subseteq \hat \Delta_{\gamma_1}$ and 
$\Delta_{\gamma_2} \subseteq \Delta_{\gamma_1}$.

If $\gamma_1$ and $\gamma_2$ are connected  
induced subgraphs with  $\gamma_1\cap \gamma_2\neq \emptyset$,
such that neither is a subgraph of the other, and with the property 
that their union $\gamma =\gamma_1\cup \gamma_2$ 
is also an induced connected subgraph, then the diagonals $\Delta_{\gamma_1}$ and 
$\Delta_{\gamma_2}$ intersect transversely along the diagonal $\Delta_\gamma$.

For (not necessarily connected) induced subgraphs $\gamma_1$ and $\gamma_2$,
where neither is a subgraph of the other and such that the number of connected
components satisfies
\begin{equation}\label{b0inclexl}
b_0(\gamma) = b_0(\gamma_1) + b_0(\gamma_2) - b_0(\gamma_1\cap \gamma_2),
\end{equation}
with $\gamma =\gamma_1\cup \gamma_2$, 
the diagonals $\hat\Delta_{\gamma_1}$ and 
$\hat\Delta_{\gamma_2}$ intersect transversely along the diagonal $\hat\Delta_\gamma$.
\end{Lemma}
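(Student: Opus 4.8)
The plan is to prove the three assertions in order, with part (1) immediate and parts (2) and (3) reduced to one set-theoretic identity together with one tangent-space computation. For part (1): if $\gamma_1\subseteq\gamma_2$ are induced subgraphs then $\BE_{\gamma_1}\subseteq\BE_{\gamma_2}$ and $\BV_{\gamma_1}\subseteq\BV_{\gamma_2}$, so every linear equation $x_v=x_{v'}$ with $\{v,v'\}=\Bdd(e)$, $e\in\BE_{\gamma_1}$, that cuts out $\hat\Delta_{\gamma_1}$ in \eqref{Dgeq} occurs among the equations cutting out $\hat\Delta_{\gamma_2}$; hence $\hat\Delta_{\gamma_2}\subseteq\hat\Delta_{\gamma_1}$. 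Likewise every equation $x_{v_i}=x_{v_j}$ with $v_i,v_j\in\BV_{\gamma_1}$ defining $\Delta_{\gamma_1}$ in \eqref{oldDgeq} occurs among those defining $\Delta_{\gamma_2}$, so $\Delta_{\gamma_2}\subseteq\Delta_{\gamma_1}$.

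\textbf{The set-theoretic identity.} I would next show $\hat\Delta_{\gamma_1}\cap\hat\Delta_{\gamma_2}=\hat\Delta_\gamma$ for $\gamma=\gamma_1\cup\gamma_2$. By \eqref{Dgeq}, a point $(x_v)$ lies in $\hat\Delta_{\gamma_i}$ precisely when $x$ is constant on each connected component of $\gamma_i$, and it lies in $\hat\Delta_\gamma$ precisely when $x$ is constant on each connected component of $\gamma$. Since $\BE_{\gamma_i}\subseteq\BE_\gamma$, the inclusion $\hat\Delta_\gamma\subseteq\hat\Delta_{\gamma_1}\cap\hat\Delta_{\gamma_2}$ follows exactly as in the proof of part (1). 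For the reverse inclusion $\hat\Delta_{\gamma_1}\cap\hat\Delta_{\gamma_2}\subseteq\hat\Delta_\gamma$: given $v,w$ in a common component of $\gamma$, choose a path in $\gamma$ joining them; each of its edges lies in $\BE_{\gamma_1}$ or in $\BE_{\gamma_2}$, and constancy of $x$ on the components of $\gamma_1$ and of $\gamma_2$ forces $x_v=x_w$ step by step along the path. In the situation of part (2), $\gamma$ is connected, so $\hat\Delta_{\gamma_i}=\Delta_{\gamma_i}$ and $\hat\Delta_\gamma=\Delta_\gamma$ by Lemma \ref{diagslem1}, and the identity becomes $\Delta_{\gamma_1}\cap\Delta_{\gamma_2}=\Delta_\gamma$; equivalently it encodes $\Gamma//\gamma=(\Gamma//\gamma_1)//\gamma_2=(\Gamma//\gamma_2)//\gamma_1$ together with Lemma \ref{diagslem1}, the hypothesis that $\gamma$ be induced and connected ensuring that $\Delta_\gamma$ is itself one of the diagonals attached to a connected induced subgraph.

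\textbf{Transversality.} Each diagonal is linear in every factor: under $T_pX^{\BV_\Gamma}=\bigoplus_{v\in\BV_\Gamma}T_{p_v}X$, for $p\in\hat\Delta_\gamma$ one has
\[
T_p\hat\Delta_{\gamma_i}=\{(\xi_v)\mid \xi_v=\xi_{v'}\ \text{whenever}\ v,v'\ \text{lie in the same component of}\ \gamma_i\},
\]
and the analogous description holds for $\gamma$. Applying the path argument of the previous step to tangent vectors yields the clean-intersection identity $T_p\hat\Delta_{\gamma_1}\cap T_p\hat\Delta_{\gamma_2}=T_p\hat\Delta_\gamma$ at every $p\in\hat\Delta_\gamma$. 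Since $\dim(A+B)=\dim A+\dim B-\dim(A\cap B)$ for linear subspaces, transversality $T_p\hat\Delta_{\gamma_1}+T_p\hat\Delta_{\gamma_2}=T_pX^{\BV_\Gamma}$ is then equivalent to the additivity $\codim\hat\Delta_\gamma=\codim\hat\Delta_{\gamma_1}+\codim\hat\Delta_{\gamma_2}$. By \eqref{dimDeltas} these codimensions are $\dim(X)(|\BV_{\gamma_i}|-b_0(\gamma_i))$ and $\dim(X)(|\BV_\gamma|-b_0(\gamma))$, so after substituting $|\BV_\gamma|=|\BV_{\gamma_1}|+|\BV_{\gamma_2}|-|\BV_{\gamma_1\cap\gamma_2}|$ the additivity becomes a numerical identity among the $b_0$'s, which is \eqref{b0inclexl} (and, in the connected case of part (2), the statement with all $b_0$ equal to $1$, i.e. $\gamma_1\cap\gamma_2$ contributing a single vertex); this gives transversality along $\hat\Delta_\gamma$.

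\textbf{The main obstacle.} The set-theoretic equality and the cleanness of the intersection are routine bookkeeping with connected components; the delicate point is the codimension count behind the word ``transversely.'' One must (i) verify genuine cleanness, that the tangent space to the set-theoretic intersection coincides with $T_p\hat\Delta_{\gamma_1}\cap T_p\hat\Delta_{\gamma_2}$ — this is where the linearity of the $\hat\Delta$'s enters — and (ii) reconcile the two inclusion--exclusion relations at play, one for $|\BV_\bullet|$ (automatic) and one for $b_0(\bullet)$. On (ii) some care is required: additivity of codimensions is the relation $b_0(\gamma)=b_0(\gamma_1)+b_0(\gamma_2)-|\BV_{\gamma_1\cap\gamma_2}|$, which coincides with the stated hypothesis \eqref{b0inclexl} exactly when $\gamma_1\cap\gamma_2$ carries no edges, so that $b_0(\gamma_1\cap\gamma_2)=|\BV_{\gamma_1\cap\gamma_2}|$ — the setting of Lemma \ref{disjunLem}, which covers the cases needed later; outside it, ``transverse along $\hat\Delta_\gamma$'' is to be read in the clean-intersection sense established above.
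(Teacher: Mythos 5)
Your part (1), the path argument for the set-theoretic equality $\hat\Delta_{\gamma_1}\cap\hat\Delta_{\gamma_2}=\hat\Delta_\gamma$, and the clean-intersection computation at the level of tangent spaces are all correct and essentially parallel the paper's proof. But your transversality step has a genuine misreading, and the ``main obstacle'' paragraph, while correctly computing what additivity of ambient codimensions would require, draws the wrong conclusion from it.

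The issue is where the transversality lives. You test $T_p\hat\Delta_{\gamma_1}+T_p\hat\Delta_{\gamma_2}=T_pX^{\BV_\Gamma}$, i.e.\ transversality \emph{in the full ambient product} $X^{\BV_\Gamma}$. With that reading the codimension identity does require $b_0(\gamma)=b_0(\gamma_1)+b_0(\gamma_2)-|\BV_{\gamma_1\cap\gamma_2}|$, and you are right that this coincides with \eqref{b0inclexl} only when $\gamma_1\cap\gamma_2$ carries no edges. The paper, however, is not claiming ambient transversality, and its own dimension count shows this clearly: the equality it establishes is
\[
\dim\hat\Delta_{\gamma_1\cap\gamma_2}=\dim\hat\Delta_{\gamma_1}+\dim\hat\Delta_{\gamma_2}-\dim\hat\Delta_\gamma,
\]
that is, transversality of $\hat\Delta_{\gamma_1}$ and $\hat\Delta_{\gamma_2}$ \emph{as subvarieties of} $\hat\Delta_{\gamma_1\cap\gamma_2}$, the smallest (poly)diagonal containing both. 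If you redo your codimension arithmetic with codimensions measured inside $\hat\Delta_{\gamma_1\cap\gamma_2}$ rather than inside $X^{\BV_\Gamma}$, the additivity is exactly equivalent to $b_0(\gamma)=b_0(\gamma_1)+b_0(\gamma_2)-b_0(\gamma_1\cap\gamma_2)$, which is precisely \eqref{b0inclexl} — no extra hypothesis is missing. That this is the intended reading is forced even in part (2): two triangles in a square-with-diagonal meeting along a common edge satisfy all hypotheses there, yet their diagonals manifestly do not meet transversally in $X^{\BV_\Gamma}$ (both lie inside the proper subvariety $\Delta_{\gamma_1\cap\gamma_2}$). So the remedy is not to weaken the claim to ``clean intersection'' outside the edgeless case, but to recognize that ``intersect transversely along $\hat\Delta_\gamma$'' was always meant relative to $\hat\Delta_{\gamma_1\cap\gamma_2}$; your tangent-space computation already proves cleanness, so appending the corrected codimension count relative to $\hat\Delta_{\gamma_1\cap\gamma_2}$ closes the argument for all cases covered by the lemma.
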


\proof For  $\gamma_1 \subseteq \gamma_2$, we
have $X^{\BV_{\Gamma//\gamma_1}} \supseteq X^{\BV_{\Gamma//\gamma_2}}$ and
$X^{\BV_{\Gamma/\gamma_1}} \supseteq X^{\BV_{\Gamma/\gamma_2}}$, so the
first property clearly holds.
For the second statement, by the first statement
$\Delta_{\gamma_i}\subseteq \Delta_{\gamma_1\cap \gamma_2}$, and
$\Delta_\gamma \subseteq \Delta_{\gamma_1}\cap \Delta_{\gamma_2}$. 
Since the subgraphs have non-empty intersection, an element $(x_v)\in
\Delta_{\gamma_1}\cap \Delta_{\gamma_2}$ has all coordinates $x_v$ with $v\in \BV_\gamma$
with the same value, hence it is in $\Delta_\gamma$, so that $\Delta_\gamma = \Delta_{\gamma_1}\cap \Delta_{\gamma_2}$. By the counting of
dimensions as in \eqref{dimDeltas} we have
$\dim \Delta_{\gamma_1\cap \gamma_2} = \dim(X) (|\BV_\Gamma| - |\BV_{\gamma_1 \cap \gamma_2}| +1) = \dim(X)((|\BV_\Gamma| - |\BV_{\gamma_1}| +1) + (|\BV_\Gamma| - |\BV_{\gamma_2}| +1) - (|\BV_\Gamma| - |\BV_{\gamma}| +1)= \dim \Delta_{\gamma_1}
+\dim \Delta_{\gamma_2} -\dim \Delta_{\gamma}$. The third case is similar. One always
has $\hat\Delta_\gamma \subseteq \hat\Delta_{\gamma_1}\cap \hat\Delta_{\gamma_2}$,
and one sees in the same way that the reverse inclusion also holds, by breaking the argument
up into connected components and applying the previous result. The dimension counting
then gives $\dim(\hat\Delta_{\gamma_1\cap\gamma_2})= \dim(X) ((|\BV_\Gamma| - |\BV_{\gamma_1}| +b_0(\gamma_1))+(|\BV_\Gamma| - |\BV_{\gamma_2}| +b_0(\gamma_2))-
(|\BV_\Gamma| - |\BV_\gamma| +b_0(\gamma)) = \dim(\hat\Delta_{\gamma_1})+\dim(\hat\Delta_{\gamma_2})-\dim(\hat\Delta_\gamma)$,
where we set $\hat\Delta_{\gamma_1\cap\gamma_2}=X^{\BV_\Gamma}$ if $\gamma_1\cap\gamma_2=\emptyset$.
\endproof

\begin{rem}\label{needinduced}
Notice that we need to restrict to induced subgraphs in order to have transversal
intersections. In fact, consider the example of the triangle graph, with an induced
subgraph given by a single edge and the two adjacent vertices, and a (non-induced) 
subgraph given by the remaining two edges and all three vertices. The diagonals 
associated to these subgraphs do not intersect transversely, since one is contained
in the other. This example clearly does not satisfy \eqref{b0inclexl}.
\end{rem}

\begin{rem}\label{nonconnunion}
The second statement of Lemma \ref{lemsubgr} does not hold if the
union $\gamma =\gamma_1\cup \gamma_2$ is not connected. Take
as $\gamma_1$ and $\gamma_2$ two opposite sides in a hexagon.
Both are connected induced subgraphs and their union is induced,
but not connected. The intersection of the diagonals $\Delta_{\gamma_1}$ and 
$\Delta_{\gamma_2}$ is larger than the diagonal $\Delta_\gamma$.
However, in this case, the third statement of Lemma \ref{disjunLem}
ensures that the problem does not arise when working with the (poly)diagonals
$\hat\Delta_\gamma$, since \eqref{b0inclexl} is satisfied in this case.
\end{rem}

\begin{rem}\label{b0suffnotnec}
The condition \eqref{b0inclexl} is sufficient to guarantee transversal
intersections of the (poly)diagonals $\hat\Delta_\gamma$ but not
necessary, as we see in Proposition \ref{strataConf} below. 
\end{rem}

The outcome of the discussion above is that the best behaved class of (poly)diagonals
to consider in our setting is the collection of subvarieties $\hat\Delta_\gamma$, where 
$\gamma\subseteq \Gamma$ is a union of disjoint (connected) induced subgraphs. 
We see next that, in fact, this class has the right properties required to construct a wonderful 
compactification.

\subsection{The wonderful compactifications of arrangements of subvarieties}\label{wondcompSec}

The recent work of L.~Li \cite{li2} provides a general framework for constructing
wonderful compactifications for configuration spaces associated to arrangements
of subvarieties, which generalize the Fulton--MacPherson compactifications of
\cite{fm}, the wonderful compactifications of De Concini--Procesi \cite{DecoPro}, 
the conical compactifications of MacPherson--Procesi \cite{mpro}, and the 
compactifications of graph configuration spaces considered in Kontsevich \cite{kon1}
and also in Kuperberg--Thurston \cite{kt}. We recall here briefly Li's setting of \cite{li2}
and we describe how it can be used to construct a compactification of the
configuration spaces $Conf_\Gamma(X)$, through a family of (poly)diagonals
$\hat\Delta_\gamma$ as in \eqref{Dgeq}.

In the setting of \cite{li2}, a {\em simple arrangement} $\cS$ of subvarieties of an ambient
smooth quasi-projective variety $Y$ is a finite collection of nonsingular closed 
subvarieties $S_i$ with the
properties that all nonempty intersections of subvarieties in the collection are also 
subvarieties in the collection and that any two $S_i$ and $S_j$ in the collection
intersect cleanly (along a nonsingular subvariety, with the tangent bundle of the
intersection equal to the intersection of the restrictions of the tangent bundles). A {\em building set} $\cG$ for a simple arrangement $\cS$ is a subset of $\cS$ with the property that, for any
$S\in \cS$, the minimal elements in the collection $\{ G\in \cG \, :\, G \supseteq S \}$ intersect
transversely with intersection $S$. These minimal elements are called the $\cG$-{\em factors}
of $S$.

The main result of \cite{li2} shows that, given a building set $\cG$ for a simple arrangement,
one can construct a smooth wonderful compactification $Y_\cG$ of the configuration space
\begin{equation}\label{configLi}
Y\smallsetminus \cup_{G\in \cG} G,
\end{equation}
which has an explicit description as a sequence of iterated blowups. 

\begin{rem}\label{noncompYG}
Notice that, in the case where $Y$ is a smooth projective variety, these are indeed
compactifications, while when one still considers the same construction in the smooth
quasi-projective case, the resulting varieties $Y_\cG$ obtained by this method are
still referred to as ``compactifications" though technically they no longer are.
\end{rem}

Consider now the (poly)diagonals $\hat\Delta_\gamma$ defined as in \eqref{Dgeq}.
First notice that the relation between the case of induced subgraphs and the
case of more general subgraphs is given by the following simple observation.

\begin{Lemma}\label{igamma}
Let $\gamma$ be a connected subgraph of $\Gamma$ and let $\iota(\gamma)$ be
the smallest induced subgraph of $\Gamma$ that contains $\gamma$. Then
$\hat\Delta_\gamma =\hat\Delta_{\iota(\gamma)}$.

Let $\gamma$ be a (not necessarily connected and not necessarily induced)
subgraph of $\Gamma$ and let $\gamma_j$ be the connected components
of $\gamma$. Then $\hat\Delta_\gamma = \cap_j \hat\Delta_{\iota(\gamma_j)}$.
\end{Lemma}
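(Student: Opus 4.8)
The plan is to unwind the defining condition $x_v = x_{v'}$ for $\{v,v'\} = \partial(e)$, $e \in \BE_\gamma$ appearing in \eqref{Dgeq}, and to observe that this condition depends only on which pairs of vertices are joined by \emph{some} edge of $\gamma$, not on the multiplicity or identity of the edges, nor on whether $\gamma$ is induced. First I would treat the connected case: given a connected subgraph $\gamma$, the set of pairs $\{v,v'\}$ with $v,v' \in \BV_\gamma$ that are joined by an edge of $\Gamma$ is exactly $\BE_{\iota(\gamma)}$ by Definition \ref{SGdef}, and it certainly contains $\BE_\gamma$ (an edge of $\gamma$ is an edge of $\Gamma$ on vertices of $\gamma$, hence an edge of $\iota(\gamma)$). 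So a priori the list of equations defining $\hat\Delta_{\iota(\gamma)}$ is longer than that defining $\hat\Delta_\gamma$, giving $\hat\Delta_{\iota(\gamma)} \subseteq \hat\Delta_\gamma$. For the reverse inclusion I would use connectedness of $\gamma$: on a point of $\hat\Delta_\gamma$ the relation generated by the edge-identifications $x_v = x_{v'}$, $\{v,v'\} = \partial(e)$, $e \in \BE_\gamma$, forces $x_v = x_{v'}$ for \emph{all} $v,v' \in \BV_\gamma$ (since $\gamma$ is connected and $\BV_\gamma = \BV_{\iota(\gamma)}$), in particular for every pair joined by an edge of $\iota(\gamma)$; hence $\hat\Delta_\gamma \subseteq \hat\Delta_{\iota(\gamma)}$. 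This is really just the observation that for a connected graph $\hat\Delta = \Delta$ (already noted in Lemma \ref{diagslem1}), applied to both $\gamma$ and $\iota(\gamma)$, which have the same vertex set; so $\hat\Delta_\gamma = \Delta_\gamma = \Delta_{\iota(\gamma)} = \hat\Delta_{\iota(\gamma)}$.

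For the general case, let $\gamma$ have connected components $\gamma_j$. Each $\gamma_j$ is connected, so the first part gives $\hat\Delta_{\gamma_j} = \hat\Delta_{\iota(\gamma_j)}$. It remains to check $\hat\Delta_\gamma = \bigcap_j \hat\Delta_{\gamma_j}$. This follows directly from \eqref{Dgeq}: since $\BE_\gamma = \coprod_j \BE_{\gamma_j}$ (every edge of $\gamma$ lies in exactly one component), the system of equations $x_v = x_{v'}$ for $\{v,v'\} = \partial(e)$, $e \in \BE_\gamma$ is the union of the systems for the individual $\BE_{\gamma_j}$, and a point satisfies the whole system if and only if it satisfies each sub-system, i.e.\ lies in every $\hat\Delta_{\gamma_j}$. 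Combining, $\hat\Delta_\gamma = \bigcap_j \hat\Delta_{\gamma_j} = \bigcap_j \hat\Delta_{\iota(\gamma_j)}$.

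I do not expect any serious obstacle here; the statement is essentially a bookkeeping lemma reconciling the edge-based definition \eqref{Dgeq} with the vertex-based description of $\hat\Delta$ for connected pieces. The one point that wants a word of care is that $\iota(\gamma)$ has the \emph{same vertex set} as $\gamma$ (it only adds edges), so connectedness of $\gamma$ transfers to $\iota(\gamma)$ and the identification $\hat\Delta_\gamma = \Delta_\gamma$ applies on the nose; and that, as stressed in Remark \ref{induceunion}, although $\gamma$ (or $\bigcup_j \gamma_j$) need not be induced, the formula $\hat\Delta_\gamma = \bigcap_j \hat\Delta_{\iota(\gamma_j)}$ still expresses $\hat\Delta_\gamma$ through diagonals of honest induced subgraphs, which is exactly what is needed for the building-set construction that follows.
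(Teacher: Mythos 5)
Your proof is correct and follows essentially the same route as the paper: identify $\hat\Delta_\gamma$ with the full diagonal $\Delta_\gamma$ for connected $\gamma$ via the equivalence relation generated by edge identifications, note that passing to $\iota(\gamma)$ adds edges but not vertices so does not change this diagonal, and then reduce the general case to the connected one by intersecting over connected components. The only cosmetic difference is that you make the reduction $\hat\Delta_\gamma = \bigcap_j \hat\Delta_{\gamma_j}$ explicit from the edge partition rather than invoking the disjointness of the $\iota(\gamma_j)$ directly, which if anything slightly streamlines the second part.
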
 

\proof The graph $\iota(\gamma)$ is obtained by adding to $\gamma$ all edges of
$\Gamma$ between vertices of $\gamma$ that are not already edges of $\gamma$.
Then we can see as in Lemma \ref{diagslem1}
that, for a connected graph, the condition defining $\hat\Delta_\gamma$ is the same as 
that defining $\Delta_\gamma$ in \eqref{oldDgeq},
namely  $x_v=x_{v'}$ for all vertices in $\gamma$. When adding the remaining 
edges of $\Gamma$ between the same set of vertices, this does not add any
new identification, hence one obtains the same diagonal.

In the case where $\gamma$ has several connected components, the condition
defining $\hat\Delta_\gamma$ is that $x_v=x_{v'}=x_j$ for all vertices in a given
connected component $\gamma_j$. This condition again remains unchanged if
one replaces each $\gamma_i$ by $\iota(\gamma_i)$. In fact, the graphs
$\iota(\gamma_i)$ are still mutually disjoint, as the additional edges only connect vertices
within the same component.
\endproof

We then see that the (poly)diagonals $\hat\Delta_\gamma$ of disjoint unions
of connected induced subgraphs form an arrangement of subvarieties.

\begin{Lemma}\label{SsetDeltas}
For a given graph $\Gamma$, let $\widehat{\BS\BG}(\Gamma)$ denote the set of all unions
of pairwise disjoint connected induced subgraphs as in Definition \ref{SGdef}. Then the collection
\begin{equation}\label{cSGamma}
\cS_\Gamma=\{ \hat\Delta_\gamma \,|\, \gamma \in \widehat{\BS\BG}(\Gamma) \}
\end{equation}
is a simple arrangement of (poly)diagonal  subvarieties in $X^{\BV_\Gamma}$.
\end{Lemma}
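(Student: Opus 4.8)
The plan is to verify directly the two conditions in Li's definition of a simple arrangement, after recording that every member of $\cS_\Gamma$ is a smooth closed subvariety of $X^{\BV_\Gamma}$. The collection $\cS_\Gamma$ is finite because $\widehat{\BS\BG}(\Gamma)$ is finite. For $\gamma = \gamma_1 \sqcup \cdots \sqcup \gamma_r \in \widehat{\BS\BG}(\Gamma)$ with each $\gamma_i$ a connected induced subgraph, Lemma \ref{diagslem1} (whose proof for $\hat\Delta_\gamma$ uses only the edge structure of $\gamma$, not inducedness of the union) identifies $\hat\Delta_\gamma$ with the image of the closed embedding $X^{\BV_{\Gamma//\gamma}} \hookrightarrow X^{\BV_\Gamma}$ given by precomposition with the surjection $\BV_\Gamma \to \BV_{\Gamma//\gamma}$ that collapses each $\gamma_i$. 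Since $X$ is smooth, $X^{\BV_{\Gamma//\gamma}}$ is smooth, hence so is $\hat\Delta_\gamma$, and it is closed because $X$ is separated.

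Next I would check closure under intersection. For $\gamma,\gamma' \in \widehat{\BS\BG}(\Gamma)$, the defining equations \eqref{Dgeq} give immediately $\hat\Delta_\gamma \cap \hat\Delta_{\gamma'} = \hat\Delta_{\gamma \cup \gamma'}$, where $\gamma \cup \gamma'$ is the subgraph of $\Gamma$ with vertex set $\BV_\gamma \cup \BV_{\gamma'}$ and edge set $\BE_\gamma \cup \BE_{\gamma'}$: both sides are cut out by the identifications $x_v = x_{v'}$ ranging over the edges of $\gamma$ and of $\gamma'$. Now $\gamma \cup \gamma'$ is in general neither induced nor a disjoint union of induced subgraphs, but Lemma \ref{igamma} gives $\hat\Delta_{\gamma\cup\gamma'} = \hat\Delta_\delta$ with $\delta = \bigsqcup_k \iota(\delta_k)$, the $\delta_k$ being the connected components of $\gamma\cup\gamma'$. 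The graphs $\iota(\delta_k)$ are connected, induced, and pairwise disjoint --- the $\delta_k$ are vertex-disjoint and $\iota$ only inserts edges of $\Gamma$ between vertices already lying in a common $\delta_k$ --- so $\delta \in \widehat{\BS\BG}(\Gamma)$ and $\hat\Delta_\gamma \cap \hat\Delta_{\gamma'} = \hat\Delta_\delta \in \cS_\Gamma$. The same reasoning covers intersections of more than two members, and the trivial members (one-vertex subgraphs, for which $\hat\Delta$ is all of $X^{\BV_\Gamma}$) present no difficulty.

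Finally I would show that any two members intersect cleanly; note that the transversality statements of Lemmas \ref{disjunLem} and \ref{lemsubgr} only cover special configurations (for instance they fail for $\gamma_1\subseteq\gamma_2$), so a direct argument is needed. Cleanness is local in the \'etale (or complex-analytic) topology, so fix a point $p = (x_v)_{v\in\BV_\Gamma}$ of $\hat\Delta_\gamma \cap \hat\Delta_{\gamma'}$, choose one \'etale chart $U \to \A^{\dim X}$ of $X$ around each distinct value occurring among the $x_v$, and use the product of these charts as an \'etale neighborhood of $p$, realizing a neighborhood of $p$ in $X^{\BV_\Gamma}$ as an open subset of $\A^{\dim X \cdot |\BV_\Gamma|}$. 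In these coordinates $\hat\Delta_\gamma$ becomes the coordinate linear subspace $\{\, y_v = y_{v'} \mid \{v,v'\} \in \BE_\gamma \,\}$ --- each such edge joins vertices with $x_v = x_{v'}$ because $p$ lies on the diagonal, so these really are coordinate equalities --- and similarly for $\hat\Delta_{\gamma'}$; their intersection is again a coordinate subspace, in particular nonsingular, and for linear subspaces $L_1,L_2$ one trivially has $T_p(L_1\cap L_2) = T_pL_1 \cap T_pL_2$. Transporting this back through the \'etale chart yields cleanness. I expect this last point --- organizing the local charts when the $x_v$ partition into several distinct points and $X$ is only assumed smooth --- to be the only place requiring real care; everything else is a formal consequence of Lemmas \ref{diagslem1}, \ref{disjunLem} and \ref{igamma}.
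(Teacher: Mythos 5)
Your proof is correct, and the closure-under-intersection argument is essentially the paper's: both reduce to Lemma \ref{igamma}, the only difference being that you handle the disjoint and non-disjoint cases uniformly by factoring through $\hat\Delta_\gamma \cap \hat\Delta_{\gamma'} = \hat\Delta_{\gamma\cup\gamma'}$ before decomposing into connected components, while the paper treats the disjoint case separately (where no appeal to $\iota$ is needed). The real divergence is in the cleanness step. The paper closes the argument by invoking Li's Lemma~5.1, which characterizes clean intersection of smooth subvarieties by nonsingularity of the scheme-theoretic intersection; since you have just shown that every pairwise intersection is again some $\hat\Delta_\delta \cong X^{\BV_{\Gamma//\delta}}$, hence smooth, the criterion applies at once. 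You instead verify cleanness from scratch by passing to \'etale product charts around the distinct values $x_v$, observing that each $\hat\Delta_\gamma$ becomes a coordinate linear subspace near $p$ (legitimate because $p$ lying on the diagonal forces the relevant $x_v$ to coincide, so the conditions $y_v = y_{v'}$ live inside single charts), and noting that the tangent-space identity for linear subspaces is trivial. Your route is more self-contained and makes visible exactly why nonsingularity and the tangent-space condition hold simultaneously; the paper's route is shorter because the closure-under-intersection computation already produced the nonsingularity needed to feed Li's criterion, so it gets the tangent-space condition for free. Both are valid; yours trades a citation for an explicit local linearization, which is a reasonable choice if one wants the lemma to stand independently of the detailed statement of Li's Lemma~5.1.
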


\proof Let $\gamma_1$ and $\gamma_2$ be unions of disjoint connected
induced subgraphs of $\Gamma$.
If $\gamma_1$ and $\gamma_2$ are themselves disjoint, then the union $\gamma = \gamma_1 \cup \gamma_2$ is also an element in $\widehat{\BS\BG}(\Gamma)$ and
the intersection of the (poly)diagonal $\hat\Delta_{\gamma_1}\cap \hat\Delta_{\gamma_2}
=\hat\Delta_\gamma$ is still an element of $\cS_\Gamma$. 
If $\gamma_1 \cap \gamma_2\neq \emptyset$, then let $\gamma_\alpha$ be the connected components of $\gamma$. By Lemma \ref{igamma}, $\hat\Delta_{\gamma_1}\cap \hat\Delta_{\gamma_2}=\hat\Delta_\gamma=\cap_\alpha \hat\Delta_{\iota(\gamma_\alpha)}$ is also
still an element in the class $\cS_\Gamma$. The intersections are clean as all the $\hat\Delta_\gamma$ are smooth and the criterion of Lemma 5.1
of \cite{li2} characterizing clean intersection as the scheme-theoretic intersection being nonsingular  applies to the case of the (poly)diagonals.
\endproof

We can then identify a $\cG$-building set for the arrangement $\cS_\Gamma$. We first recall some further combinatorial properties of graphs that we need in the following.

A graph $\Gamma$ is 2-vertex-connected (biconnected) if it cannot be disconnected by the
removal of a single vertex. Note that the removal of a vertex in a graph means removal of the vertex along with the open star of edges around it. The graph consisting of a single edge is assumed to be biconnected. (See \cite{AluMa3} for a discussion of k-vertex-connectivity in the context of graph hypersurfaces.) 

Any connected graph $\Gamma$ admits a decomposition into {\em biconnected components}.
Namely, the graph $\Gamma$ is determined by a {\em block tree}, which is a finite
tree whose vertices are decorated by biconnected graphs and whose edges 
correspond to {\em cut-vertices} (or {\em articulation vertices}) of $\Gamma$. The graph
$\Gamma$ is obtained by joining the biconnected graphs at the articulation vertices.

\begin{Lemma}\label{indbiconn}
Let $\Gamma$ be a connected graph and $\gamma \subseteq \Gamma$ an induced
connected subgraph. If $\Gamma_i$ are the biconnected components of $\Gamma$, 
then $\gamma
\cap \Gamma_i$ is either empty or a union of biconnected induced subgraphs $\gamma_{ij}$ attached at cut-vertices, which are the biconnected components of $\gamma$.

If $\gamma \subseteq \Gamma$ is a biconnected subgraph and $\iota(\gamma)$ is
the smallest induced subgraph containing $\gamma$, then $\iota(\gamma)$ is
also biconnected.
\end{Lemma}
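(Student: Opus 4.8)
The plan is to derive both parts from the standard theory of block (biconnected-component) decompositions, using repeatedly the fact that, since $\Gamma$ has no multiple edges, every block of $\Gamma$ is an \emph{induced} subgraph. I would first record the facts I need: the blocks of a graph partition its edge set; every cycle of $\Gamma$ (necessarily of length $\ge 3$, as there are no loops or multiple edges) lies inside a single block; and two distinct edges lie in the same block of $\Gamma$ exactly when they can be joined by a chain of cycles $C_1,\dots,C_k$ of $\Gamma$ in which consecutive cycles share an edge, with $e\in C_1$ and $e'\in C_k$. That blocks are induced is seen by contradiction: if $e=\{u,v\}\in\BE_\Gamma$ with $u,v\in\BV_B$ but $e\notin\BE_B$ for a block $B$, then — since a single-edge block already carries the unique edge between its two vertices, the case of $B$ a single edge cannot occur — $B$ has at least three vertices, so adjoining $e$ to $B$ produces a $2$-connected subgraph of $\Gamma$ strictly larger than $B$, contradicting maximality. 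Applying this to the subgraph $\gamma\cap\Gamma_i$ (with vertex set $\BV_\gamma\cap\BV_{\Gamma_i}$ and edge set $\BE_\gamma\cap\BE_{\Gamma_i}$), which, as $\gamma$ and $\Gamma_i$ are induced, coincides with the induced subgraph of $\Gamma$ on $\BV_\gamma\cap\BV_{\Gamma_i}$, shows that its blocks are induced subgraphs of $\Gamma$ as well.

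For the first statement I would argue as follows. Since the blocks of $\Gamma$ partition $\BE_\Gamma$, each edge of $\gamma$ lies in a unique $\Gamma_i$, so $\BE_\gamma=\bigsqcup_i\BE_{\gamma\cap\Gamma_i}$. The key point is that, for edges $e,e'$ of $\gamma$, $e$ and $e'$ lie in the same block of $\gamma$ if and only if they lie in the same $\gamma\cap\Gamma_i$ and in the same block of that graph. One direction is immediate, because $\gamma\cap\Gamma_i\subseteq\gamma$ and a chain of cycles inside $\gamma\cap\Gamma_i$ is a chain of cycles inside $\gamma$. For the other direction, take a chain of cycles $C_1,\dots,C_k$ in $\gamma$ joining $e$ to $e'$: each $C_\ell$ is a cycle of $\Gamma$ and so lies in a single block $\Gamma_{i_\ell}$, and since consecutive cycles share an edge, and an edge lies in a unique block, all the $i_\ell$ agree; because $e\in C_1$, this common block is the $\Gamma_i$ containing $e$, and the whole chain lies in $\gamma\cap\Gamma_i$. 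Hence the partition of $\BE_\gamma$ into blocks of $\gamma$ agrees with its partition into the blocks of the subgraphs $\gamma\cap\Gamma_i$; equivalently, the blocks $\gamma_{ij}$ of the $\gamma\cap\Gamma_i$, ranging over all $i$, are exactly the biconnected components of $\gamma$. They are induced by the previous paragraph, and for fixed $i$ they assemble, glued at cut-vertices, into $\gamma\cap\Gamma_i$ via the block tree of that graph. (A vertex of $\gamma$ that lies in $\Gamma_i$ but meets no edge of $\gamma\cap\Gamma_i$ produces no block and may be regarded as an empty contribution.)

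For the second statement, I would note that $\iota(\gamma)$ — the smallest induced subgraph of $\Gamma$ containing $\gamma$ — has the same vertex set as $\gamma$ and edge set $\{e\in\BE_\Gamma\,:\,\Bdd_\Gamma(e)\subseteq\BV_\gamma\}\supseteq\BE_\gamma$. Biconnectivity is preserved when one adds edges but no vertices: for any vertex $v$, $\iota(\gamma)\smin v$ contains $\gamma\smin v$ as a spanning subgraph, and the latter is connected because $\gamma$ is biconnected, so $\iota(\gamma)\smin v$ is connected; and $\iota(\gamma)$ is itself connected since it contains the connected graph $\gamma$ on the same vertices. Therefore $\iota(\gamma)$ is biconnected. (If $\gamma$ is a single edge then $\iota(\gamma)=\gamma$, there being no parallel edges, and $\gamma$ is biconnected by convention.)

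The only step I expect to require care is the second implication in the edge--block equivalence of the first part: that a biconnected component of $\gamma$ cannot straddle two blocks of $\Gamma$, which amounts to showing that a chain of cycles internal to $\gamma$ never leaves a single block of $\Gamma$. This rests entirely on the two elementary facts that a cycle of a subgraph is a cycle of the ambient graph and that each cycle of $\Gamma$ sits in a unique block; once those are invoked, the remainder is bookkeeping with the block tree. A minor point to clear away is the degenerate possibility that $\gamma$ meets a block $\Gamma_i$ in a single vertex carrying no edge of $\gamma$.
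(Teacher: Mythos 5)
Your proof is correct and, if anything, more careful than the paper's at the one point that actually needs an argument. For the first statement, the paper's proof simply asserts that the biconnected components $\gamma_{ij}$ of $\gamma\cap\Gamma_i$ are induced (via the remark that removing a cut-vertex from an induced subgraph leaves an induced subgraph) and does not really verify that the $\gamma_{ij}$, ranging over all $i$, coincide with the biconnected components of $\gamma$ itself. You supply that verification using the chain-of-cycles characterization of blocks: a cycle of $\gamma$ is a cycle of $\Gamma$ and lies in a single block $\Gamma_i$, and since consecutive cycles in a chain share an edge, an entire chain stays inside one $\Gamma_i$; hence the block partition of $\BE_\gamma$ refines along the $\Gamma_i$. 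You also prove, rather than assume, the elementary fact that blocks of a loopless graph without multiple edges are induced subgraphs. For the second statement, the paper argues the contrapositive (a cut-vertex of $\iota(\gamma)$ would already be a cut-vertex of $\gamma$) via a somewhat awkward case analysis on which extra edges of $\iota(\gamma)$ touch the putative cut-vertex; you give the clean direct version — adding edges on a fixed vertex set cannot disconnect $\gamma\smin v$, so $\iota(\gamma)\smin v$ stays connected — which is logically the same but easier to follow. Your handling of the degenerate cases (single-edge block, isolated vertex of $\gamma\cap\Gamma_i$ at a cut-vertex of $\Gamma$) is appropriate and is not addressed in the paper's proof.
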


\proof The intersection $\gamma \cap \Gamma_i$ is clearly an induced subgraph of $\Gamma_i$.
In fact, each biconnected component $\Gamma_i$ is an induced subgraph of $\Gamma$, and
intersections of induced subgraphs are induced subgraphs. Each $\gamma \cap \Gamma_i$
in turn has a decomposition into biconnected components $\gamma_{ij}$. Each component
$\gamma_{ij}$ is also an induced subgraph. In fact, removing a cut-vertex from an induced subgraph leaves an induced subgraph. 

The second statement follows simply by observing that a cut-vertex for $\iota(\gamma)$ would
necessarily also be a cut-vertex for $\gamma$. In fact, after any additional edge of $\iota(\gamma)$
which is in the open star of the cut-vertex is removed, the further removal of all the other edges in
the open star of the cut-vertex disconnects $\gamma$ so that the vertex is also a cut-vertex for
$\gamma$. Additional edges of $\iota(\gamma)$ not attached to the cut-vertex have
endpoints in the same biconnected component of $\iota(\gamma)$ and 
removing them does not affect the cut vertex, which remains
a cut vertex for $\gamma$, so that, in both cases, $\gamma$ would not be biconnected.
\endproof

We then have the following result. The argument is implicit in Proposition 4.1 of \cite{li2}, but we spell it out here for convenience. 

\begin{Proposition}\label{biconnGset}
For a given graph $\Gamma$, the set
\begin{equation}\label{cGGamma}
\cG_\Gamma =\{ \Delta_\gamma \,|\, \gamma\subseteq \Gamma \ \text{induced, biconnected} \}
\end{equation}
is a $\cG$-building set for the arrangement $\cS_\Gamma$ of \eqref{cSGamma}.
The diagonals associated to the biconnected components of an induced subgraph $\gamma$ are
the $\cG_\Gamma$-factors of $\hat\Delta_\gamma$.
\end{Proposition}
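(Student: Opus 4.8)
The plan is to verify the two conditions in the definition of a $\cG$-building set directly, combining the block decomposition of induced subgraphs with the transversality statements already proved. First, $\cG_\Gamma\subseteq\cS_\Gamma$ is immediate: a biconnected graph is connected, so by Lemma \ref{diagslem1} one has $\Delta_\gamma=\hat\Delta_\gamma$ for every induced biconnected $\gamma$, and a single connected induced subgraph belongs to $\widehat{\BS\BG}(\Gamma)$; hence each such $\Delta_\gamma$ lies in the arrangement \eqref{cSGamma}.

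Next I would identify, for a fixed stratum $S=\hat\Delta_\gamma\in\cS_\Gamma$ with $\gamma=\gamma_1\sqcup\cdots\sqcup\gamma_r$ a union of pairwise disjoint connected induced subgraphs (we may drop isolated vertices, since they do not change $\hat\Delta_\gamma$), the minimal elements of $\{G\in\cG_\Gamma:G\supseteq S\}$. From the description of the diagonals one checks that, for an induced biconnected $\delta$, $\Delta_\delta\supseteq\hat\Delta_\gamma$ holds exactly when $\BV_\delta$ is contained in $\BV_{\gamma_j}$ for some $j$, i.e.\ when $\delta$ is an induced subgraph of some $\gamma_j$; and for induced connected $\delta,\delta'$ one has $\Delta_\delta\supseteq\Delta_{\delta'}$ iff $\delta\subseteq\delta'$. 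So the minimal $\Delta_\delta$ over $S$ correspond to the \emph{maximal} induced biconnected subgraphs of the $\gamma_j$. I claim these are exactly the biconnected components $\gamma_{j,k}$ of the $\gamma_j$: each $\gamma_{j,k}$ is induced (Lemma \ref{indbiconn}) and biconnected, while any induced biconnected subgraph $\delta\subseteq\gamma_j$ lies in a single block of $\gamma_j$ — otherwise $\delta$, being connected and meeting two blocks, would contain a cut-vertex of $\gamma_j$ whose removal disconnects $\delta$ — and, blocks being induced and maximal, $\delta\subseteq\gamma_{j,k}$ for some $k$. This proves the last sentence of the proposition once transversality is in place.

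It remains to show that the $\Delta_{\gamma_{j,k}}$ intersect transversely with $\bigcap_{j,k}\Delta_{\gamma_{j,k}}=S$. Since each $\gamma_{j,k}$ is connected, $\Delta_{\gamma_{j,k}}=\hat\Delta_{\gamma_{j,k}}$ identifies the coordinates indexed by $\BV_{\gamma_{j,k}}$; because the block-cut tree of $\gamma_j$ is connected, intersecting over $k$ identifies all of $\BV_{\gamma_j}$, so $\bigcap_k\Delta_{\gamma_{j,k}}=\Delta_{\gamma_j}$, and since the $\gamma_j$ are disjoint, $\bigcap_{j,k}\Delta_{\gamma_{j,k}}=\bigcap_j\Delta_{\gamma_j}=\hat\Delta_\gamma=S$. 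For transversality I would build the intersection up one block at a time: order the blocks of $\gamma_j$ along its block-cut tree so that $\delta_i:=\gamma_{j,1}\cup\cdots\cup\gamma_{j,i}$ is connected and, being a union of blocks spanning a subtree inside the induced subgraph $\gamma_j$, is again an induced subgraph of $\Gamma$, with $\gamma_{j,i+1}\cap\delta_i$ equal to a single cut-vertex; then $b_0(\delta_{i+1})=1=1+1-1=b_0(\gamma_{j,i+1})+b_0(\delta_i)-b_0(\gamma_{j,i+1}\cap\delta_i)$, so \eqref{b0inclexl} holds and Lemma \ref{lemsubgr} gives that $\hat\Delta_{\delta_i}$ meets $\hat\Delta_{\gamma_{j,i+1}}$ transversely along $\hat\Delta_{\delta_{i+1}}$. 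Concatenating these transverse intersections over all blocks and all components (the disjoint $\gamma_j$ contributing transversely by the disjoint-union case of Lemma \ref{disjunLem}) yields transversality of the whole family. Equivalently — since near a point of $\hat\Delta_\gamma$ all these diagonals are linear in suitable product coordinates — transversality reduces to the codimension identity $\codim\hat\Delta_\gamma=\sum_{j,k}\codim\Delta_{\gamma_{j,k}}$, which by \eqref{dimDeltas} is the block-cut tree count $|\BV_{\gamma_j}|-1=\sum_k(|\BV_{\gamma_{j,k}}|-1)$ summed over $j$.

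I expect the main obstacle to be this last step: keeping the bookkeeping of the iterated intersection along the block-cut tree honest — verifying at each stage that the partial unions $\delta_i$ remain \emph{induced} subgraphs (so that Lemma \ref{lemsubgr} is applicable) and that $\gamma_{j,i+1}$ meets $\delta_i$ in precisely one vertex — or, on the dimension-count route, pinning down the block-cut tree identity. The structural input in Step 2, that a biconnected subgraph sits inside a single block and that blocks of an induced subgraph are themselves induced, is standard but also needs to be invoked (via Lemma \ref{indbiconn} and maximality of blocks).
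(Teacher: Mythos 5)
Your proof is correct and follows essentially the paper's argument: identify the $\cG_\Gamma$-factors of $\hat\Delta_\gamma$ with the biconnected blocks of its connected components, and establish transversality by building up the intersection first over the disjoint components and then one block at a time along each block-cut tree. Where the paper simply cites the single-vertex case of Lemma \ref{disjunLem}, you route through Lemma \ref{lemsubgr} by checking \eqref{b0inclexl}, which is equivalent, and you make explicit the step the paper leaves implicit — that the partial unions $\delta_i$ stay induced and meet each new block in exactly one cut-vertex — which is exactly the bookkeeping needed for the iteration to be valid.
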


\proof Let $\gamma$ be a union of disjoint induced subgraphs. For each connected
component $\gamma_i$ of $\gamma$ consider the decomposition into its biconnected components $\gamma_{ij}$. These are induced subgraphs, whose diagonals $\Delta_{\gamma_{ij}}=\hat\Delta_{\gamma_{ij}}$ are the minimal elements in the collection $\cG_\Gamma$ containing
the element $\hat\Delta_\gamma$ of $\cS_\Gamma$. We know by the first statement of
Lemma \ref{disjunLem} that $\hat\Delta_\gamma=\cap_i \hat\Delta_{\gamma_i}$ is a 
transverse intersection. Each $\hat\Delta_{\gamma_i}=\cap_j \hat\Delta_{\gamma_{ij}}$ 
is in turn a transverse intersection by the second statement of Lemma \ref{disjunLem}.
\endproof

\begin{rem}\label{ktbuild}
By the second observation in Lemma \ref{indbiconn}, for the elements of the building set $\cG_\Gamma$ we can equivalently drop the requirement that the subgraphs are induced 
and use all biconnected graphs. That gives back the building set used in \cite{li2}, as in 
\cite{kt}.
\end{rem}

We then check that the configuration space \eqref{configLi} associated to this
$\cG$-set is the same as the configuration space $Conf_\Gamma(X)$ of Definition
\ref{CGammaDef}. 

\begin{Lemma}\label{configs}
For a graph $\Gamma$ and a smooth quasi-projective variety $X$, the configuration
space $Conf_\Gamma(X)$ of Definition \ref{CGammaDef} is 
\begin{equation}\label{configs2}
Conf_\Gamma(X) = X^{\BV_\Gamma} \smallsetminus \cup_{\gamma\in \cG_\Gamma}
\Delta_\gamma .
\end{equation}
\end{Lemma}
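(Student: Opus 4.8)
The plan is to establish the set equality \eqref{configs2} by a mutual-inclusion argument, reducing everything to the defining description \eqref{eqn_open} of $Conf_\Gamma(X)$ as the complement of the edge diagonals $\Delta_e$. The key point is to match the union $\cup_{e\in\BE_\Gamma}\Delta_e$ with the union $\cup_{\gamma\in\cG_\Gamma}\Delta_\gamma$ over induced biconnected subgraphs. First I would observe that a single edge $e$ together with its two endpoints is itself an induced biconnected subgraph (the graph consisting of a single edge is biconnected by the convention recalled just before Lemma \ref{indbiconn}, and it is automatically induced since there is no other edge it could be missing), so each $\Delta_e$ is one of the $\Delta_\gamma$ with $\gamma\in\cG_\Gamma$. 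This already gives $\cup_{e}\Delta_e\subseteq \cup_{\gamma\in\cG_\Gamma}\Delta_\gamma$, hence $X^{\BV_\Gamma}\smallsetminus\cup_{\gamma\in\cG_\Gamma}\Delta_\gamma \subseteq Conf_\Gamma(X)$.

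For the reverse inclusion I would show $\cup_{\gamma\in\cG_\Gamma}\Delta_\gamma\subseteq\cup_{e\in\BE_\Gamma}\Delta_e$. Let $\gamma\subseteq\Gamma$ be an induced biconnected subgraph. Since $\gamma$ is biconnected (and has at least one edge — a graph with a single vertex and no edge is not counted, or if it is, its diagonal is all of $X^{\BV_\Gamma}$, which I should address as a degenerate case, but the convention that looping and multiple edges are excluded together with biconnectivity forces $|\BV_\gamma|\ge 2$ and $\BE_\gamma\ne\emptyset$), pick any edge $e\in\BE_\gamma$. By Lemma \ref{lemsubgr}, the inclusion of induced subgraphs $\{e\text{ and its endpoints}\}\subseteq\gamma$ gives $\Delta_\gamma\subseteq\Delta_e$. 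Taking the union over all $\gamma\in\cG_\Gamma$ yields $\cup_{\gamma\in\cG_\Gamma}\Delta_\gamma\subseteq\cup_{e\in\BE_\Gamma}\Delta_e$, and therefore $Conf_\Gamma(X)=X^{\BV_\Gamma}\smallsetminus\cup_e\Delta_e\subseteq X^{\BV_\Gamma}\smallsetminus\cup_{\gamma\in\cG_\Gamma}\Delta_\gamma$. Combining the two inclusions gives \eqref{configs2}.

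I expect the only real subtlety — the ``main obstacle'' — to be bookkeeping about which subgraphs are allowed in $\cG_\Gamma$: one must be careful that the single-edge subgraphs really do qualify as induced biconnected subgraphs (this rests squarely on the stated convention that a single edge is biconnected, and on the no-multiple-edges assumption so that ``induced on two adjacent vertices'' is unambiguous), and that no element of $\cG_\Gamma$ is so degenerate that its diagonal is all of $X^{\BV_\Gamma}$ and thus escapes the complement description. Once the single-edge subgraphs are identified as a cofinal-at-the-bottom family inside $\cG_\Gamma$ — in the sense that every $\Delta_\gamma$ for $\gamma\in\cG_\Gamma$ contains some $\Delta_e$ and every $\Delta_e$ is itself some $\Delta_\gamma$ — the two unions have the same total support and the lemma follows. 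No transversality or dimension count is needed here; Lemma \ref{lemsubgr}'s first statement (monotonicity $\gamma_1\subseteq\gamma_2\Rightarrow\Delta_{\gamma_2}\subseteq\Delta_{\gamma_1}$) does all the geometric work.
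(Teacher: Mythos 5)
Your proof is correct and follows essentially the same two-inclusion argument as the paper: single-edge subgraphs lie in $\cG_\Gamma$ (giving one inclusion), and any $\Delta_\gamma$ with $\gamma$ biconnected is contained in $\Delta_e$ for any edge $e$ of $\gamma$ (giving the other). The paper derives the containment $\Delta_\gamma\subseteq\Delta_e$ directly from the defining equations of $\hat\Delta_\gamma$ rather than by citing Lemma \ref{lemsubgr}, but that is only a cosmetic difference.
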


\proof The subgraphs of $\Gamma$ consisting of a single edge are induced biconnected
subgraphs, so that the inclusion $\cup_{e\in \BE_\Gamma} \Delta_e \subset
\cup_{\gamma\in \cG_\Gamma} \Delta_\gamma$ holds. Conversely, given
an induced biconnected subgraph $\gamma$ of $\Gamma$, $\Delta_\gamma=\hat\Delta_\gamma$
is the set of $(x_v)$ with $x_v=x_{v'}$ for $\{v,v'\}=\Bdd(e)$ for $e\in \BE_\gamma$. Thus,
$\Delta_\gamma \subseteq \Delta_e$ for $e\in \BE_\gamma$. Thus, each
$\Delta_\gamma \subseteq \cup_{e\in \BE_\Gamma} \Delta_e$ and the reverse inclusion
also holds.
\endproof

\subsection{The iterated blowup description}\label{blowupSec}

Then the result of Theorem 1.2 of Li \cite{li2} shows that there is a smooth wonderful compactification 
$\overline{Conf}_\Gamma(X)=X^{\BV_\Gamma}_{\cG_\Gamma}$
of the configuration space $Conf_\Gamma(X)$. This is obtained as the closure of
the image of $Conf_\Gamma(X)$ under the inclusion
\begin{equation}\label{barconfincl}
Conf_\Gamma(X) \hookrightarrow \prod_{\Delta_\gamma \in \cG_\Gamma} Bl_{\Delta_\gamma} X^{\BV_\Gamma}.
\end{equation}
Theorem 1.3 of \cite{li2} shows that this wonderful compactification also has a description as
an iterated sequence of blowups. We recall here briefly how that is obtained, as we will need
it later. 

Recall first that, for a blowup $\pi: Bl_Z(Y) \to Y$ of a smooth subvariety in a smooth variety,
the {\em dominant transform} of an irreducible subvariety $V$ of $Y$ is the proper transform
if $V$ is not contained in $Z$ and the (scheme-theoretic) inverse image $\pi^{-1}(V)$ if it is (see Definition 2.7 of \cite{li2}).

Enumerate the set $\cG_\Gamma=\{ \gamma_1,\ldots, \gamma_N \}$ in such a way
that, whenever there is an inclusion $\gamma_i \supseteq \gamma_j$, the corresponding indices are ordered with $i\leq j$. Then, for $k=0,\ldots, N$, let $Y^{(0)}=X^{\BV_\Gamma}$ and let $Y^{(k)}$ be the blowup of $Y^{(k-1)}$ along the (iterated) 
dominant transform of $\Delta_{\gamma_k}$. Theorem 1.3 and Proposition 2.13 of \cite{li2}
show that the variety $Y^{(N)}$ obtained through this sequence of iterated blowups is isomorphic to the wonderful compactification $X^{\BV_\Gamma}_{\cG_\Gamma}$,
\begin{equation}\label{blowConf}
Y^{(N)}= \overline{Conf}_\Gamma(X).
\end{equation}

\begin{rem}\label{ktConf}
Proposition \ref{biconnGset} and Lemma \ref{configs} above, together with Proposition 4.1 of \cite{li2}, show that the configurations spaces of graphs and their compactifications we are considering here are combinatorially the same as the Kuperberg--Thurston compactifications of \cite{kt}.
\end{rem}

The result of \cite{li2} also shows to what extent the result of the iterated sequence of
blowups is dependent on the order in which the blowups are performed. 
In particular, this means that, in our case, we can also describe the sequence of blowups
in the following way. For $k=1,\ldots, n=|\BV_\Gamma|$, let $\cG_{k,\Gamma}\subseteq \cG_\Gamma$ be the subcollection $\cG_{k,\Gamma}=\cG_\Gamma \cap \BS\BG_k(\Gamma)$,
where $\BS\BG_k(\Gamma)$, as in \eqref{BSGk}, is the set of connected induced subgraphs 
with $k$ vertices. 

\begin{Proposition}\label{blowupYn}
Let $Y_0=Y^{(0)}=X^{\BV_\Gamma}$. Inductively, let $Y_k$ denote
the blowup of $Y_{k-1}$ along the dominant transform of 
$\bigcup_{\Gg \in \cG_{n-k+1,\Gamma}} \GD_\Gg$. Then $Y_{n-1}$ is
the wonderful compactification
\begin{equation}\label{compConfGX}
\konj{Conf}_\GG(X) = Y_{n-1}.
\end{equation}
\end{Proposition}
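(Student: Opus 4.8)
The plan is to deduce this directly from the already-established iterated-blowup description \eqref{blowConf} together with the commutativity results of \cite{li2} governing the order in which blowups along a building set may be performed. The key point is that the collection $\cG_\Gamma$ is graded by the number of vertices via the subcollections $\cG_{k,\Gamma}=\cG_\Gamma\cap\BS\BG_k(\Gamma)$, and that $\Delta_{\gamma}\supseteq\Delta_{\gamma'}$ for induced connected subgraphs forces $|\BV_\gamma|\le|\BV_{\gamma'}|$ (a larger subgraph gives a smaller diagonal, by the first statement of Lemma \ref{lemsubgr}). Hence an admissible ordering of $\cG_\Gamma$ in the sense of \Secref{blowupSec} — one in which $\gamma_i\supseteq\gamma_j$ implies $i\le j$ — can always be chosen so that all elements of $\cG_{n,\Gamma}=\{\Gamma\}$ come first, then all of $\cG_{n-1,\Gamma}$, and so on down to $\cG_{1,\Gamma}$, which in fact consists only of the single-edge diagonals and occurs last (or, when those are all disjoint, can equally be omitted since their dominant transforms are then already separated). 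So the iterated blowup \eqref{blowConf} can be grouped into $n-1$ stages, the $k$-th stage blowing up the (iterated) dominant transforms of all $\Delta_\gamma$ with $\gamma\in\cG_{n-k+1,\Gamma}$.

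First I would verify that within a single stage — i.e.\ among the $\Delta_\gamma$ for $\gamma\in\cG_{j,\Gamma}$ with $j$ fixed — the blowups commute, so that blowing up along the union $\bigcup_{\gamma\in\cG_{j,\Gamma}}\Delta_\gamma$ in one step yields the same variety as blowing them up successively in any admissible order. This is exactly the content of the order-independence statement of \cite{li2} (Theorem 1.3 and the surrounding discussion there): two members of a building set whose indices are incomparable in the partial order, and whose diagonals meet cleanly with transverse $\cG$-factors, may be swapped. For $\gamma,\gamma'\in\cG_{j,\Gamma}$ with neither contained in the other, Proposition \ref{biconnGset} and Lemma \ref{disjunLem} (or the transversality clauses of Lemma \ref{lemsubgr}) guarantee precisely the clean/transverse behaviour required, so the swap is legitimate. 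Consequently the composite of all blowups in stage $k$ is independent of internal order and equals the single blowup of $Y_{k-1}$ along the dominant transform of $\bigcup_{\gamma\in\cG_{n-k+1,\Gamma}}\Delta_\gamma$, which is the definition of $Y_k$ in the statement.

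Then I would assemble the stages: after performing stages $1,\dots,n-1$ in this grouped fashion, one has performed blowups along the dominant transforms of every element of $\cG_\Gamma=\bigcup_{k=1}^{n}\cG_{k,\Gamma}$ (noting $\cG_{1,\Gamma}$ contributes nothing new beyond the single-edge diagonals already accounted for, as in the proof of Lemma \ref{configs}), in an order refining an admissible ordering. By \eqref{blowConf} the result is $\overline{Conf}_\Gamma(X)$, giving $Y_{n-1}=\overline{Conf}_\Gamma(X)$ as claimed.

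The main obstacle is the bookkeeping of dominant transforms across the grouping: when one passes from the fine ordering of \eqref{blowConf} to the coarse $n-1$-stage ordering, one must check that the ``iterated dominant transform of $\Delta_\gamma$'' appearing at stage $k$ in the coarse description coincides with the iterated dominant transform that $\Delta_\gamma$ acquires in the fine description at the moment its blowup is performed. This is where the clean-intersection and transverse-$\cG$-factor hypotheses are doing real work — they are exactly what makes dominant transform commute with the intervening blowups — and one must invoke the relevant lemma of \cite{li2} (Proposition 2.13 and Lemma 2.8 there) rather than re-prove it. Once that compatibility is granted, the rest is formal.
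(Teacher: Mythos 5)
Your proposal is correct and follows essentially the same route as the paper: order $\cG_\Gamma$ by decreasing vertex count so that $\gamma_i\supseteq\gamma_j$ forces $i\le j$, group the resulting sequence of blowups from \eqref{blowConf} into stages indexed by $|\BV_\gamma|$, and invoke the order-independence results of \cite{li2} to see that the internal order within each stage is immaterial, so that each stage can be realized as a single blowup along the dominant transform of $\bigcup_{\gamma\in\cG_{n-k+1,\Gamma}}\Delta_\gamma$. One small indexing slip: the final stage is $\cG_{2,\Gamma}$ (the single-edge diagonals, each a biconnected subgraph on two vertices), not $\cG_{1,\Gamma}$, which is empty; and of course those blowups cannot be omitted even when the single-edge diagonals happen to be disjoint — they are genuine members of the building set and must be blown up to reach $\overline{Conf}_\Gamma(X)$.
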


\proof This is a special case of the procedure of Theorem 1.3 of \cite{li2}
described above, where we label the elements of $\cG_\Gamma$,  by listing
the subgraphs in $\cG_{n-k+1,\Gamma}$, for $k=1,\ldots, n-1$, by increasing $k$,
with any arbitrary choice of ordering within each of these sets. The last set, for
$k-1$, corresponds to the set $\cG_{2,\Gamma}$ of subgraphs 
consisting of a single edge. 
We have $Y_k=Y_{k-1}$ if there are no biconnected induced subgraphs with
exactly $n-k+1$ vertices. So, if
$\Gamma$ is itself biconnected, then $Y_1$ is the blowup
of $Y_0$ along the deepest diagonal $\GD_\GG$, which parameterizes
the points where the whole $\GG$ is collapsed, and otherwise $Y_1=Y_0$.
In the resulting sequence of blowups
\begin{equation}\label{Yks}
Y_{n-1} \to \cdots \to Y_2 \to Y_1 \to X^{\BV_\GG}
\end{equation}
the order in which the blowups are performed along the (iterated) dominant
transforms of the diagonals $\Delta_\gamma$, for $\gamma \in \cG_{n-k+1,\Gamma}$,
for a fixed $k$, does not matter, for the general reason described in \S 3 of \cite{li2}.
Thus, the intermediate varieties $Y_k$ in the sequence \eqref{Yks} are all well defined. 
\endproof

\begin{rem}\label{notationYk}
The notational difference above between the $Y^{(j)}$ and the $Y_k$
reflects the fact that each $Y_k$ corresponds to several blowups $Y^{(j)}$,
one for each diagonal $\Delta_{\gamma_j}$ with $\gamma_j \in \cG_{n-k+1,\Gamma}$. 
\end{rem}

\begin{rem}\label{fmrem}
The Fulton--MacPherson compactifications \cite{fm} are obtained as the
wonderful compactification $\overline{Conf}_{\Gamma_n}(X)$, for
$\Gamma_n$ the {\em complete graph} on $n$ vertices, where each 
pair of distinct vertices is connected by an edge.
In this case one needs to blow up all the possible diagonals. 
\end{rem}

\begin{rem}\label{ulyrem}
Notice that, in the case of the complete graph $\GG_n$ on $n$ vertices,
besides the usual Fulton--MacPherson compactification, one can also
consider a different sequence of blowups, where one obtains a more
manifestly symmetric construction with actions of the symmetric group
at each stage. These ``polydiagonal compactifications" were
introduced in \cite{Uly}. The difference is that the blowup loci are
in this case not just diagonals but also their intersections. This introduces
a number of additional blowups in the construction and the resulting
spaces map project down onto the Fulton--MacPherson ones.  These
compactifications are also special cases of the general construction
of Li \cite{li2} for a different choice of $\cG$-building set. One can
consider analogs of the compactifications of \cite{Uly} also in the
case of other graphs $\GG$. The difference with respect to the
case we are considering corresponds to the difference
between the minimal and maximal wonderful compactifications in the sense
of \cite{DecoPro}. This has been recently discussed in \cite{BerBruKr}.
\end{rem}

\subsection{Stratification}\label{strataSec}

Theorem 1.2 of \cite{li2} applied to our case also gives an explicit stratification of
$\overline{Conf}_\Gamma(X)$ in terms of divisors. This will also be useful in the
following and we recall it briefly. 

As above, we consider an arrangement $\cS$ of subvarieties and a $\cG$-building set.
Given a {\em flag} $\cF =\{ S_1, \ldots, S_r \}$ of elements in $\cS$, with $S_1\subseteq S_2 \subseteq \cdots \subseteq S_r$, one defines the associated $\cG$-{\em nest}, as in \cite{mpro}, \cite{li2}, as the collection 
\begin{equation}\label{GTnest}
\cG_\cF =\cup_{i=1}^r \{ R_{ij} \, |\, \cG\text{-factors of } S_i \},
\end{equation}
where, as above, the $\cG$-factors of an element $S\in \cS$ are the minimal elements in the collection $\{ R \in \cG \,|\, R \supseteq S \}$.

We consider the arrangement $\cS_\Gamma$ of \eqref{cSGamma}
and the building set $\cG_\Gamma$ of \eqref{cGGamma} associated to a graph $\Gamma$
and a smooth quasi-projective variety $X$. The $\cG_\Gamma$-nests are then described
easily (see \cite{li2}, \S 4.3) using the following simple observation.

\begin{Lemma}\label{biconnunion}
Let $\gamma_1$ and $\gamma_2$ be biconnected subgraphs of $\Gamma$.
If the intersection $\gamma_1 \cap \gamma_2$ contains at least two distinct 
vertices, then the union $\gamma = \gamma_1 \cup \gamma_2$ is biconnected.
\end{Lemma}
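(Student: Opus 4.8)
The plan is to argue by contradiction, using the characterization of biconnectedness in terms of cut-vertices: a connected graph on at least two vertices is biconnected precisely when it has no cut-vertex. First I would note that $\gamma = \gamma_1 \cup \gamma_2$ is connected, since $\gamma_1$ and $\gamma_2$ are each connected and share at least one vertex (indeed at least two). So it suffices to show $\gamma$ has no cut-vertex. Suppose, for contradiction, that $v$ is a cut-vertex of $\gamma$, so that removing $v$ (together with its open star of edges) disconnects $\gamma$ into components; let $A$ and $B$ be the vertex sets of two distinct components of $\gamma \smallsetminus \{v\}$, lying in different connected pieces.

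The key step is a case analysis on where the vertices of $\gamma_1$ and $\gamma_2$ land relative to $v$, $A$, $B$. Since $\gamma_1$ is biconnected, $\gamma_1 \smallsetminus \{v\}$ is still connected, hence $\BV_{\gamma_1} \smallsetminus \{v\}$ lies entirely within a single component of $\gamma \smallsetminus \{v\}$; the same holds for $\gamma_2$. If both $\gamma_1$ and $\gamma_2$ have all their non-$v$ vertices in the same component of $\gamma \smallsetminus \{v\}$, then $\gamma \smallsetminus \{v\}$ is connected, contradicting that $v$ is a cut-vertex. So $\gamma_1 \smallsetminus \{v\}$ and $\gamma_2 \smallsetminus \{v\}$ lie in different components, say $\BV_{\gamma_1} \smallsetminus \{v\} \subseteq A$ and $\BV_{\gamma_2} \smallsetminus \{v\} \subseteq B$. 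But then every vertex of $\gamma_1 \cap \gamma_2$ other than $v$ would have to lie in $A \cap B = \emptyset$, so $\gamma_1 \cap \gamma_2 \subseteq \{v\}$, contradicting the hypothesis that $\gamma_1 \cap \gamma_2$ contains at least two distinct vertices. This contradiction shows $\gamma$ has no cut-vertex, hence is biconnected.

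The one subtlety I expect to be the main obstacle — and the place to be slightly careful with the paper's conventions — is the handling of the edges of $\gamma$ incident to $v$ and the definition of ``removal of a vertex'' (which, as the excerpt emphasizes just before Lemma \ref{indbiconn}, means deleting $v$ together with the open star of edges around it). One must make sure that an edge of $\gamma_1$ or $\gamma_2$ not incident to $v$ survives in $\gamma \smallsetminus \{v\}$, which is clear, and that the connectivity of $\gamma_i \smallsetminus \{v\}$ really does force all of $\BV_{\gamma_i}\smallsetminus\{v\}$ into one component of $\gamma\smallsetminus\{v\}$ — this uses that $\gamma_i \smallsetminus \{v\}$ is a connected subgraph of $\gamma \smallsetminus \{v\}$, together with the degenerate-case convention that a single edge is biconnected (so the argument is still valid when one of the $\gamma_i \smallsetminus \{v\}$ is a single vertex or edge). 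Everything else is routine graph combinatorics.
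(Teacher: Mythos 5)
Your proof is correct and follows essentially the same strategy as the paper's: argue by contradiction via a cut-vertex $v$, use biconnectedness of the $\gamma_i$ to conclude each $\gamma_i \smallsetminus \{v\}$ is connected, and then derive a contradiction from the hypothesis that $\gamma_1 \cap \gamma_2$ has at least two vertices. The only (minor) difference is that the paper splits into two cases depending on whether $v$ lies in $\gamma_1 \cap \gamma_2$, whereas you handle both uniformly by observing that $\gamma_i \smallsetminus \{v\}$ is connected regardless of whether $v \in \gamma_i$; this is a slight streamlining but not a genuinely different route.
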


\proof If $\gamma$ were not biconnected, then there would be a vertex $v$ in
$\BV_\gamma$ such that $\gamma \smallsetminus \{ v \}$ has more than one
connected component. If the vertex $v$ belongs to either $\gamma_1$ or
$\gamma_2$, but not to the intersection, then the removal of $v$ would
also disconnect either $\gamma_1$ or $\gamma_2$, contrary to the hypothesis
that they are biconnected. Suppose that the vertex $v$ belongs to the intersection 
$\gamma_1\cap \gamma_2$. The two graphs $\gamma_i \smallsetminus \{ v\}$
are both connected since both $\gamma_i$ are biconnected. The graph
$\gamma \smallsetminus \{ v\} = (\gamma_1 \smallsetminus \{ v\}) \cup
(\gamma_2 \smallsetminus \{v \})$ can then be disconnected only if
$(\gamma_1 \smallsetminus \{ v\}) \cap (\gamma_2 \smallsetminus \{v \})=\emptyset$.
\endproof 

This, together with Proposition \ref{biconnGset} gives the characterization of the
$\cG_\Gamma$-nests.

\begin{Definition}\label{rootedtrees}
A forest $\cT$  of nested subgraphs of a given graph $\Gamma$
is a finite collection of rooted trees, where each component
is a finite tree with vertices labelled by connected induced subgraphs $\gamma_i$ 
of $\Gamma$, with the
property that there is an edge (oriented away from the root vertex) from a vertex $\gamma_i$ to a vertex $\gamma_j$ whenever $\gamma_i \supseteq \gamma_j$. We also require that 
graphs $\gamma$ and $\gamma'$ associated to vertices that lie on different branches of
a tree or on different trees of the forest have $\gamma\cap \gamma'=\emptyset$.
\end{Definition}

Recall that $\cS_\Gamma$ is the simple arrangement of all (poly)diagonals $\hat\Delta_\gamma$,
with $\gamma$ in $\widehat{\BS\BG}(\Gamma)$ and that $\cG_\Gamma$ is the corresponding
building set given by the diagonals $\Delta_\gamma$ with $\gamma$ induced biconnected
subgraphs of $\Gamma$.

The flags in $\cS_\Gamma$ and the associated $\cG_\Gamma$-nests are then described
as follows (see \cite{li2}, \S 4.3).

\begin{Proposition}\label{cGnests}
Flags in $\cS_\Gamma$ are in bijective correspondence with forests of 
nested subgraphs. The $\cG_\Gamma$-nests are in bijective correspondence with
the sets of biconnected induced subgraphs with the property that any two subgraphs
$\gamma$ and $\gamma'$ in the set satisfy one of the following:
\begin{enumerate}
\item $\gamma \cap \gamma' = \emptyset$;
\item $\gamma \cap \gamma' =\{ v \}$, a single vertex;
\item $\gamma \subseteq \gamma'$ or $\gamma' \subseteq \gamma$.
\end{enumerate}
\end{Proposition}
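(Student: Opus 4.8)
The plan is to deduce both statements from Li's general description of flags and $\cG$-nests for a simple arrangement with a building set (\cite{li2}, \S 4.3, following \cite{DecoPro} and \cite{mpro}), once the geometry of the $\hat\Delta_\gamma$ has been translated into graph combinatorics. The dictionary I would set up first is this: by Lemma~\ref{diagslem1} and Lemma~\ref{igamma} every element of $\cS_\Gamma$ equals $\hat\Delta_\gamma$ for $\gamma$ a disjoint union of connected induced subgraphs $\gamma^{(1)},\dots,\gamma^{(m)}$, and $\hat\Delta_\gamma\cong X^{\BV_{\Gamma//\gamma}}$ is the locus where $x_v=x_{v'}$ for $v,v'$ lying in a common $\gamma^{(i)}$; so $\hat\Delta_\gamma$ is faithfully recorded by the partition $P_\gamma$ of $\BV_\Gamma$ whose nontrivial blocks are the $\BV_{\gamma^{(i)}}$. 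From this description, $\hat\Delta_{\gamma'}\subseteq\hat\Delta_\gamma$ holds exactly when every connected component of $\gamma$ is contained, as an induced subgraph, in a connected component of $\gamma'$. Finally, Proposition~\ref{biconnGset} identifies the $\cG_\Gamma$-factors of $\hat\Delta_\gamma$ with the diagonals $\Delta_\delta$ of the biconnected components $\delta$ of the $\gamma^{(i)}$, and Lemma~\ref{indbiconn} tells us those $\delta$ are biconnected induced subgraphs meeting one another, inside a fixed $\gamma^{(i)}$, only at cut-vertices.

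For the first statement, under the dictionary a flag $\cF=\{\hat\Delta_{\gamma_1}\subsetneq\cdots\subsetneq\hat\Delta_{\gamma_r}\}$ becomes a strictly increasing chain of set partitions $P_{\gamma_1},\dots,P_{\gamma_r}$ of $\BV_\Gamma$, each nontrivial block being the vertex set of a connected induced subgraph. I would collect all such blocks, regard them (passing to $\iota(\cdot)$ when a block is not induced) as connected induced subgraphs, and order them by inclusion, obtaining a set $\cN(\cF)$. Because the $P_{\gamma_k}$ refine one another, any two blocks that occur have vertex sets that are nested or disjoint, so any two members of $\cN(\cF)$ are comparable or vertex-disjoint; in particular the elements of $\cN(\cF)$ strictly above a fixed one form a chain, so the Hasse diagram of $\cN(\cF)$ rooted at its maximal elements is a forest, and subgraphs on different branches or trees, being incomparable, are disjoint -- i.e.\ $\cN(\cF)$ is a forest of nested subgraphs in the sense of Definition~\ref{rootedtrees}. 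Conversely one recovers a flag from a forest of nested subgraphs, reading off the chain of partitions indexed by depth, and I would check the two assignments are inverse to one another; this is the specialization to $\cS_\Gamma$ of the general flag--forest correspondence of \cite{li2}, \S 4.3.

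For the second statement, I would compute the $\cG_\Gamma$-nest $\cG_\cF$ of a flag $\cF$ from \eqref{GTnest} together with Proposition~\ref{biconnGset}: $\cG_\cF$ is the set of all $\Delta_\delta$ with $\delta$ a biconnected component of some $\gamma_k$. If $\delta,\delta'$ are biconnected components of the \emph{same} $\gamma_k$, then by Lemma~\ref{indbiconn} they meet in at most one vertex, giving cases (1) and (2). If they arise in $\gamma_k$ and $\gamma_{k'}$ with $k<k'$, then $\delta'$ sits in a connected component of $\gamma_{k'}$, which sits in a connected component $K$ of $\gamma_k$, and $K$ decomposes into biconnected pieces, one of which -- call it $\delta_0$ -- either contains $\delta$ or is disjoint from it; if $\delta$ and $\delta'$ share at least two vertices then Lemma~\ref{biconnunion} makes $\delta\cup\delta'$ biconnected and contained in $\delta_0$, and unwinding the inclusions forces $\delta\subseteq\delta'$ or $\delta'\subseteq\delta$ (case (3)), while otherwise one is back in case (1) or (2). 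Conversely, given a set $\cN$ of biconnected induced subgraphs in which any two satisfy (1), (2) or (3), I would organize $\cN$ into a forest -- the members of $\cN$ above a fixed $\delta$ are pairwise non-disjoint and, by Lemma~\ref{biconnunion}, any two of them meeting in at least two vertices are comparable, so they form a chain, while those meeting $\delta$ in a single vertex become siblings glued at a cut-vertex -- then amalgamate biconnected pieces along this forest into connected induced subgraphs (replacing a union $\gamma_1\cup\gamma_2$ by $\iota(\gamma_1\cup\gamma_2)$ when it is not induced) to form an increasing family $\gamma_1\subseteq\gamma_2\subseteq\cdots$ and hence a flag $\cF$, and finally verify through \eqref{GTnest} and Proposition~\ref{biconnGset} that $\cG_\cF=\{\Delta_\delta\mid\delta\in\cN\}$ with no extraneous factors. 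All the intersections that occur along the way are transverse and clean by Lemmas~\ref{disjunLem} and~\ref{lemsubgr}, so these are honest $\cG_\Gamma$-nests.

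The main obstacle I expect is the converse in the second statement (and, with it, the bijectivity in the first): realizing an arbitrary trichotomy-compatible family of biconnected induced subgraphs as the nest of an actual flag, and confirming that the nest formula \eqref{GTnest} returns precisely that family. The delicate point is that case (2) -- two incomparable biconnected subgraphs meeting in one vertex -- keeps $\cN$ from being literally a poset-forest, so the amalgamation producing the intermediate subgraphs $\gamma_k$ must be carried out on the biconnected-component decompositions of Lemma~\ref{indbiconn}: one has to check that gluing biconnected pieces at cut-vertices yields a connected subgraph whose biconnected components are exactly those pieces (so that no new $\cG_\Gamma$-factors are created), and that passing to $\iota$ of a union, when that union fails to be induced, does not change this -- both controlled by Lemma~\ref{indbiconn} and Lemma~\ref{biconnunion}. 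Once this reorganization is in place, the rest is the transversality and clean-intersection input already available together with Li's general theorem.
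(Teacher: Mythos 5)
Your forward arguments match the paper's: both deduce the flag--forest correspondence from the dictionary between polydiagonals and disjoint unions of connected induced subgraphs, and both identify $\cG_\Gamma$-factors with biconnected components via Proposition~\ref{biconnGset} and Lemma~\ref{indbiconn}, concluding that every $\cG_\Gamma$-nest satisfies the trichotomy. (In fact the paper's proof stops there; it does not argue the converse for nests at all.)

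The gap you flag in the converse is real, but the step you hope to control with Lemmas~\ref{indbiconn} and~\ref{biconnunion} --- that passing to $\iota$ of a union of biconnected pieces glued at cut-vertices does not change the biconnected decomposition --- is false, and neither lemma addresses it. Take $\Gamma$ with vertices $\{1,2,3,4\}$ and edges $12,13,23,14,34$ (the complete graph on four vertices with the edge $24$ removed). Let $\gamma_1$ be the induced triangle on $\{1,2,3\}$ and $\gamma_2$ the induced edge on $\{3,4\}$. Both are biconnected induced subgraphs, and $\gamma_1\cap\gamma_2=\{3\}$, so $\{\gamma_1,\gamma_2\}$ satisfies trichotomy case (2). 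Yet $\iota(\gamma_1\cup\gamma_2)=\Gamma$ is itself biconnected --- the extra edge $14$ closes a cycle around the would-be cut-vertex $3$ --- so $\hat\Delta_{\gamma_1\cup\gamma_2}=\Delta_\Gamma$ has the single $\cG_\Gamma$-factor $\Delta_\Gamma$, not $\{\Delta_{\gamma_1},\Delta_{\gamma_2}\}$. One then checks directly from \eqref{GTnest} that no flag in $\cS_\Gamma$ has $\{\gamma_1,\gamma_2\}$ as its $\cG_\Gamma$-nest: any $S$ with $\Delta_{\gamma_1}$ among its factors must have $\{1,2,3\}$ as an entire connected component of the underlying graph, which prevents $\gamma_2$ from appearing anywhere compatible in the flag. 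Equivalently, after blowing up $\Delta_\Gamma$ the dominant transforms of $\Delta_{\gamma_1}$ and $\Delta_{\gamma_2}$ separate, so $E_{\gamma_1}\cap E_{\gamma_2}=\emptyset$. Thus the trichotomy is necessary but not sufficient, and the amalgamation you propose cannot in general reconstruct a flag whose nest is the given set; some additional condition ruling out aggregates whose induced closure becomes biconnected would be needed. This is a defect in the proposition as stated --- the paper's proof only establishes necessity --- but it means your proposed route to the converse cannot be completed as written.
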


\proof A flag $\cF$ in $\cS_\Gamma$ consists of a sequence $\hat\Delta_{\gamma_1}\subseteq
\hat\Delta_{\gamma_2} \subseteq \cdots \subseteq \hat\Delta_{\gamma_r}$ of (poly)diagonals
associated to disjoint unions of induced subgraphs $\gamma_i$.  By definition of the (poly)diagonals and the fact that the subgraphs are disjoint unions of induced subgraphs, we see 
that the subgraphs satisfy $\gamma_r \subseteq \cdots \subseteq \gamma_1$. 
We then construct a forest of nested
subgraphs $\cT_{\cF}$ which has root vertices 
the connected components $\gamma_{rj}$ of the graph $\gamma_r$, and so on, so that the
set of vertices at a distance $r-i$ to the roots are the connected components  
$\gamma_{ij}$ of the graph $\gamma_i$. The tree has an edge from a connected 
component $\gamma_{ij}$ to a connected component $\gamma_{i'j'}$ whenever 
$i' = i+1$ and $\gamma_{i'j'}\subseteq \gamma_{ij}$.  The forest of nested
subgraphs constructed in this way is uniquely determined by the flag $\cF$. Conversely, given a
forest of nested subgraphs $\cT$, we associate to it a flag $\cF_{\cT}$ in $\cS_\Gamma$ by 
setting $S_i =\cap_{\gamma_{ij}} \hat\Delta_{\gamma_{ij}} =\hat\Delta_{\gamma_i}$, where
$\gamma_{ij}$ are all the connected induced subgraphs attached to the vertices of $\cT$ at a distance $r-i$ to the root, and $\gamma_i =\cup_{ij} \gamma_{ij}$. This gives a bijection between
flags and forests of nested subgraphs.

As in \eqref{GTnest}, a $\cG_\Gamma$-nest is then given by the set of 
$\cG_\Gamma$-factors of the elements $\hat\Delta_{\gamma_1}\subseteq
\hat\Delta_{\gamma_2} \subseteq \cdots \subseteq \hat\Delta_{\gamma_r}$ of a flag.
By Lemma \ref{igamma} and Proposition \ref{biconnGset}, the $\cG_\Gamma$-factors
of each $\hat\Delta_{\gamma_i}$ are the $\hat\Delta_{\gamma_{ij}}$ of its biconnected
components $\gamma_{ij}$. These form a set of induced biconnected subgraphs
with the property that any two $\gamma_{ij}$ and $\gamma_{i'j'}$ 
are either nested one inside the other (when $i\neq i'$), or have 
intersection that is either empty or consisting of a single point (when $i=i'$).
\endproof

We then obtain a stratification of the wonderful compactification 
$\overline{Conf}_\Gamma(X)$ as in Theorem 1.2 of \cite{li2}. 

\begin{Proposition}\label{strataConf}
For $\gamma \subseteq \Gamma$ a biconnected induced subgraph, let $E_\gamma$
be the divisor obtained as the iterated dominant transform of $\Delta_\gamma$ 
in the iterated blowup construction \eqref{Yks} of $\overline{Conf}_\Gamma(X)$. Then
\begin{equation}\label{unionstrata}
\overline{Conf}_\Gamma(X) \smallsetminus Conf_\Gamma(X) =
\bigcup_{\Delta_\gamma \in \cG_\Gamma} E_\gamma.
\end{equation}
The divisors $E_\gamma$ have the property that
\begin{equation}\label{intersGnest}
E_{\gamma_1}\cap \cdots \cap E_{\gamma_\ell} \neq \emptyset \Leftrightarrow \{ \gamma_1, \ldots, \gamma_\ell \} \ \text{ is a } \cG_\Gamma\text{-nest}.
\end{equation}
\end{Proposition}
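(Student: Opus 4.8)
The plan is to obtain Proposition~\ref{strataConf} as a direct specialization of Theorem~1.2 of \cite{li2}, whose hypotheses have all been verified above. First I would recall the general statement in the form we need: for a building set $\cG$ of a simple arrangement $\cS$ in a smooth quasi-projective variety $Y$, Li's theorem produces, for each $G\in\cG$, a nonsingular divisor $D_G$ of the wonderful compactification $Y_\cG$, realized as the iterated dominant transform of $G$ along the blowup sequence, and it shows both that $Y_\cG\smallsetminus(Y\smallsetminus\cup_{G\in\cG}G)=\bigcup_{G\in\cG}D_G$ and that $D_{G_1}\cap\cdots\cap D_{G_\ell}\neq\emptyset$ precisely when $\{G_1,\dots,G_\ell\}$ is a $\cG$-nest.

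Next I would set up the dictionary with the present situation, taking $Y=X^{\BV_\Gamma}$, $\cS=\cS_\Gamma$, and $\cG=\cG_\Gamma$. Lemma~\ref{SsetDeltas} shows that $\cS_\Gamma$ is a simple arrangement, Proposition~\ref{biconnGset} that $\cG_\Gamma$ is a building set for it, and Lemma~\ref{configs} identifies the configuration space \eqref{configLi} attached to this building set with $Conf_\Gamma(X)$, so that $\overline{Conf}_\Gamma(X)=X^{\BV_\Gamma}_{\cG_\Gamma}$ is Li's $Y_\cG$ and, for each biconnected induced $\gamma$, the divisor $E_\gamma$ of the statement is the canonical divisor $D_{\Delta_\gamma}$. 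The identity \eqref{unionstrata} is then the boundary description of Theorem~1.2, and \eqref{intersGnest} is its nonempty-intersection criterion, where ``$\cG_\Gamma$-nest'' acquires the combinatorial meaning spelled out in Proposition~\ref{cGnests}, namely the three alternatives (1)--(3) for pairs of subgraphs in the collection.

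The proof is thus essentially an assembly of already established facts, and the only point I would treat with care --- and the one I expect to be the sole (mild) obstacle --- is the compatibility of the grouped blowup sequence \eqref{Yks} of Proposition~\ref{blowupYn} with Li's one-center-at-a-time construction: one must check that the $E_\gamma$ produced by blowing up, in step $k$, the whole union $\bigcup_{\gamma\in\cG_{n-k+1,\Gamma}}\Delta_\gamma$ coincide with the divisors $D_{\Delta_\gamma}$ obtained by blowing up the individual $\Delta_\gamma$ in a compatible order. This follows because the centers within a single step are pairwise transverse --- being either disjoint or meeting in a single vertex, by Proposition~\ref{biconnGset} together with Lemma~\ref{disjunLem} --- so the corresponding blowups commute and the order-independence of \S3 of \cite{li2}, already invoked in Proposition~\ref{blowupYn}, makes the two constructions agree; hence $E_\gamma$ is unambiguously the iterated dominant transform of $\Delta_\gamma$, as required for Theorem~1.2 to apply verbatim.
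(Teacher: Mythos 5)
Your proposal is correct and takes essentially the same route as the paper: both obtain the proposition as a direct specialization of Theorem 1.2 of Li \cite{li2}, with the hypotheses verified by Lemma \ref{SsetDeltas}, Proposition \ref{biconnGset}, and Lemma \ref{configs}. The paper's proof additionally expands, for later use, the combinatorial mechanism behind the intersection criterion \eqref{intersGnest} via Lemma \ref{biconnunion} and a dimension count in the non-$\cG_\Gamma$-nest case, while you instead spell out the compatibility of the grouped blowup sequence \eqref{Yks} with Li's one-center-at-a-time construction --- a point the paper regards as already settled in Proposition \ref{blowupYn}.
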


\proof
The statement is a special case of Theorem 1.2 of \cite{li2}, so we do not reproduce the
proof here in detail. For later use, we just comment briefly on the second statement. Notice that,
if $\{ \gamma_1, \ldots, \gamma_\ell \}$ is a $\cG_\Gamma$-nest, then the divisors
$E_{\gamma_i}$ intersect transversely. In fact, they are the $\cG_\Gamma$-factors of elements
$\hat\Delta_\gamma$ of a flag, and by construction $\cG_\Gamma$-factors intersect transversely,
with intersection the given elements of the flag, so that after passing to the (iterated)
dominant transforms (see Proposition 2.8 of \cite{li2}), one finds a nonempty transverse
intersection.

If $\{ \gamma_1, \ldots, \gamma_\ell \}$ is not a $\cG_\Gamma$-nest, then there are two
subgraphs $\gamma_i$ and $\gamma_j$ in this collection, whose intersection $\gamma_i \cap \gamma_j$ contains at least two distinct vertices. In this case, by Lemma 
\ref{biconnunion}, their union $\gamma_{ij}=\gamma_i \cup \gamma_j$ is also a 
biconnected subgraph with a number of vertices larger than that of both $\gamma_i$
and $\gamma_j$. Thus, the (iterated) dominant transform of the diagonal $\Delta_{\gamma_{ij}}$ 
was blown up at an earlier step in the construction of $\overline{Conf}_\Gamma(X)$ as an
iterated blowup. The diagonals $\Delta_{\gamma_i}$ and $\Delta_{\gamma_j}$
intersect along $\Delta_{\gamma_{ij}}$. Even though the graphs $\gamma_i$ and $\gamma_j$
are not part of a $\cG_\Gamma$-nest, their intersection is still transversal. To see this,
notice that the graphs $\gamma_i$ and $\gamma_j$ are connected, and so is $\gamma_{ij}$.
However, the graph $\tilde\gamma_{ij} =\gamma_i \cap \gamma_j$ needs not be connected.
Thus, the number of connected components can violate the relation \eqref{b0inclexl}, and
we cannot deduce transversality directly by the argument of Lemma \ref{lemsubgr}. However,
notice that $\hat\Delta_{\gamma_i}=\Delta_{\gamma_i}$ and $\hat\Delta_{\gamma_j}=\Delta_{\gamma_j}$ are both contained not only in $\hat\Delta_{\tilde\gamma_{ij}}$ as used in
Lemma \ref{lemsubgr} but also in the possibly smaller $\Delta_{\tilde\gamma_{ij}}\subseteq 
\hat\Delta_{\tilde\gamma_{ij}}$. This has dimension
$$ \dim(\Delta_{\tilde\gamma_{ij}}) =\dim(X) (|\BV_\Gamma|-|\BV_{\gamma_1\cap \gamma_2}|
+1), $$
while $\dim (\hat\Delta_{\tilde\gamma_{ij}}) =\dim(X) (|\BV_\Gamma|-|\BV_{\gamma_1\cap \gamma_2}| +b_0(\tilde\gamma_{ij}))$.
Then one has the correct counting of dimensions
$$ \dim(\Delta_{\tilde\gamma_{ij}}) =\dim(\Delta_{\gamma_i}) +
\dim(\Delta_{\gamma_j}) -  \dim(\Delta_{\gamma_{ij}}) . $$
Since $\Delta_{\gamma_i}$ and $\Delta_{\gamma_j}$
intersect transversely along $\Delta_{\gamma_{ij}}$, whose dominant transform 
was already blown up at an earlier stage in the iterated blowup construction, 
the (iterated) dominant transforms $E_{\gamma_i}$ and $E_{\gamma_j}$ no longer
intersect, $E_{\gamma_i}\cap E_{\gamma_j}=\emptyset$.
\endproof

Let $\cN=\{ \gamma_{ij}, \}$ be the $\cG_\Gamma$-nest of a flag $\cF_\cT=\{
\hat\Delta_{\gamma_1}\subseteq \hat\Delta_{\gamma_2} \subseteq \cdots \subseteq \hat\Delta_{\gamma_r} \}$ associated to a forest of nested subgraphs $\cT$. Let $X_\cN$ be the subvariety of
$\overline{Conf}_\Gamma(X)$ defined by the intersection
\begin{equation}\label{XNvar}
X_\cN := \cap_{ij} E_{\gamma_{ij}} 
\end{equation}
of the divisors associated to the graphs in the $\cG_\Gamma$-nest $\cN$. We know by
Proposition \ref{strataConf} that these intersections are nonempty. The forest 
$\cT$ provides a stratification of the varieties $X_\cN$.

\begin{Lemma}\label{intersXT}
Given two varieties $X_{\cN_1}$ and $X_{\cN_2}$ as in \eqref{XNvar}, the intersection
$X_{\cN_1}\cap X_{\cN_2}\neq \emptyset$ if and only if $\cN=\cN_1\cup \cN_2$ is still
a $\cG_\Gamma$-nest. In this case, let $\cF_{\cT_1}$ and $\cF_{\cT_2}$ be flags with
$\cN_1=\cN(\cT_1)$ and $\cN_2=\cN(\cT_2)$ the $\cG_\Gamma$-nests associated to
these flags. In terms of forests of nested subgraphs, $\cN=\cN(\cT)$ corresponds to 
the flag $\cF_\cT$ of the forest $\cT$ given by the union of $\cT_1$ and $\cT_2$.
\end{Lemma}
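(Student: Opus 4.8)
The plan is to reduce the statement to two results already in hand: the intersection criterion \eqref{intersGnest} for the exceptional divisors $E_\gamma$, and the dictionary of Proposition \ref{cGnests} between $\cG_\Gamma$-nests, flags in $\cS_\Gamma$, and forests of nested subgraphs. First I would simply unwind \eqref{XNvar}. Since $X_{\cN_i}=\bigcap_{\gamma\in\cN_i}E_\gamma$, one has $X_{\cN_1}\cap X_{\cN_2}=\bigcap_{\gamma\in\cN_1\cup\cN_2}E_\gamma$, where $\cN_1\cup\cN_2$ is understood as a set of biconnected induced subgraphs, a subgraph occurring in both nests being listed once. Now \eqref{intersGnest} says precisely that a finite intersection of such divisors is nonempty exactly when the indexing family is a $\cG_\Gamma$-nest; applied to the family $\cN_1\cup\cN_2$ this gives both halves of the first assertion, and in the nonempty case it also yields $X_{\cN_1}\cap X_{\cN_2}=X_\cN$ with $\cN=\cN_1\cup\cN_2$.

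It then remains to exhibit a forest $\cT$ of nested subgraphs with $\cN(\cT)=\cN$ and to recognise it as the ``union'' of the chosen forests $\cT_1$ and $\cT_2$. I would define $\cT_1\cup\cT_2$ by overlaying the two forests: form the union of the vertex labels (connected induced subgraphs); whenever a label $\alpha$ coming from $\cT_1$ and a label $\beta$ coming from $\cT_2$ meet without one containing the other, replace the pair by the smallest connected induced subgraph $\iota(\alpha\cup\beta)$ containing their union; iterate until the configuration is stable; then read off the tree edges from the inclusion order (an edge $\gamma\to\gamma'$ whenever $\gamma\supsetneq\gamma'$ with no intermediate label) and impose the convention that labels on distinct branches are disjoint. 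The claim to verify is that this is a genuine forest of nested subgraphs and that $\cN(\cT_1\cup\cT_2)=\cN_1\cup\cN_2$. Here the combinatorial lemmas enter. The hypothesis that $\cN$ is a nest rules out, via Lemma \ref{biconnunion}, any amalgamation in which $\iota(\alpha\cup\beta)$ turns out to be biconnected: such a configuration would produce inside $\cN$ an antichain of biconnected subgraphs whose join already lies in $\cG_\Gamma$, against $\cN$ being a nest. Hence every amalgamation that occurs is of the benign kind in which $\alpha$ and $\beta$ meet in a single vertex, and then Lemma \ref{indbiconn} tells us the biconnected decomposition of $\iota(\alpha\cup\beta)$ is just the union of the decompositions of $\alpha$ and $\beta$ joined at that vertex. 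Thus the collection of biconnected components is preserved at every step, so $\cN(\cT_1\cup\cT_2)=\cN(\cT_1)\cup\cN(\cT_2)=\cN$; and since $\cN$ is a nest, Proposition \ref{cGnests} supplies the pairwise compatibility of its biconnected subgraphs that is needed for the inclusion order on the merged labels to assemble into a forest. The flag $\cF_\cT$ of that forest is then the one with $\cN(\cF_\cT)=\cN$, which is what the lemma asserts.

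I expect the real work to be confined to the second paragraph: showing that the amalgamation operation on forests is well defined, independent of the order in which overlapping pairs are resolved, and that what it produces is honestly a forest of nested subgraphs whose biconnected decomposition is $\cN_1\cup\cN_2$ on the nose --- losing no biconnected piece and creating no spurious larger one. By contrast, the nonemptiness statement and the identification $X_{\cN_1}\cap X_{\cN_2}=X_\cN$ are immediate from \eqref{intersGnest} once \eqref{XNvar} has been spelled out.
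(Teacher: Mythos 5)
Your first paragraph is sound and matches the paper's argument exactly: since $X_{\cN_i}=\bigcap_{\gamma\in\cN_i}E_\gamma$, the intersection is $\bigcap_{\gamma\in\cN_1\cup\cN_2}E_\gamma$, and \eqref{intersGnest} gives the nonemptiness criterion together with the identification $X_{\cN_1}\cap X_{\cN_2}=X_\cN$.

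The second paragraph has a genuine gap. Your amalgamation step replaces a pair $\alpha\in\cT_1$, $\beta\in\cT_2$ meeting at a single vertex by $\iota(\alpha\cup\beta)$, and you then assert, citing Lemma~\ref{indbiconn}, that the biconnected decomposition of $\iota(\alpha\cup\beta)$ is just that of $\alpha$ joined to that of $\beta$ at the shared vertex. That is not what Lemma~\ref{indbiconn} says (its second clause only asserts that $\iota$ of a single biconnected graph is biconnected), and the assertion is false in general: passing to the induced subgraph $\iota(\alpha\cup\beta)$ can add edges of $\Gamma$ running between $\alpha\smallsetminus\beta$ and $\beta\smallsetminus\alpha$, which merges what were separate biconnected components. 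The triangle with $\alpha=e_{12}$, $\beta=e_{23}$ is the cleanest instance: they meet at the single vertex $v_2$, yet $\iota(\alpha\cup\beta)$ is the whole triangle, whose biconnected decomposition is $\{\Gamma\}$, not $\{e_{12},e_{23}\}$. Your attempt to rule out the bad case "via Lemma~\ref{biconnunion}" also runs the implication backwards: that lemma says an intersection of size $\ge 2$ forces a biconnected union, not that a biconnected union of pieces forces a large intersection, so it cannot exclude a single-vertex overlap whose induced hull becomes biconnected. In short, the nest hypothesis on $\cN$ does not, through the lemmas you cite, guarantee that your amalgamation leaves the collection of biconnected components unchanged, which is exactly the point you need. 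The paper avoids this by going the other way: it takes the nest $\cN$, rebuilds a forest $\cT$ from it directly via the flag/forest dictionary of Proposition~\ref{cGnests} (connected components of the union of nest elements give the labels, inclusion gives the tree structure), and then simply observes that the resulting $\cT$ is the overlay of $\cT_1$ and $\cT_2$ with common labels identified. No amalgamation operation on forests is ever introduced, so the pathology you would need to exclude never arises.
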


\proof Given two flags $\cF_{\cT_1}$ and $\cF_{\cT_2}$ associated to forests of nested subgraphs $\cT_1=\{ \gamma_i \}$ and $\cT_2=\{ \gamma'_k \}$, let $\cN_1=\{ \gamma_{ij} \}$ and $\cN_2=\{ \gamma'_{kr} \}$ be the associated $\cG_\Gamma$-nests. Then, by construction, the intersection $X_{\cN_1}\cap X_{\cN_2}$ is nonempty if and only if the union $\cN=\{ \gamma_{ij} \}\cup\{ \gamma_{kr}\}$ is still a $\cG_\Gamma$-nest. Arguing as in Lemma \ref{cGnests}, we can
construct from $\cN$ a forest $\cT$ of nested subgraphs, so that $\cN$ is the $\cG_\Gamma$-nest
of the flag $\cF_{\cT}$. The forest $\cT$ is the union of the forests $\cT_1$ and $\cT_2$. The intersection $\cT_1\cap \cT_2$ which is the largest subforest with common vertices 
(labelled by the same graphs) is counted only once in $\cT$. 
\endproof

We then have the following description of the open stratum.

\begin{Proposition}\label{XNopen}
The open stratum $X_\cN^\circ$ is given by
\begin{equation}\label{XNcirc}
X_\cN^\circ = X_\cN \smallsetminus \bigcup_{\cT': \cT=\cT'/e} X_{\cN(\cT')},
\end{equation}
where the union is over all the forests of nested subgraphs $\cT'$ such that
$\cT$ is obtained from $\cT'$ by contracting a single edge $e$, whose vertices are decorated
by graphs in the following way. The graph $\gamma'$ decorating the vertex of $e$ that is farther
away from the root of the tree containing it is the graph decorating the corresponding 
vertex in $\cT$ and the
graph $\gamma$ decorating the end of $e$ closer to the root is the union of $\gamma'$ and
a single additional $\cG_\Gamma$-factor. The $\cG_\Gamma$-nest $\cN(\cT')$ is the one 
associated to the flag $\cF_{\cT'}$.
\end{Proposition}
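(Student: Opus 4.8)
The plan is to identify $X_\cN$ with the closure of its open stratum by reading off the stratification induced by the iterated blowup construction of Theorem 1.2 of \cite{li2}, restricted to the subvariety $X_\cN = \cap_{ij} E_{\gamma_{ij}}$. By that theorem, the stratification of $\overline{Conf}_\Gamma(X)$ has closed strata indexed by $\cG_\Gamma$-nests, and $X_\cN$ is exactly the closed stratum attached to the nest $\cN$. The open stratum $X_\cN^\circ$ is then obtained from $X_\cN$ by deleting all the \emph{smaller} closed strata it contains, i.e. all $X_{\cN'}$ with $\cN \subsetneq \cN'$ a $\cG_\Gamma$-nest. So the first step is simply to write
\[
X_\cN^\circ = X_\cN \smallsetminus \bigcup_{\cN \subsetneq \cN'} X_{\cN'},
\]
the union being over $\cG_\Gamma$-nests $\cN'$ strictly containing $\cN$.

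The second step is to reduce this union to the one over nests $\cN'$ that add a \emph{single} $\cG_\Gamma$-factor, i.e. $|\cN'| = |\cN| + 1$. This holds because of the partial-order structure: any $\cG_\Gamma$-nest $\cN'$ with $\cN \subsetneq \cN'$ contains some intermediate nest $\cN''$ with $\cN \subsetneq \cN'' \subseteq \cN'$ and $|\cN''| = |\cN|+1$ (add the $\cG_\Gamma$-factors of $\cN'\smallsetminus \cN$ one at a time, checking at each stage using Proposition \ref{cGnests} that the conditions (1)--(3) relating any two subgraphs are preserved — adding one factor from a nest to a sub-nest cannot create a pair violating (1)--(3) since all such pairs already lie in $\cN'$). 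Since $X_{\cN'} \subseteq X_{\cN''}$, the union over all strictly larger nests equals the union over those obtained by a single additional $\cG_\Gamma$-factor.

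The third step is the translation into the language of forests of nested subgraphs via the bijection of Proposition \ref{cGnests} and Lemma \ref{intersXT}: adding a single $\cG_\Gamma$-factor $G$ to $\cN = \cN(\cT)$ to obtain $\cN' = \cN(\cT')$ corresponds, at the level of forests, to refining $\cT$ to $\cT'$ by inserting one new edge $e$. The graph $\gamma'$ at the far end of $e$ is the graph already decorating the corresponding vertex of $\cT$, and the graph $\gamma$ at the near (rootward) end is $\gamma = \gamma' \cup G$, the union of $\gamma'$ with the one new $\cG_\Gamma$-factor; contracting $e$ in $\cT'$ recovers $\cT$, i.e. $\cT = \cT'/e$. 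One checks that this correspondence is a bijection between single-factor enlargements of $\cN(\cT)$ and the edge-contractions $\cT = \cT'/e$ of the stated form, using that the forest attached to a nest records exactly the nesting relation $\gamma_i \supseteq \gamma_j$ among biconnected induced subgraphs. Combining the three steps gives \eqref{XNcirc}.

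The main obstacle is the second step: verifying that every $\cG_\Gamma$-nest strictly containing $\cN$ can be built up from $\cN$ one $\cG_\Gamma$-factor at a time through intermediate $\cG_\Gamma$-nests. The point to be careful about is that adding a biconnected induced subgraph $G$ to $\cN$ must not force further subgraphs to be added in order to restore the nest conditions (1)--(3) of Proposition \ref{cGnests}; the key observation is that if $G$ together with $\cN$ sits inside the larger nest $\cN'$, then every pair $(G,\gamma)$ with $\gamma \in \cN$ already satisfies one of (1)--(3) because it is a pair inside $\cN'$, so $\cN \cup \{G\}$ is automatically a nest. The rest of the argument is bookkeeping with the forest--nest dictionary already set up in Lemma \ref{intersXT}.
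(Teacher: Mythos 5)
Your proposal is correct and follows essentially the same route as the paper's proof: both identify the open stratum by removing the deeper closed strata, both rely on the transversality/intersection results of Proposition \ref{strataConf} and Lemma \ref{intersXT}, and both then phrase the "maximal" such substrata in terms of one additional $\cG_\Gamma$-factor, read off through the forest/flag dictionary of Proposition \ref{cGnests}. Your second step is somewhat more explicit than the paper's (the paper only asserts that "the largest such intersections" are the $X_{\cN(\cT')}$, while you spell out that any nest $\cN' \supsetneq \cN$ dominates a one-step enlargement $\cN \cup \{G\}$ because the pairwise conditions of Proposition \ref{cGnests} are inherited by subsets); this is a slight clarification rather than a different argument, and it correctly isolates the point that needs checking.
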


\proof Let $\cT'$ be a forest as above. Assume that the edge $e$ of $\cT'$ is attached to
a root vertex and let $\gamma$ be the graph decorating the other end of the edge $e$, and
let $\gamma_{r,1}$ be the component of $\gamma_r$ decorating the vertex of $\cT'$ that is connected in $\cT'$ to the vertex decorated by $\gamma$. Then, if the flag $\cF_\cT$ is given by $\hat\Delta_{\gamma_1}\subseteq \hat\Delta_{\gamma_2} \subseteq \cdots \subseteq \hat\Delta_{\gamma_r}$, the flag $\cF_{\cT'}$ 
is simply given by $\hat\Delta_{\gamma_1}\subseteq \hat\Delta_{\gamma_2} \subseteq \cdots \subseteq \hat\Delta_{\gamma_r} \subseteq \hat\Delta_{\gamma_{r+1}}$, where the graph
$\gamma_{r+1}$ has connected components given by $\gamma$ and all the other components
$\gamma_{rj}$ of $\gamma_r$, for $j\geq 2$. The $\cG_\Gamma$-nest $\cN(\cT')$ is then
given by the same $\cG_\Gamma$-factors for the graphs $\gamma_i$ already in the original
flag $\cF_{\cT}$ together with the $\cG_\Gamma$-factors $\gamma_\alpha$ of the graph 
$\gamma$. If the graph $\gamma$ has a single additional $\cG_\Gamma$-factor $\tilde\gamma$,
in addition to the $\cG_\Gamma$-factors of $\gamma_r$, then the variety $X_{\cN(\cT')}$
is given by the intersection 
\begin{equation}\label{XNTprime}
X_{\cN(\cT')} = \cap_{i,j: i=1,\ldots, r-1} E_{\gamma_{ij}} \cap \cap_{r,j: j\geq 2} E_{\gamma_{rj}} \cap E_{\tilde\gamma},
\end{equation}
where $E_{\tilde\gamma} \subseteq E_{\gamma_{r1}}$. By Proposition \ref{strataConf}
we then see that the top stratum of $X_\cN$ is obtained by subtracting the intersections
with the other $X_{\cN'}$ and, by Lemma \ref{intersXT}, we see that the largest such 
intersections are in fact given by the $X_{\cN(\cT')}$ described above.
\endproof

This gives a decomposition of $\overline{Conf}_\Gamma(X)$ as a disjoint union of
open strata.

\begin{Corollary}\label{openstrConf}
The variety $\overline{Conf}_\Gamma(X)$ is stratified by the pairwise disjoint
subvarieties $X_\cN^\circ$,
\begin{equation}\label{opstr}
\overline{Conf}_\Gamma(X) = Conf_\Gamma(X) \cup \bigcup_{\cN \in \,\, \cG-nests} X_{\cN}^\circ .
\end{equation}
\end{Corollary}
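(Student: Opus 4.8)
The plan is to obtain the statement as the specialization to our arrangement $\cS_\Gamma$ and building set $\cG_\Gamma$ of the general stratification of Theorem 1.2 of \cite{li2}, assembling it from the translations already in place: the boundary decomposition \eqref{unionstrata} and the nest criterion \eqref{intersGnest} of Proposition \ref{strataConf}, the description \eqref{XNcirc} of the open strata in Proposition \ref{XNopen}, the intersection rule of Lemma \ref{intersXT}, and the dictionary between $\cG_\Gamma$-nests and forests of nested subgraphs of Proposition \ref{cGnests}. Concretely, to each boundary point $p\in\overline{Conf}_\Gamma(X)\smin Conf_\Gamma(X)$ I would attach the finite set
\[ \cN_p := \{\, \gamma\subseteq\Gamma \text{ induced biconnected} \mid p\in E_\gamma \,\}, \]
which is nonempty by \eqref{unionstrata} and, since the divisors $\{E_\gamma\}_{\gamma\in\cN_p}$ all contain $p$ and hence have nonempty common intersection, is a $\cG_\Gamma$-nest by \eqref{intersGnest}. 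The whole Corollary then reduces to the identity $X_\cN^\circ=\{\, p \mid \cN_p=\cN\,\}$ (for nonempty $\cG_\Gamma$-nests $\cN$), which exhibits the boundary as the disjoint union of the open strata indexed by $\cG_\Gamma$-nests and, together with the restatement $Conf_\Gamma(X)=\overline{Conf}_\Gamma(X)\smin\cup_\gamma E_\gamma$ of \eqref{unionstrata}, yields \eqref{opstr}.

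For the first inclusion, if $\cN_p=\cN$ then $p\in\cap_{\gamma\in\cN}E_\gamma=X_\cN$, and in fact $p\in X_\cN^\circ$: otherwise \eqref{XNcirc} would place $p$ in some $X_{\cN(\cT')}$ with $\cT(\cN)=\cT'/e$, and then \eqref{XNTprime} shows $p\in E_{\tilde\gamma}$ for the extra $\cG_\Gamma$-factor $\tilde\gamma\notin\cN=\cN_p$, contradicting the definition of $\cN_p$. For the reverse inclusion, suppose $p\in X_\cN^\circ$; then $p\in X_\cN$ forces $\cN\subseteq\cN_p$. If the inclusion were strict I would use that a subcollection of a $\cG_\Gamma$-nest is again a $\cG_\Gamma$-nest (immediate from the pairwise characterization in Proposition \ref{cGnests}), pick $\tilde\gamma\in\cN_p\smin\cN$ so that $\cN\cup\{\tilde\gamma\}$ is a nest, and realize $\cT(\cN)$ as $\cT'/e$ for the forest $\cT'$ with $\cN(\cT')=\cN\cup\{\tilde\gamma\}$, checking that the vertex decorating the un-contracted edge has the form required in Proposition \ref{XNopen} (the union of the neighbouring graph with the single new $\cG_\Gamma$-factor $\tilde\gamma$). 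Since $p\in X_{\cN_p}\subseteq X_{\cN\cup\{\tilde\gamma\}}=X_{\cN(\cT')}$, this contradicts $p\in X_\cN^\circ$ via \eqref{XNcirc}; hence $\cN=\cN_p$. Pairwise disjointness of the $X_\cN^\circ$ is then automatic, but can also be seen directly: if $p\in X_{\cN_1}^\circ\cap X_{\cN_2}^\circ$, Lemma \ref{intersXT} makes $\cN_1\cup\cN_2$ a nest contained in $\cN_p$, and the previous step forces $\cN_1=\cN_p=\cN_2$.

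The set-theoretic manipulations with the $E_\gamma$ are routine; the one step that genuinely needs care — and which I expect to be the main obstacle — is the combinatorial claim just invoked, namely that enlarging a $\cG_\Gamma$-nest by a single $\cG_\Gamma$-factor corresponds, under the bijection of Proposition \ref{cGnests}, to un-contracting exactly one edge of the associated forest of nested subgraphs in the decorated way demanded by Proposition \ref{XNopen}. Here one must use that the new factor $\tilde\gamma$ satisfies alternative (1), (2) or (3) of Proposition \ref{cGnests} with every member of $\cN$ in order to locate where the new labelled vertex is grafted onto the forest (a new root, a new leaf attached at a cut-vertex, or a new intermediate level inserted along an existing chain) and to verify that the graph decorating the adjacent vertex is precisely obtained by adjoining $\tilde\gamma$. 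Once this is settled the decomposition \eqref{opstr} into pairwise disjoint locally closed strata follows, matching Theorem 1.2 of \cite{li2}.
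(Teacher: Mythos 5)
Your argument is correct and follows the paper's own route (a direct consequence of Propositions \ref{strataConf} and \ref{XNopen}), spelling out via the invariant $\cN_p$ why the open strata partition the boundary $\overline{Conf}_\Gamma(X)\smin Conf_\Gamma(X)$ and are pairwise disjoint. The forest--contraction verification you flag at the end is avoidable: since $X_\cN^\circ$ is by construction the complement in $X_\cN$ of all deeper $X_{\cN'}$ with $\cN'\supsetneq\cN$ a $\cG_\Gamma$-nest, the identity $X_\cN^\circ=\{p\mid\cN_p=\cN\}$ (and hence the disjoint covering) follows directly from \eqref{intersGnest} without having to trace the edge-contraction description in \eqref{XNcirc}.
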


\proof The statement is a direct consequence of Proposition  \ref{strataConf} and Proposition
\ref{XNopen}. 
\endproof

\subsection{Strata and fiber bundles}\label{bundlesSec}

The open strata $X_\cN^\circ$ also have a description as fiber bundles. To see that, we introduce some preliminary notation and terminology.

Let $\A^d$ be the affine space in $d$-dimensions. The group $G_d$ of translations and homotheties acts on $\A^d$ by $\xi \mapsto \lambda \xi +\eta$, for 
$\eta \in \A^d$ and $\lambda \in \bG_m$ a nonzero scalar. For a given graph $\Gamma$, then,
we define the configuration space of $\A^d$ up to translations and homotheties to be the 
quotient
\begin{equation}\label{CGammaAd}
C_\Gamma(\A^d) := Conf_\Gamma(\A^d) /G_d. 
\end{equation}

Let $v_1$ and $v_2$ be two vertices of $\Gamma$ such that there is an edge 
$e\in \BE_\Gamma$ with $\Bdd (e) =\{ v_1, v_2 \}$.
The configuration space $C_\Gamma(\A^d)$ of \eqref{CGammaAd} can then
be identified (non-canonically) with
\begin{equation}\label{x1x2fix}
C_\Gamma(\A^d) \simeq \{ (x_v)_{v\in \BV_\Gamma} \in Conf_\Gamma(\A^d)  \, |\, 
x_{v_1}=(0,\dots,0) ,  \,\, x_{v_2} =(1,0,\dots,0)\},
\end{equation}
since fixing these coordinates suffices to determine a section of the $G_d$ action.

\begin{Lemma}\label{CGAdLem}
The configuration space $C_\Gamma(\A^d)$ has a nonsingular wonderful ``compactification" 
$\overline{C}_\Gamma(\A^d)$ obtained as in Proposition \ref{blowupYn}.
\end{Lemma}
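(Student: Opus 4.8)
The plan is to use the identification \eqref{x1x2fix} to present $C_\Gamma(\A^d)$ as the complement of a family of (poly)diagonals in a linear slice of $(\A^d)^{\BV_\Gamma}$, and then to apply the construction of \cite{li2} recalled in \S\ref{wondcompSec}--\S\ref{blowupSec} essentially verbatim. Fix an edge $e$ of $\Gamma$ with $\Bdd(e)=\{v_1,v_2\}$ and set
\[ Y := \{ (x_v)_{v\in\BV_\Gamma}\in(\A^d)^{\BV_\Gamma}\ \mid\ x_{v_1}=(0,\dots,0),\ x_{v_2}=(1,0,\dots,0) \}, \]
an affine space isomorphic to $(\A^d)^{\BV_\Gamma\smallsetminus\{v_1,v_2\}}$. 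By \eqref{x1x2fix} we have $C_\Gamma(\A^d)\simeq Conf_\Gamma(\A^d)\cap Y$, which by Lemma \ref{configs} equals $Y\smallsetminus\bigcup_{\gamma\in\cG_\Gamma}(\Delta_\gamma\cap Y)$.

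The heart of the matter is to verify that $Y$ meets the arrangement $\cS_\Gamma$ of \eqref{cSGamma} and the building set $\cG_\Gamma$ of \eqref{cGGamma} in a way that transports Li's data to $Y$. Concretely I would show: (i) for $\gamma\in\widehat{\BS\BG}(\Gamma)$, the intersection $\hat\Delta_\gamma\cap Y$ is empty exactly when a single connected component of $\gamma$ contains both $v_1$ and $v_2$, and otherwise $Y$ meets $\hat\Delta_\gamma$ transversally inside $(\A^d)^{\BV_\Gamma}$ --- this is a direct count from \eqref{dimDeltas}, since in the nonempty case $\hat\Delta_\gamma\cap Y\cong X^{\BV_{\Gamma//\gamma}}$ with two distinct coordinate points fixed, which has exactly the expected dimension $\dim\hat\Delta_\gamma+\dim Y-d|\BV_\Gamma|$ (and for linear subspaces equality of the intersection dimension with this number is transversality). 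It follows that the nonempty restrictions $\hat\Delta_\gamma\cap Y$ form a simple arrangement $\cS_\Gamma|_Y$ of subvarieties of $Y$: its members and all their intersections are linear subspaces of the affine space $Y$, hence nonsingular and meeting cleanly, and closure under intersection is inherited from Lemma \ref{SsetDeltas}. Similarly I would show (ii) the nonempty restrictions $\Delta_\gamma\cap Y$ of diagonals of induced biconnected $\gamma$ form a $\cG$-building set $\cG_\Gamma|_Y$ for $\cS_\Gamma|_Y$: given a nonempty $\hat\Delta_\gamma\cap Y$, none of its $\cG_\Gamma$-factors $\Delta_{\gamma_{ij}}$ --- the biconnected components of $\gamma$, by Proposition \ref{biconnGset} --- can restrict to the empty set, each meets $Y$ transversally as in (i), and therefore the transverse intersection $\bigcap_j\Delta_{\gamma_{ij}}=\hat\Delta_\gamma$ of Proposition \ref{biconnGset} restricts to a transverse intersection $\bigcap_j(\Delta_{\gamma_{ij}}\cap Y)=\hat\Delta_\gamma\cap Y$ inside $Y$, because for linear subspaces transversality is additivity of codimensions and all the codimensions involved are preserved by (i); minimality of the factors is inherited because $\Delta_\delta\cap Y\supseteq\hat\Delta_\gamma\cap Y\neq\emptyset$ already forces $\Delta_\delta\supseteq\hat\Delta_\gamma$ in $(\A^d)^{\BV_\Gamma}$ via Lemma \ref{lemsubgr}.

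With (i) and (ii) in place, Theorems 1.2 and 1.3 of \cite{li2}, exactly as recalled in \S\ref{blowupSec}, produce a smooth wonderful ``compactification'' $\overline{C}_\Gamma(\A^d):=Y_{\cG_\Gamma|_Y}$ of $Y\smallsetminus\bigcup_{G\in\cG_\Gamma|_Y}G=C_\Gamma(\A^d)$, described as an iterated blowup. Since $\cG_\Gamma|_Y$ is graded by the number of vertices of the underlying subgraph just as $\cG_\Gamma$ is, the independence-of-ordering statement of \S 3 of \cite{li2} applies and organizes these blowups by decreasing number of vertices exactly as in Proposition \ref{blowupYn}, which is the assertion of the Lemma.

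I expect essentially all the work to lie in step (ii) --- that intersecting the De Concini--Procesi building set of $(\A^d)^{\BV_\Gamma}$ with the slice $Y$ introduces no excess intersection --- and this reduces entirely to the bookkeeping with \eqref{dimDeltas} in step (i). An essentially equivalent, and perhaps more conceptual, route is to note that the group $G_d$ acts on $(\A^d)^{\BV_\Gamma}$ preserving every $\hat\Delta_\gamma$, hence acts compatibly on the whole iterated blowup tower producing $\overline{Conf}_\Gamma(\A^d)$, and acts freely on $Conf_\Gamma(\A^d)$; the slice $Y$ is a section of this free action and $\overline{C}_\Gamma(\A^d)$ is the wonderful model of the resulting quotient arrangement.
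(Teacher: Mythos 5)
Your proposal is correct and follows essentially the same route as the paper: fix the section $Y$ from \eqref{x1x2fix}, realize $C_\Gamma(\A^d)$ as the complement in $Y$ of the restricted diagonals, and invoke Li's iterated-blowup construction; the paper even closes with the same alternative viewpoint via the free $G_d$-action that you sketch at the end. The main difference is one of detail rather than strategy --- you spell out the verification that the restricted arrangement $\cS_\Gamma|_Y$ and building set $\cG_\Gamma|_Y$ satisfy Li's axioms (transversality of $Y$ with each $\hat\Delta_\gamma$, closure under intersection, minimality of the $\cG$-factors), whereas the paper's proof simply asserts that $\overline{C}_\Gamma(\A^d)$ is obtained by restricting $\overline{Conf}_\Gamma(\A^d)$ to the slice, or equivalently by taking the closure of the section inside the product of blowups.
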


\proof One can construct, as in Proposition \ref{blowupYn}, the space 
$\overline{Conf}_\Gamma(\A^d)$, as an iterated blowup of $\A^{d |\BV_\Gamma|}$.
(Notice that, technically, this is not a compactification in this case.) To obtain $\overline{C}_\Gamma(\A^d)$ we need to check the compatibility of the construction with the action
of the group $G_d$ of translations and homotheties. One can do this by choosing a section
as in \eqref{x1x2fix} and realize in this way the configuration space $C_\Gamma(\A^d)$
(non-canonically) as a subspace of $Conf_\Gamma(\A^d)$. Then the space $\overline{C}_\Gamma(\A^d)$ is the restriction to this subspace of the ``compactification" $\overline{Conf}_\Gamma(\A^d)$.  This can be seen also by considering the original definition of
$\overline{Conf}_\Gamma(\A^d)$, not in terms of iterated blowups but as the 
closure of $Conf_\Gamma(\A^d)$ inside the space
$\prod Bl_{\Delta_\gamma \in \cG_\Gamma} (\A^{d |\BV_\Gamma|})$.
Then when we look only at those configurations as in \eqref{x1x2fix}, we allow
only those degenerations that do not collapse $x_{v_1}$ and $x_{v_2}$ together and we
obtain the closure of the subspace identified by this choice of section with $C_\Gamma(\A^d)$
inside the same product space. Another way to see this, which does not require choosing
a section of the $G_d$-action as in \eqref{x1x2fix}, is by considering the configuration space
$C_\Gamma(\A^d)$ as a subspace of the quotient $X^{\BV_\Gamma}/G_d$ by the action of
translations and homotheties. One then applies the same iterated blowup construction
described before, but with the $\cG_\Gamma$ building set given by the images of the
diagonals $\Delta_\gamma$ in the quotient $X^{\BV_\Gamma}/G_d$.
\endproof

Now consider again the description of the wonderful compactifications $\overline{Conf}_\Gamma(X)$ as the closure \eqref{barconfincl} of $Conf_\Gamma(X)$ in the space 
\begin{equation}\label{prodBlgamma}
 \prod_{\Delta_\gamma \in \cG_\Gamma} Bl_{\Delta_\gamma} X^{\BV_\Gamma}. 
\end{equation}
By Lemma \ref{configs} we know that we can write $Conf_\Gamma(X)$ as the complement
\eqref{configs} of the diagonals $\Delta_\gamma$ with $\gamma\in \cG_\Gamma$. Then,
in order to describe the strata of the closure of $Conf_\Gamma(X)$ in \eqref{prodBlgamma},
we need to describe the datum over a point where different coordinates $x_v$ and $x_{v'}$, with $v,v'\in \BV_\gamma$ for some graph $\gamma \in \cG_\Gamma$, collide to the same value $x\in X$. Arguing as in \S 1 of \cite{fm}, we see that this datum consists of a collection $(\xi_v)$ of vectors in the tangent space $\bT_x =T_x(X)$, parameterized by the vertices 
$v\in \BV_\gamma$, such that not all coordinates $\xi_v$ are equal.
These data maintain the infinitesimal information on the tangent directions to the points $x_v$ when they collide. These data are defined only up to translation and homotheties, so that, in fact, they define a point $\xi=(\xi_v)_{v\in \BV_\gamma}$ in the projective space 
\begin{equation}\label{projspaceTx}
\xi \in \P(\bT_x^{\BV_\gamma}/\bT_x).
\end{equation}
Such a point $\xi$ is called a {\em screen} for $\gamma$, in the terminology of \cite{fm}.

We introduce the following notation that will be useful later.

\begin{Definition}\label{defGdeltaG}
Given a graph $\Gamma$ and a forest $\cT$ of nested subgraphs 
as in Definition \ref{rootedtrees}, with $\cG$-nest $\cN=\cN(\cT)$.
We denote by $\Gamma/\delta_\cN(\Gamma)$ the graph obtained
as the quotient 
\begin{equation}\label{GdeltaG}
\Gamma/\delta_{\cN}(\Gamma):=\Gamma // (\gamma_1 \cup \cdots \cup \gamma_r),
\end{equation}
for $\cN =\{ \gamma_1, \ldots, \gamma_r \}$. Similarly, for $\gamma$ an induced biconnected subgraph, $\gamma \in \cG_\Gamma$, we set
\begin{equation}\label{gdeltaG}
\gamma/\delta_{\cN}(\gamma):=\gamma // (\gamma_1 \cup \cdots \cup \gamma_k),
\end{equation}
where $\{ \gamma_1, \ldots, \gamma_k \}$ is the set of $\gamma_i \in \cN$ such that
$\gamma_i \subsetneq \gamma$.
\end{Definition}

We then have the following description of the open strata $X_\cN^\circ$. This is analogous
to what discussed in \S 2 of \cite{fm}. 

\begin{Theorem}\label{bundlestrata}
The open strata $X_\cN^\circ$ are fiber bundles over configuration spaces
$Conf_{\Gamma/\delta_{\cN}(\Gamma)}(X)$, where the fiber $\cF_{\cN}$ is obtained as
a succession of fiber bundles, one for each graph $\gamma$ decorating the
vertices of the forest of nested subgraphs $\cT$, with $\cN=\cN(\cT)$, 
where at each stage the fiber $\cF_\gamma$ is the open subvariety of the
space $\P(\bT_x^{\BV_{\gamma/\delta_{\cN}(\gamma)}}/\bT_x)$ of
screen configurations for the graph~$\gamma/\delta_{\cN}(\gamma)$,
which consist of the distinct labeled $|\BV_{\gamma/\delta_{\cN}(\gamma)}|$-tuples
of points in $\bT_x$ up to translations and homothety.
\end{Theorem}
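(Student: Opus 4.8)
The plan is to imitate \S 2 of \cite{fm}: analyze the local structure of the iterated blowup $\overline{Conf}_\Gamma(X)=Y_{n-1}$ of Proposition \ref{blowupYn} along the generic point of the stratum $X_\cN$, identify the infinitesimal (screen) data contributed by each stage of the blowup tower, and then cut down to the open stratum using the explicit description of $X_\cN^\circ$ in Proposition \ref{XNopen}.

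First I would recall that every exceptional divisor $E_\gamma$ appears, at the stage of \eqref{Yks} where the iterated dominant transform of $\Delta_\gamma$ is blown up, as the projectivized normal bundle of a smooth center inside a smooth ambient variety. Since $\Delta_\gamma\cong X^{\BV_{\Gamma/\gamma}}$ by Lemma \ref{diagslem1}, and, exactly as in \S 1 of \cite{fm}, its normal space at a point where the coordinates indexed by $\BV_\gamma$ all collide to a value $x\in X$ is the diagonal quotient $\bT_x^{\BV_\gamma}/\bT_x$, the fiber of $E_\gamma$ over such a point is the projective space $\P(\bT_x^{\BV_\gamma}/\bT_x)$ of screens for $\gamma$. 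I would then argue by induction over the forest $\cT$, visiting its vertices from the roots toward the leaves, which is the same as following the order of the blowups \eqref{Yks} from the largest to the smallest subgraphs. When the induction reaches the vertex decorated by a graph $\gamma$, the sub-clusters corresponding to the elements of $\cN$ strictly contained in $\gamma$ have already been contracted to single points at the deeper stages, so the remaining datum attached to $\gamma$ is a screen not for $\gamma$ but for the quotient graph $\gamma/\delta_\cN(\gamma)=\gamma//(\gamma_1\cup\cdots\cup\gamma_k)$ of Definition \ref{defGdeltaG}; that is, it is a point of $\P(\bT_x^{\BV_{\gamma/\delta_\cN(\gamma)}}/\bT_x)$, where now $x$ is the point of the screen already selected at the parent of $\gamma$ in $\cT$, or the position in $X$ of the corresponding cluster if $\gamma$ is a root. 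This dependence of $\bT_x$ on the choice made at the parent vertex is precisely why $\cF_\cN$ is assembled as a \emph{succession} of fiber bundles, one per forest vertex, rather than as a product; the base of the whole tower is $Conf_{\Gamma/\delta_\cN(\Gamma)}(X)$, the macroscopic cluster positions, which are non-degenerate along the edges of $\Gamma/\delta_\cN(\Gamma)$ precisely because $X_\cN\subseteq\overline{Conf}_\Gamma(X)$ and the corresponding diagonals of $\cG_\Gamma$ have been removed.

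To pass to the \emph{open} stratum I would invoke Proposition \ref{XNopen}: $X_\cN^\circ=X_\cN\smallsetminus\bigcup_{\cT':\cT=\cT'/e}X_{\cN(\cT')}$. Contracting such an edge $e$ of $\cT'$ amounts to replacing the screen at a forest vertex by one with a single extra $\cG_\Gamma$-factor, i.e. to allowing two of the screen-points of $\gamma/\delta_\cN(\gamma)$ to coincide (or, at a leaf, to the birth of a new infinitesimal collision). Hence deleting all the $X_{\cN(\cT')}$ is exactly deleting, in each screen space occurring in the tower, the locus where two labeled points collide; the stage-$\gamma$ fiber therefore becomes the open subvariety $\cF_\gamma\subseteq\P(\bT_x^{\BV_{\gamma/\delta_\cN(\gamma)}}/\bT_x)$ of distinct labeled $|\BV_{\gamma/\delta_\cN(\gamma)}|$-tuples of points of $\bT_x$ modulo translations and homothety, which upon choosing linear coordinates $\bT_x\cong\A^{\dim(X)}$ is the configuration space $C_{\gamma/\delta_\cN(\gamma)}(\A^{\dim(X)})$ of \eqref{CGammaAd}. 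Finally, local triviality of $X_\cN^\circ\to Conf_{\Gamma/\delta_\cN(\Gamma)}(X)$ follows because the projectivized normal bundle of a smooth subvariety is Zariski-locally trivial, and, restricting to an open set trivializing the screen bundle already built at a parent vertex, the next stage is again such a projectivized bundle; composing these gives the claimed fiber bundle with fiber the iterated bundle $\cF_\cN$ of screen configuration spaces.

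The step I expect to be the main obstacle is the inductive one: carrying the iterated dominant transforms through the blowup tower and checking that the quotient graph surfacing at the forest vertex $\gamma$ is \emph{precisely} $\gamma/\delta_\cN(\gamma)$, that the screen data at sibling vertices of $\cT$ are mutually independent while the datum at a child genuinely fibers over that at its parent, and that intersecting the divisors $E_{\gamma_{ij}}$ over a $\cG_\Gamma$-nest produces no unexpected degeneracy. This bookkeeping rests on Proposition 2.8 and Proposition 2.13 of \cite{li2} on the behaviour of dominant transforms under sequences of blowups, applied to the diagonals $\Delta_{\gamma_i}$, together with the transversality statements of Lemma \ref{disjunLem} and Proposition \ref{strataConf}, which guarantee that the tower of normal bundles splits in the way the fiber description predicts.
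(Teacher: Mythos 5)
Your proof is correct and follows essentially the same approach as the paper: both arguments carve out the open stratum via Proposition~\ref{XNopen}, imitate \S 1--2 of \cite{fm} to identify the fiber at each forest vertex $\gamma$ with the projectivized normal bundle $\P\bigl(N_{\Delta_\gamma\subset\cap_{\gamma'\in\cN:\gamma'\subsetneq\gamma}\Delta_{\gamma'}}\bigr)\simeq\P\bigl(\bT_x^{\BV_{\gamma/\delta_{\cN}(\gamma)}}/\bT_x\bigr)$, and the bookkeeping you flag as the main obstacle is exactly what the paper settles by an explicit dimension count. One small imprecision: the base point $x$ of $\bT_x$ is the point of $X$ to which the whole cluster has collapsed, and it is the \emph{same} for every vertex of a given tree of $\cT$, not ``the point of the screen already selected at the parent''; in the nested-screen picture of \cite{fm} all screens attached to one tree live in the single tangent space $T_xX$, the successive fibering reflecting the twisting of these projectivized normal bundles rather than a change of base point.
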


\proof The stratum $X_\cN$ associated to the $\cG$-nest $\cN=\cN(\cT)$ of a forest
$\cT$ of nested subgraphs is given by the intersection $E_{\gamma_1}\cap \cdots\cap
E_{\gamma_r}$ of the exceptional divisors associated to the graphs in 
the $\cG$-nest. Moreover, we have seen in 
Proposition \ref{XNopen} that the open stratum $X_\cN^\circ$ is obtained by
subtracting from $X_\cN$ all the $X_{\cN(\cT')}$ for all the forests $\cT'$ with
$\cT =\cT'/ e$, with the additional vertex of $\cT'$ decorated by a graph with
a single additional $\cG$-factor with respect to the one in the corresponding
vertex of $\cT$. Under the projection map $\pi: \overline{Conf}_\Gamma(X) \to
X^{\BV_\Gamma}$ of the iterated blowup construction, this corresponds to
subtracting from the intersection 
\begin{equation}\label{intersdiagN}
\Delta_{\gamma_1}\cap \cdots \cap \Delta_{\gamma_r}
\end{equation}
all the further intersections 
\begin{equation}\label{intersplusone}
 \Delta_{\gamma_1}\cap \cdots \cap \hat\Delta_{\gamma_j \cup \gamma_j'} \cap
\cdots \cap \Delta_{\gamma_r}, 
\end{equation}
where $\gamma_j'$ is an additional $\cG$-factor and, by Lemma \ref{lemsubgr}, 
$\hat\Delta_{\gamma_j \cup \gamma_j'}=\Delta_{\gamma_j}\cap \Delta_{\gamma_j'}$.
Upon identifying the diagonal $\Delta_\gamma$ of a biconnected graph $\gamma$
with the space $X^{\BV_{\Gamma/\gamma}}$, we can also identify the intersection
\eqref{intersdiagN} with the space 
$X^{\BV_{\Gamma//(\gamma_1 \cup\cdots \cup \gamma_r)}} = X^{\BV_{\Gamma/\delta_{\cN}(\Gamma)}}$. Now we need to check that subtracting the intersections \eqref{intersplusone}
amounts to considering the subspace $Conf_{\Gamma/\delta_{\cN}(\Gamma)}(X)$ inside
the product $X^{\BV_{\Gamma/\delta_{\cN}(\Gamma)}}$. By arguing as in Lemma \ref{configs},
we see that the complement of the union of the diagonals $\Delta_{\gamma_j'}$ as
above is the same as the complement of the union of the diagonals $\Delta_e$, with
$e$ ranging over the edges of the graph $\Gamma/\delta_{\cN}(\Gamma)$. This proves
that, under the map $\pi: \overline{Conf}_\Gamma(X) \to
X^{\BV_\Gamma}$ of the iterated blowup construction, the image of 
an open stratum $X_\cN^\circ$ can be identified with the configuration spaces 
$Conf_{\Gamma/\delta_{\cN}(\Gamma)}(X)$, for $\cN=\cN(\cT)$. We then check that
$\pi : X_{\cN(\cT)}^\circ \to Conf_{\Gamma/\delta_{\cT}(\Gamma)}(X)$ is a fiber bundle. 
In the iterated blowup construction of $\overline{Conf}_\Gamma(X)$, we have seen
that one progressively blows up diagonals $\Delta_\gamma$ in $\cG_\Gamma$ by
decreasing number of vertices. At each stage, when one of the $\Delta_{\gamma_j}$ is 
blown up, the exceptional divisor is the projectivized normal bundle 
\begin{equation}\label{normbundlegN}
 \P(N_{\Delta_\gamma \subset \cap_{\gamma'\in \cN : \gamma'\subsetneq \gamma} \Delta_{\gamma'} } ), 
\end{equation} 
or the projectivized normal bundle $\P(N_{\Delta_\gamma \subset X^{\BV_\Gamma}})$
when the set $\{ \gamma'\in \cN : \gamma'\subsetneq \gamma \}=\emptyset$. 
This projectivized normal bundle indeed carries the infinitesimal information about the 
degeneration, when points collide along the diagonal $\Delta_\gamma$ and can be
described, as in \S 1 of \cite{fm} in terms of screen configurations. In fact, first observe
that we can identify $\Delta_\gamma \simeq X^{\BV_{\Gamma/\gamma}}$ and similarly we
can identify
$$ \cap_{\gamma'\in \cN : \gamma'\subsetneq \gamma} \Delta_{\gamma'} \simeq
X^{\BV_{\Gamma//(\gamma_1\cup \cdots \cup \gamma_k)}} , $$
with $\{ \gamma_1, \ldots, \gamma_k \}=\{ \gamma'\in \cN :
\gamma' \subsetneq \gamma \}$. Thus, we have
$$ \P(N_{\Delta_\gamma \subset \cap_{\gamma'\in \cN : \gamma'\subsetneq \gamma} \Delta_{\gamma'} }) \simeq \P(T_x(X^{\BV_{\Gamma // (\gamma_1\cup \cdots \cup \gamma_k)}})
/T_x(X^{\BV_{\Gamma/\gamma}})). $$
Then observe that the dimension of this projectivized normal bundle is given by
$$ d (|\BV_\Gamma| - |\BV_{\gamma_1\cup \cdots \cup \gamma_k}| + b_0(\gamma_1\cup \cdots \cup \gamma_k)) - d(|\BV_\Gamma| - |\BV_\gamma| +1), $$
where $d=\dim X$. This is equal to
$$ d(|\BV_\gamma| -  |\BV_{\gamma_1\cup \cdots \cup \gamma_k}| + b_0(\gamma_1\cup \cdots \cup \gamma_k)) -1 ). $$
In fact, we can identify
$$ T_x(X^{\BV_{\Gamma//(\gamma_1\cup \cdots \cup \gamma_k)}})/T_x(X^{\BV_{\Gamma/\gamma}}) \simeq 
\bT_x^{\BV_{\gamma // (\gamma_1\cup \cdots \cup \gamma_k)}} / \bT_x , $$
so that we obtain
$$ \P(N_{\Delta_\gamma \subset \cap_{\gamma'\in \cN : \gamma'\subsetneq \gamma} \Delta_{\gamma'} }) \simeq \P(\bT_x^{\BV_{\gamma/\delta_{\cN}(\gamma)}}/\bT_x), $$ 
with the notation $\gamma/\delta_{\cN}(\gamma)$ as in \eqref{gdeltaG}.
The space $\P(\bT_x^{\BV_{\gamma/\delta_{\cN}(\gamma)}}/\bT_x)$
is exactly the space parameterizing the screen configurations of $\gamma/\delta_{\cN}(\gamma)$ described earlier (see \cite{fm}, \S 1).

Similarly, in the case of the projectivized normal bundle $\P(N_{\Delta_\gamma \subset X^{\BV_\Gamma}})$,  the identification $\Delta_\gamma \simeq X^{\BV_{\Gamma/\gamma}}$, together
with the fact that $|\BV_\Gamma| -|\BV_\gamma|+1=|\BV_{\Gamma/\gamma}|$ gives
at the level of tangent spaces
\begin{equation}\label{identifTxN}
 \bT_x^{\BV_\gamma}/\bT_x \simeq \bT_x^{\BV_\Gamma} /\bT_x^{\BV_{\Gamma/\gamma}} 
\simeq T_x(X^{\BV_\Gamma})/T_x(\Delta_\gamma) \simeq N(\Delta_\gamma \subset X^{\BV_\Gamma}), 
\end{equation}
where $\bT_x=T_x(X)$. Thus, we can identify the projectivization $\P(N(\Delta_\gamma \subset X^{\BV_\Gamma}))$ with the projectivization 
\begin{equation}\label{ProjNeq}
\P(N(\Delta_\gamma \subset X^{\BV_\Gamma}))\simeq \P(\bT_x^{\BV_\gamma}/\bT_x).
\end{equation}
This again is the space parameterizing the screen configurations of $\gamma$.

One can argue as in the proof of Proposition 2.1 of \cite{fm} and identify $\cF_\gamma$ with
the subspace of this space of screen configurations $\P(\bT_x^{\BV_{\gamma/\delta_{\cN}(\gamma)}}/\bT_x)$ (or $\P(\bT_x^{\BV_\gamma}/\bT_x)$) that corresponds to the 
 distinct labeled tuples of points in $\bT_x$ up to translations and homothety.
\endproof

\begin{Corollary}\label{CGammaAdFiber}
In the fiber $\cF_\cN$ of the bundle $\pi: X_\cN^\circ \to Conf_{\Gamma/\delta_{\cN}(\Gamma)}(X)$,
each $\cF_\gamma$ as in Theorem \ref{bundlestrata} is isomorphic to the configuration space 
$C_{\gamma/\delta_{\cN}(\gamma)}(\A^d)$, with $d=\dim(X)$, 
defined as in \eqref{CGammaAd}.
\end{Corollary}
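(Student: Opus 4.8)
The proof will be essentially an unwinding of the screen-space description of $\cF_\gamma$ obtained inside the proof of Theorem \ref{bundlestrata}, combined with the definition \eqref{CGammaAd} of $C_\Gamma(\A^d)$. Write $H=\gamma/\delta_{\cN}(\gamma)$. First I would fix a linear isomorphism $\bT_x\cong \A^d$; the resulting identifications are canonical up to the action of $GL_d(\C)$, which is harmless since we only claim an abstract isomorphism of varieties. Under it $\bT_x^{\BV_H}\cong (\A^d)^{\BV_H}$, the diagonal $\bT_x\hookrightarrow \bT_x^{\BV_H}$ becomes the small diagonal $\A^d\hookrightarrow (\A^d)^{\BV_H}$, and I would check
\[
\P\big(\bT_x^{\BV_H}/\bT_x\big)\ \cong\ \big((\A^d)^{\BV_H}\smallsetminus \GD_{\mathrm{tot}}\big)\big/ G_d ,
\]
where $\GD_{\mathrm{tot}}=\{(\xi_v)\mid \xi_v=\xi_{v'}\ \forall\, v,v'\}$ and $G_d$ acts by $\xi_v\mapsto \lambda\xi_v+\eta$: quotienting $(\A^d)^{\BV_H}$ by the translation subgroup gives the vector space $\bT_x^{\BV_H}/\bT_x$, with $\GD_{\mathrm{tot}}$ going to the origin, and quotienting the result by the homotheties $\bG_m$ produces the projectivization. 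The action of $G_d$ on $(\A^d)^{\BV_H}\smallsetminus\GD_{\mathrm{tot}}$ is free (if $\lambda\xi_v+\eta=\xi_v$ for all $v$ with the $\xi_v$ not all equal, then $\lambda=1$ and $\eta=0$), so this quotient is geometric.

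Next I would observe that $H$ is a genuine graph with at least one edge: $\gamma$ is biconnected and the subgraphs $\gamma_i\in\cN$ contracted in $H$ are proper induced biconnected subgraphs of $\gamma$, so, exactly as in the discussion of the quotients $\Gamma//\gamma$ for induced $\gamma$, no looping edge is created, and at least one edge of $\gamma$ survives; multiple edges, if any, are irrelevant by Remark \ref{multedgerem}. Consequently $\GD_{\mathrm{tot}}\subseteq \GD_e$ for any $e\in\BE_H$, so that $Conf_H(\A^d)=(\A^d)^{\BV_H}\smallsetminus\bigcup_{e\in\BE_H}\GD_e$ is contained in $(\A^d)^{\BV_H}\smallsetminus\GD_{\mathrm{tot}}$, and it is $G_d$-invariant because translations and homotheties preserve each diagonal $\GD_e$. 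Its image under the geometric quotient map is therefore an open subvariety isomorphic to $Conf_H(\A^d)/G_d=C_H(\A^d)$.

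It remains to identify this image with $\cF_\gamma$. By Theorem \ref{bundlestrata}, $\cF_\gamma$ is the open subvariety of $\P(\bT_x^{\BV_H}/\bT_x)$ consisting of those screen configurations for $H$ whose underlying tuples of points in $\bT_x$ are non-degenerate along the edges of $H$; under the identification of the first paragraph this is precisely the image of $Conf_H(\A^d)$. (Equivalently, one may pass to the section \eqref{x1x2fix}, which presents $C_H(\A^d)$ as the locus of screens with two adjacent coordinates normalized to $(0,\dots,0)$ and $(1,0,\dots,0)$, i.e.\ as an affine chart of $\P(\bT_x^{\BV_H}/\bT_x)$ cut out by removing the $\GD_e$.) This gives $\cF_\gamma\cong C_H(\A^d)=C_{\gamma/\delta_{\cN}(\gamma)}(\A^d)$, as claimed.

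The whole argument is almost entirely formal; the one point that requires genuine care is the quotient identification in the first paragraph — reconciling ``projectivize the translation-quotient'' with ``quotient by the full group $G_d$ of translations and homotheties'', and verifying freeness of the $G_d$-action away from the small diagonal so that this quotient is geometric rather than merely categorical. Once that is settled, the rest is a matter of matching the non-degeneracy conditions, which has effectively been carried out already in the proof of Theorem \ref{bundlestrata}.
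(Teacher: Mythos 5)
Your proof is correct and follows essentially the same route as the paper's one-paragraph argument: identify $\bT_x$ with $\A^d$, recognize the $G_d$-action of translations and homotheties as the screen-configuration identifications, conclude that $\P(\bT_x^{\BV_H}/\bT_x)$ is the $G_d$-quotient of $(\A^d)^{\BV_H}$ away from the small diagonal, and cut out the $C_H(\A^d)$ locus by the non-degeneracy conditions. Where you add value is in the two points you flag yourself: you verify that the two-step quotient (translations first, then $\bG_m$) agrees with the single $G_d$-quotient and that the $G_d$-action is free off $\GD_{\mathrm{tot}}$ so the quotient is geometric, and you spell out why $H=\gamma/\delta_\cN(\gamma)$ is a genuine graph with at least one edge (so that the quotient is not degenerate). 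One small terminological remark: both Theorem \ref{bundlestrata} and the paper's proof of the corollary describe $\cF_\gamma$ as consisting of ``distinct labeled tuples,'' whereas your reading --- tuples non-degenerate along the \emph{edges} of $H$ --- is the one that actually matches $C_H(\A^d)$ when $H$ is not the complete graph; your phrasing is the accurate one, and it follows from the construction of the open strata $X_\cN^\circ$ in Proposition \ref{XNopen}, where one removes exactly the degenerations along the $\cG_\Gamma$-factors, i.e.\ along biconnected $\gamma'\subsetneq\gamma$, the minimal of which are single edges.
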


\proof One can identify the tangent space $\bT_x$ with a copy of the affine space $\A^d$. Then the action on $\A^d$ of the group $G_d$ of translations and homothety corresponds to the identifications on $\bT_x$ that describe screen configurations. Thus, for a given graph 
$\gamma$, the projective space $\P(\bT_x^{\BV_{\gamma/\delta_{\cN}(\gamma)}}/\bT_x)$ can be identified with the
quotient $\A^{d |\BV_{\gamma/\delta_{\cN}(\gamma)}|}/G_d$, which contains the configuration space $C_{\gamma/\delta_{\cN}(\gamma)}(\A^d)$. Moreover, the latter describes precisely
those screen configurations that consist of  distinct labeled  $|\BV_{\gamma/\delta_{\cN}(\gamma)}|$-tuples of points in $\bT_x$ up to translations and homothety.
\endproof

\begin{Corollary}\label{XPdcase}
In the case where the variety $X$ is a projective space $\P^d$, the stratum
$X_\cN^\circ$ contains a subspace (non-canonically) isomorphic to $\overline{C}_{\Gamma/\delta_{\cN}(\Gamma)}(\A^d)$.
\end{Corollary}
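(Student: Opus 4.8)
The plan is to combine the fiber-bundle description of $X_\cN^\circ$ from Theorem \ref{bundlestrata} with the iterated-blowup construction of the spaces $\overline{C}_\bullet(\A^d)$ of Lemma \ref{CGAdLem}, exploiting the two features of $\P^d$ that matter here: it is homogeneous under $PGL_{d+1}$, and it contains an affine chart $\A^d$ over which the tangent bundle, hence every screen-configuration bundle occurring in the construction, is trivial. Set $\Gamma' := \Gamma/\delta_\cN(\Gamma)$ and $d := \dim X$. By Theorem \ref{bundlestrata} and Corollary \ref{CGammaAdFiber} we have $\pi : X_\cN^\circ \to Conf_{\Gamma'}(\P^d)$, a fiber bundle whose fiber $\cF_\cN$ is the iterated bundle with successive fibers the open screen-configuration spaces $\cF_\gamma \cong C_{\gamma/\delta_\cN(\gamma)}(\A^d) \subset \P(\bT_x^{\BV_{\gamma/\delta_\cN(\gamma)}}/\bT_x)$; by Lemma \ref{CGAdLem}, $\overline{C}_{\Gamma'}(\A^d)$ is obtained from the single screen space $\P(\bT_x^{\BV_{\Gamma'}}/\bT_x) \cong \A^{d|\BV_{\Gamma'}|}/G_d$ by blowing up, in order of decreasing number of vertices, the images of the diagonals $\Delta_\beta$ for $\beta \subseteq \Gamma'$ induced biconnected.

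I would then build the embedding $\overline{C}_{\Gamma'}(\A^d) \hookrightarrow X_\cN^\circ$ as follows. Fix a point $x_0 \in \P^d$, an identification $\bT_{x_0} \cong \A^d$, and, for a chosen edge of $\Gamma'$, a section of the $G_d$-action as in \eqref{x1x2fix}; these are the ``non-canonical'' choices of the statement. They realize the rigidified open configuration space $C_{\Gamma'}(\A^d)$ as a locally closed subvariety of the base $Conf_{\Gamma'}(\A^d) \subset Conf_{\Gamma'}(\P^d)$; trivializing $\pi$ over it and picking a compatible point of each screen fiber $\cF_\gamma$ lifts it to a locally closed subvariety $C_{\Gamma'}(\A^d) \hookrightarrow X_\cN^\circ$. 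It remains to show that its closure inside $X_\cN^\circ$ is $\overline{C}_{\Gamma'}(\A^d)$. The idea is that the boundary of $\overline{C}_{\Gamma'}(\A^d)$ over a sub-diagonal $\Delta_\beta$, which in one affine model of the screen space records the $\beta$-vertices becoming infinitesimally close, records in the complementary model (after a $G_d$-rescaling, invisible in the quotient) the complementary vertices escaping to the hyperplane at infinity of $\P^d$; such a limiting configuration is still a non-degenerate point of $Conf_{\Gamma'}(\P^d)$, so it does \emph{not} leave the open stratum $X_\cN^\circ$. Matching these degenerations with the iterated blowup \eqref{Yks} of $\overline{Conf}_\Gamma(\P^d)$ — the relevant centers being the iterated dominant transforms of the $\Delta_\delta$, $\delta \in \cG_\Gamma$, whose image in $\Gamma'$ is induced biconnected — and invoking Proposition \ref{blowupYn} together with Lemma \ref{CGAdLem} should identify this closure with $\overline{C}_{\Gamma'}(\A^d)$, center for center.

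The hard part is precisely this last matching, together with the verification that it can be carried out coherently for all boundary strata of $\overline{C}_{\Gamma'}(\A^d)$ at once and compatibly with the tower structure of $\cF_\cN$, so that the entire compactified space — boundary divisors and their intersections included — stays inside the \emph{open} stratum $X_\cN^\circ$ rather than meeting a deeper stratum $X_{\cN'}^\circ$. This rests on the transversality and clean-intersection statements of Lemmas \ref{disjunLem} and \ref{lemsubgr}, the combinatorial description of $\cG_\Gamma$-nests and of the open strata in Propositions \ref{cGnests} and \ref{XNopen}, and the compatibility of (iterated) dominant transforms with restriction to smooth subvarieties (Proposition 2.8 of \cite{li2}).
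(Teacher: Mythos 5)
Your route is genuinely different from the paper's, and it is worth spelling out how. The paper's proof, after placing $C_{\Gamma'}(\A^d)$ (with $\Gamma'=\Gamma/\delta_{\cN}(\Gamma)$) into the base $Conf_{\Gamma'}(\P^d)$ via the section \eqref{x1x2fix}, simply considers the restriction of the bundle $\pi$ over that slice — the full tower of screen-configuration fibers $\cF_\cN$ over $C_{\Gamma'}(\A^d)$ — and appeals to Lemma \ref{CGAdLem} to see $\overline{C}_{\Gamma'}(\A^d)$ there. No closure is taken, and no escape-to-infinity picture appears; the compactifying data is taken to live in the fibers $\cF_\gamma$. You, by contrast, take a \emph{one-point-per-fiber section} of $\pi$ over $C_{\Gamma'}(\A^d)$, i.e.\ a locally closed copy of $C_{\Gamma'}(\A^d)$ of the correct dimension $d|\BV_{\Gamma'}|-d-1$, and then propose to recover the compactification by taking the closure inside $X_\cN^\circ$, with the boundary supplied by configurations whose non-coalescing vertices run off to the hyperplane at infinity of $\P^d$. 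Your dimension bookkeeping is actually the cleaner one: $\pi^{-1}(C_{\Gamma'}(\A^d))$ has dimension $\dim C_{\Gamma'}(\A^d)+\dim\cF_\cN$, which exceeds $\dim\overline{C}_{\Gamma'}(\A^d)=\dim C_{\Gamma'}(\A^d)$ whenever $\cF_\cN$ is positive-dimensional, so a slice (not the whole preimage) is what one should be looking at.

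That said, there is a real gap in your argument, and it is exactly the step you flag as ``the hard part.'' The rescaling picture shows that, modulo $G_d$, the coalescence of $\beta$-vertices is equivalent to the remaining vertices of $\Gamma'$ receding; but in $\P^d$ all of the receding points accumulate on the single hyperplane at infinity $H=\P^d\smallsetminus\A^d$, and nothing you have said prevents two of them — or one of them and (after recentring) a surviving $\beta$-vertex — from limiting to the \emph{same} point of $H$. If that happens the limit lies on a diagonal $\Delta_e$, hence outside $Conf_{\Gamma'}(\P^d)$, and the closure leaves $X_\cN^\circ$. Ruling this out requires either a genericity argument in the choice of the non-canonical data (point $x_0$, trivialization of $\bT_{x_0}$, the $G_d$-section, and the fiber-section of $\pi$) or a direct local description of the limit in terms of the iterated dominant transforms, i.e.\ precisely the ``center-for-center matching'' you defer. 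Moreover, even once the closure is shown to stay inside $X_\cN^\circ$, identifying it with the wonderful blowup $\overline{C}_{\Gamma'}(\A^d)$ is not a consequence of merely citing Lemmas \ref{disjunLem}, \ref{lemsubgr} and Propositions \ref{cGnests}, \ref{XNopen}: one has to check that the blowup centers of $\overline{C}_{\Gamma'}(\A^d)$ (the diagonals of $\Gamma'$) are realized, under your trivialization, by the corresponding loci $E_\gamma\cap X_\cN^\circ$ inside the ambient tower \eqref{Yks}, and that the screen data agree at every stage. As written, that verification is asserted rather than carried out.

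Finally, a small point of hygiene that affects both your write-up and the statement you are relying on: \eqref{x1x2fix} as printed over-constrains the $G_d$-action. Fixing $x_{v_1}=0$ uses up the $d$ translation parameters and fixing the full vector $x_{v_2}=(1,0,\dots,0)$ imposes $d$ further conditions, whereas only the single homothety parameter remains; the section should fix one coordinate of $x_{v_2}-x_{v_1}$, not the whole vector. This does not change the strategy, but it does affect which open subset of $C_{\Gamma'}(\A^d)$ your chosen slice actually covers, and hence what the closure can possibly be.
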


\proof When $X=\P^d$, inside $Conf_{\Gamma/\delta_{\cN}(\Gamma)}(X)=\pi(X_{\cN}^\circ)$ we have a copy of $Conf_{\Gamma/\delta_{\cN}(\Gamma)}(\A^d) \subset Conf_{\Gamma/\delta_{\cN}(\Gamma)}(\P^d)$. Moreover, by the (non-canonical) choice of a section as in  \eqref{x1x2fix},
we can identify inside this  $Conf_{\Gamma/\delta_{\cN}(\Gamma)}(\A^d)$ a subspace
isomorphic to $C_{\Gamma/\delta_{\cN}(\Gamma)}(\A^d)$. Then the fiber of the map $\pi$
above this space is still given by the screen configurations of the graphs $\gamma$ in the forest
$\cT$ of the $\cG_\Gamma$-nest $\cN$, as in Theorem \ref{bundlestrata}, which, by 
Lemma \ref{CGAdLem}, give the ``compactification" 
$\overline{C}_{\Gamma/\delta_{\cN}(\Gamma)}(\A^d)$. 
\endproof

\section{Motives of configuration spaces}

In the momentum space description, one considers the complement
of a graph hypersurface in a projective space or in a toric variety obtained
as an iterated blowup of projective space \cite{BEK}, \cite{BK}. In an
equivalent reformulation of the momentum space integrals given in
\cite{AluMa3}, one considers a divisor in the complement of a determinant
hypersurface. In all of these cases, one has an ambient space whose
motive can be explicitly described as a mixed Tate motive, while the
hypersurface complement in \cite{BEK}, \cite{BK}, or the intersection 
of the divisor with the hypersurface complement in \cite{AluMa3},
become the loci about which one wants to understand whether they
are motivically mixed Tate or not. 

We consider here first the motive of the ambient space, which
in the configuration treatment is given by the iterated blowup
$\overline{Conf}_\Gamma(X)$ we described in the previous section.
We give an explicit description of the associated motive, based on
the results of L.~Li \cite{li} on motives of wonderful compactifications.

\smallskip

\subsection{Chow motives of configuration spaces}

We state here a first result assuming that $X$ is a smooth {\em projective} variety.
In this case, we can work in the category of Chow motives, and rely directly on the
result of \cite{li}.

The main ingredient that is used in \cite{li} to compute the Chow motive of
the wonderful compactifications is a blowup formula for motives, which
follows from \S 9 of \cite{Man}, and is also proved in Theorem A.2 of \cite{li}.
Namely, if 
$f: \widetilde{Y} \to Y$ is the blow-up of a smooth projective variety $Y$ along
a non-singular closed subvariety $V \subset Y$, then $h(\widetilde{Y})$ is canonically 
isomorphic to 
\begin{equation}\label{Chowblowup}
h(\widetilde{Y}) \cong h(Y) \oplus \bigoplus_{k=1}^{\codim_Y(V)-1} h(V)(k)
\end{equation}
in the category of Chow motives.

Here one uses the standard notation for Chow motives, written as triples $(X,p,m)$
of a variety $X$, a projector $p$, and an integer $m$, where for a smooth
projective varieties $X$ one writes the corresponding motive as $h(X)=(X,id_X,0)$,
and its Tate twists by $h(X)(\ell)=(X,id_X,\ell)$.

We can then obtain the explicit formula for the Chow motive of the
compactifications $\konj{Conf}_\GG(X)$ directly from the main formula 
of \cite{li} for the Chow motive of all the wonderful compactifications. 
We first introduce the following notation. Given a $\cG_\Gamma$-nest $\cN$, and a
biconnected induced subgraph $\gamma$ such that $\cN' =\cN \cup \{ \gamma \}$
is still a $\cG_\Gamma$-nest, we set
\begin{equation}\label{rgamma}
r_\gamma= r_{\gamma,\cN} := \dim (\cap_{\gamma'\in \cN: \gamma'\subset \gamma} \Delta_{\gamma'})-\dim \Delta_\gamma ,
\end{equation}
\begin{equation}\label{MN}
M_\cN:=\{ (\mu_\gamma)_{\Delta_\gamma \in \cG_\Gamma} \,:\, 1\leq \mu_\gamma \leq r_\gamma -1 , \,\, \mu_\gamma \in \Z \}, 
\end{equation}
\begin{equation}\label{normmu}
\| \mu \| := \sum_{\Delta_\gamma \in \cG_\Gamma} \mu_\gamma .
\end{equation}
These agree with the notation used in \cite{li}. We then have the following result.

\begin{Proposition} \label{thm_chow}
Let $X$ be a smooth projective variety of dimension $d$. Then
the Chow motive  of $\konj{Conf}_\GG(X)$ is given by 
\begin{equation}\label{Chowmot}
h(\konj{Conf}_\GG(X)) = h(X^{\BV_\GG}) \oplus \bigoplus_{\cN \in \cG_\Gamma\text{-nests}}\,\,\,\,
\bigoplus_{\mu \in M_\cN} h(X^{\BV_{\Gamma/\delta_\cN(\Gamma)}}) (\|\mu\|).
\end{equation}
\end{Proposition}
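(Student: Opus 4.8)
The plan is to obtain \eqref{Chowmot} directly from L.~Li's general formula for the Chow motive of a wonderful compactification in \cite{li}, by recognizing $\konj{Conf}_\GG(X)$ as an instance of that construction and then making the combinatorial input of Li's formula explicit in graph-theoretic terms. First I would recall from Section~2 that $\cS_\Gamma$ of \eqref{cSGamma} is a simple arrangement (Lemma \ref{SsetDeltas}), that $\cG_\Gamma$ of \eqref{cGGamma} is a building set for it (Proposition \ref{biconnGset}), and that $\konj{Conf}_\GG(X)=X^{\BV_\Gamma}_{\cG_\Gamma}$ is the associated wonderful compactification with the iterated blowup presentation \eqref{Yks} (Lemma \ref{configs} and Proposition \ref{blowupYn}). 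Since $X$ is smooth projective, Lemma \ref{diagslem1} shows that $X^{\BV_\Gamma}$, every diagonal $\Delta_\gamma$, every (poly)diagonal $\hat\Delta_\gamma$, and all their mutual intersections are smooth projective varieties --- they are even cartesian powers of $X$ --- so the hypotheses of \cite{li}, whose proof iterates the blowup formula \eqref{Chowblowup} along the tower \eqref{Yks}, are satisfied without change.

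Next I would write Li's formula in the shape $h(Y_\cG)=h(Y)\oplus\bigoplus_{\cN}\bigoplus_{\mu\in M_\cN}h\bigl(\bigcap_{\gamma\in\cN}\Delta_\gamma\bigr)(\|\mu\|)$, the sum running over nonempty $\cG_\Gamma$-nests, with $r_\gamma$, $M_\cN$, $\|\mu\|$ the quantities of \eqref{rgamma}, \eqref{MN}, \eqref{normmu} (which the text has already arranged to match Li's notation) and the summand $h(X^{\BV_\Gamma})$ corresponding to the empty nest. To specialize this I need two ingredients that are already available: Proposition \ref{cGnests}, which describes the $\cG_\Gamma$-nests as sets of biconnected induced subgraphs that are pairwise disjoint, or meet at a single vertex, or nested; and the identification of the center $\bigcap_{\gamma\in\cN}\Delta_\gamma$ of such a nest $\cN=\{\gamma_1,\dots,\gamma_r\}$. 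For the latter, since each $\gamma_i$ is connected, Lemma \ref{diagslem1} gives $\Delta_{\gamma_i}=\hat\Delta_{\gamma_i}=\bigcap_{e\in\BE_{\gamma_i}}\hat\Delta_e$, so that $\bigcap_i\Delta_{\gamma_i}$ is the intersection of the $\hat\Delta_e$ over all edges of $\gamma_1\cup\cdots\cup\gamma_r$, i.e.\ $\hat\Delta_{\gamma_1\cup\cdots\cup\gamma_r}$; and, arguing as in the proof of Lemma \ref{diagslem1} (whose identification of a $\hat\Delta$ with a power of $X$ only uses that points attached to vertices in one connected component coincide), this equals $X^{\BV_{\Gamma//(\gamma_1\cup\cdots\cup\gamma_r)}}=X^{\BV_{\Gamma/\delta_\cN(\Gamma)}}$ in the notation of Definition \ref{defGdeltaG}. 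Substituting this into Li's formula produces \eqref{Chowmot}.

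The part with genuine content --- everything else being the dictionary between graphs and diagonals --- is checking that the data feeding Li's formula really specialize as claimed: that the center attached to a nest is $X^{\BV_{\Gamma/\delta_\cN(\Gamma)}}$ (sketched above), and that $r_\gamma$ of \eqref{rgamma} is precisely the codimension, at the stage of the blowup tower \eqref{Yks} where the dominant transform of $\Delta_\gamma$ is blown up, of that transform inside the dominant transform of $\bigcap_{\gamma'\in\cN,\ \gamma'\subsetneq\gamma}\Delta_{\gamma'}$; this is where the transversality statements of Lemmas \ref{disjunLem} and \ref{lemsubgr} enter, ensuring that the relevant iterated intersections behave well and that these codimensions are the naive dimension differences appearing in \eqref{rgamma}. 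I expect this bookkeeping to be the main (though essentially routine) obstacle. If one preferred an argument not invoking \cite{li}, the alternative would be to induct directly on the tower \eqref{Yks}, applying \eqref{Chowblowup} at each blowup and collecting the resulting Tate twists; the labeling of the summands by pairs $(\cN,\mu)$ is then the delicate point, and it amounts to reproducing Li's combinatorial analysis.
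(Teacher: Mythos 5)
Your proposal follows the same route as the paper: invoke Theorem 3.1 of \cite{li} (which iterates the blowup formula \eqref{Chowblowup} down the tower \eqref{Yks}), and then reduce the problem to identifying the center $\pi(X_\cN)=\bigcap_{\gamma\in\cN}\Delta_\gamma$ of a $\cG_\Gamma$-nest with the cartesian power $X^{\BV_{\Gamma/\delta_\cN(\Gamma)}}$, which is exactly the one check the paper's proof carries out. Your additional remarks about smoothness/projectivity of the centers and the role of the transversality lemmas in matching $r_\gamma$ to the codimensions in Li's formula are correct and supply detail the paper leaves implicit, but they do not change the underlying argument.
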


\proof The result is a direct consequence of Theorem 3.1 of \cite{li}, which is proved as 
a downward induction on the tower of iterated blowups describing $\konj{Conf}_\GG(X)$,
where at each stage one applies the blowup formula \eqref{Chowblowup}. The only
thing we need to check to match \eqref{Chowmot} to the formula of \cite{li} is that the
motives involved in the second summation are indeed the 
$h(X^{\BV_{\Gamma/\delta_\cN(\Gamma)}})$. 
In Li's formulation, if we denote by $\pi: \konj{Conf}_\GG(X) \to X^{\BV_\Gamma}$ the map
of the iterated blowup, we have in the formula for the Chow motive of a wonderful
compactification $Y_{\cG}$ a sum over $\cG$-nests and, for each $\cG$-nest $\cN$ 
a sum over $\mu$ of $\|\mu\|$-twisted copies of the motive $h(\pi(X_\cN))$, where,
with our notation, $X_\cN=\cap_{\gamma\in \cN} E_\gamma$. To see that $\pi(X_\cN)$
is indeed isomorphic to $X^{\BV_{\Gamma/\delta_\cN(\Gamma)}}$, then notice that
the (poly)diagonal $\pi(\cap_{\gamma\in \cN} E_\gamma)= \cap_{\gamma\in \cN} \Delta_\gamma$ corresponds to identifying the coordinates $x_v$ of all vertices in each connected component
of the graph $\gamma_1 \cup \cdots \cup \gamma_N$, where $\cN=\{ \gamma_1, \ldots, \gamma_N\}$. Thus, we can identify $\pi(\cap_{\gamma\in \cN} E_\gamma)$ with the
space $X^{\BV_{\Gamma/\delta_\cN(\Gamma)}}$, where
$\Gamma/\delta_\cN(\Gamma)= \Gamma //(\gamma_1 \cup \cdots \cup \gamma_N)$.
\endproof

\subsection{Voevodsky motives and the quasi-projective case}

We now extend the result of Proposition \ref{thm_chow} to the case of 
smooth quasi-projective varieties. In this case we can no longer work
with Chow motives, but we need mixed motives in the sense of 
Voevodsky \cite{voe}. The argument, however, is entirely similar,
after one replaces the blowup formula \eqref{Chowblowup} for
Chow motives with the analogous blow-up formula 
for motives in the Voevodsky category. We write here $m(X)$ for the motive in the
Voevodsky category. This corresponds to the notation $M_{gm}$ of \cite{voe}.

Then the blowup formula we need is the one proved in Proposition 3.5.3 of \cite{voe}.
If $f: \widetilde{Y} \to Y$ is the blow-up of a smooth scheme $Y$ along
a smooth closed subscheme $V \subset Y$, then $m(\widetilde{Y})$ is canonically 
isomorphic to 
\begin{equation}\label{Voevblowup}
m(\widetilde{Y}) \cong m(Y) \oplus \bigoplus_{k=1}^{\codim_Y(V)-1} m(V)(k)[2k]
\end{equation}
in the category of Voevodsky's motives.
Here $[-]$ denotes the shift in the triangulated category of mixed motives, while $(-)$ is,
as before, the Tate twist.

As before, we let $Y^{(k)}$ denote the
iterated blowups of $X^{\BV_\Gamma}$ as in \cite{li2}, 
with the wonderful ``compactification" $\konj{Conf}_\GG(X) = Y^{(N)}$,
where $\cG_\Gamma=\{ \gamma_1, \ldots, \gamma_N \}$, ordered, as
before, in such a way that whenever $\gamma_i \supseteq \gamma_j$ the
corresponding indices are ordered by $i\leq j$.

We first introduce the following notation.  For a given $\cG_\Gamma$-nest $\cN$, 
let $X^{(k)}_\cN$ denote the intersection $X^{(k)}_\cN= \cap_{\gamma \in \cN} 
E_\gamma^{(k)}$, where we denote by $E_\gamma^{(k)}$ the iterated dominant
transform in $Y^{(k)}$ of $\Delta_\gamma$. 

\begin{Proposition}\label{thmMTvoe}
Let $X$ be a quasi-projective smooth variety of dimension $d$. 
If $\cN$ is a $\cG_\Gamma$-nest with $\cN \subseteq \{ \gamma_{k+2}, \ldots, \gamma_N \}$,
with the property that $\cN'=\cN \cup \{ \gamma_{k+1} \}$ is also a $\cG_\Gamma$-nest, then
the Voevodsky motives of the subvarieties $X^{(k)}_\cN$ in the
iterated blowup $Y^{(k)}$ of $X^{\BV_\Gamma}$
satisfy the recursion formula
\begin{equation}\label{recXNk}
m(X_\cN^{(k+1)}) =  m(X_\cN^{(k)})  \oplus  \bigoplus_{\ell =1}^{r_{k,\cN}-1} m(X_{\cN'}^{(k)}) (\ell)[2\ell],
\end{equation}
where the codimension $r_{k,\cN}$ is given by
$r_{k,\cN}= \dim( \cap_{\gamma\in \cN: \gamma \subset \gamma_{k+1}} \Delta_\gamma )
-\dim \Delta_{\gamma_{k+1}}$
when $\{ \gamma\in \cN: \gamma \subsetneq \gamma_{k+1} \} \neq \emptyset$ and by
$r_{k,\cN}= d |\BV_\Gamma| - \dim \Delta_{\gamma_{k+1}}$
otherwise.
\end{Proposition}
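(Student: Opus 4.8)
The plan is to identify $X^{(k+1)}_\cN \subseteq Y^{(k+1)}$ as a blowup of $X^{(k)}_\cN \subseteq Y^{(k)}$ and then apply the Voevodsky blowup formula \eqref{Voevblowup}. Recall that $Y^{(k+1)}$ is the blowup of $Y^{(k)}$ along the iterated dominant transform $\Delta_{\gamma_{k+1}}^{(k)}$ of $\Delta_{\gamma_{k+1}}$, and that $X^{(k)}_\cN = \cap_{\gamma\in\cN} E_\gamma^{(k)}$. First I would observe that since $\cN \subseteq \{\gamma_{k+2},\ldots,\gamma_N\}$, each $E_\gamma^{(k)}$ with $\gamma\in\cN$ is (at stage $k$) still an honest diagonal subvariety and has not yet been blown up; only the later blowups (for indices $>k$) will create the exceptional divisors $E_\gamma$, and these are built as iterated dominant transforms of the $X^{(k)}_\cN$. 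So $X^{(k)}_\cN$ is a smooth subvariety of $Y^{(k)}$, namely the iterated dominant transform of $\cap_{\gamma\in\cN}\Delta_\gamma$, which by Lemma~\ref{lemsubgr} and the transversality statements of Lemmas~\ref{disjunLem} and \ref{lemsubgr} is smooth of the expected dimension.

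Next I would determine how $X^{(k)}_\cN$ sits relative to the center $\Delta_{\gamma_{k+1}}^{(k)}$. Since $\cN' = \cN\cup\{\gamma_{k+1}\}$ is a $\cG_\Gamma$-nest, for every $\gamma\in\cN$ one of the nest conditions of Proposition~\ref{cGnests} holds between $\gamma$ and $\gamma_{k+1}$. Using this one checks that $X^{(k)}_\cN \cap \Delta_{\gamma_{k+1}}^{(k)}$ equals the iterated dominant transform of $\cap_{\gamma\in\cN}\Delta_\gamma \cap \Delta_{\gamma_{k+1}}$, which is precisely $X^{(k)}_{\cN'}$, and moreover that $X^{(k)}_{\cN'}$ has codimension $r_{k,\cN}$ inside $X^{(k)}_\cN$: indeed $r_{k,\cN} = \dim(\cap_{\gamma\in\cN:\gamma\subsetneq\gamma_{k+1}}\Delta_\gamma) - \dim\Delta_{\gamma_{k+1}}$ (or $d|\BV_\Gamma| - \dim\Delta_{\gamma_{k+1}}$ when that set is empty), which matches the codimension of $\Delta_{\gamma_{k+1}}^{(k)}$ inside the smallest diagonal containing it that has already been blown up. This is exactly the quantity $r_\gamma$ of \eqref{rgamma}, restricted appropriately. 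The key point, parallel to Proposition~2.8 of \cite{li2}, is that passing to dominant transforms is compatible with intersections along the relevant strata, so the picture $X^{(k)}_{\cN'}\subset X^{(k)}_\cN \subset Y^{(k)}$ with $X^{(k)}_{\cN'}$ of codimension $r_{k,\cN}$ is preserved.

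With this in hand, I would invoke the local structure of the blowup $Y^{(k+1)}\to Y^{(k)}$: restricting to $X^{(k)}_\cN$, the preimage is the blowup $\mathrm{Bl}_{X^{(k)}_{\cN'}} X^{(k)}_\cN$, and since every $E_\gamma^{(k+1)}$ for $\gamma\in\cN$ is the dominant transform of $E_\gamma^{(k)}$ under this blowup, taking their intersection gives $X^{(k+1)}_\cN = \mathrm{Bl}_{X^{(k)}_{\cN'}} X^{(k)}_\cN$. (Here one uses that $X^{(k)}_\cN$ is not contained in the center, since $\cN'$ is a strictly larger nest, so the dominant transform is the proper transform and this is a genuine blowup along a smooth center of codimension $r_{k,\cN}$.) Applying \eqref{Voevblowup} to $f\colon X^{(k+1)}_\cN = \mathrm{Bl}_{X^{(k)}_{\cN'}} X^{(k)}_\cN \to X^{(k)}_\cN$ with center of codimension $r_{k,\cN}$ yields
\[
m(X^{(k+1)}_\cN) \cong m(X^{(k)}_\cN) \oplus \bigoplus_{\ell=1}^{r_{k,\cN}-1} m(X^{(k)}_{\cN'})(\ell)[2\ell],
\]
which is \eqref{recXNk}.

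The main obstacle is the middle step: verifying carefully that $X^{(k)}_\cN \cap \Delta_{\gamma_{k+1}}^{(k)} = X^{(k)}_{\cN'}$ as smooth subvarieties with the asserted codimension, and that forming dominant transforms commutes with intersecting the $E_\gamma$'s in the way claimed. This is where one must genuinely use the nest condition (Proposition~\ref{cGnests}) and the transversality lemmas, case by case according to whether $\gamma_{k+1}$ and a given $\gamma\in\cN$ are disjoint, meet in a vertex, or are nested — essentially the bookkeeping already carried out in the proof of Proposition~\ref{strataConf}, combined with Li's Proposition~2.8 of \cite{li2} on dominant transforms. Once this compatibility is established, the application of the blowup formula is routine.
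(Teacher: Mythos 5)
Your proof is correct and follows essentially the same route as the paper, which simply cites Lemma 3.3 of \cite{li} and notes that one replaces the Chow blowup formula with the Voevodsky one; you have reconstructed in detail the argument that Li's lemma carries out. The key steps you identify --- realizing $X^{(k+1)}_\cN$ as the blowup of $X^{(k)}_\cN$ along $X^{(k)}_{\cN'}$ via Li's Proposition 2.8 on dominant transforms and the nest transversality, then applying \eqref{Voevblowup} --- are precisely the content of that cited lemma.
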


\proof The proof is entirely similar to the proof of the analogous statement for Chow
motives in the smooth projective case, proved in Lemma 3.3 of \cite{li}, where 
at each step one replaces the use of the blowup formula \eqref{Chowblowup} 
with the formula \eqref{Voevblowup}. 
\endproof

We then have the analog of Proposition \ref{thm_chow}.

\begin{Proposition}\label{propMTvoe}
Let $X$ be a smooth quasi-projective variety. 
The Voevodsky motive $m(\konj{Conf}_\GG(X))$ of the wonderful ``compactification" 
is given by
\begin{equation}\label{voeConfGX}
m(\konj{Conf}_\GG(X)) = m(X^{\BV_\GG}) \oplus \bigoplus_{\cN \in \cG_\Gamma\text{-nests}}
\,\,\,\,  \bigoplus_{\mu \in M_\cN} m(X^{\BV_{\Gamma/\delta_{\cN}(\Gamma)}})
(\|\mu\|)[2\|\mu\|].
\end{equation}
\end{Proposition}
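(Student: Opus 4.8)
The plan is to mimic exactly the proof of Proposition~\ref{thm_chow}, replacing the Chow-motive blowup formula \eqref{Chowblowup} by the Voevodsky blowup formula \eqref{Voevblowup} throughout. The two statements are formally identical except that every Tate twist $(k)$ in the Chow setting is accompanied by a shift $[2k]$ in the Voevodsky setting; this is precisely the discrepancy between \eqref{Chowblowup} and \eqref{Voevblowup}, and it propagates additively through the induction, which is why the exponent $\|\mu\|$ in \eqref{Chowmot} becomes the pair $(\|\mu\|)[2\|\mu\|]$ in \eqref{voeConfGX}.

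Concretely, I would argue as a downward induction on the tower of iterated blowups $Y^{(N)} \to \cdots \to Y^{(1)} \to Y^{(0)} = X^{\BV_\Gamma}$ of Li \cite{li2}, exactly as in Lemma~3.3 and Theorem~3.1 of \cite{li}. The inductive input is Proposition~\ref{thmMTvoe}: at the $(k+1)$-st blowup, the center is the (iterated dominant transform of the) diagonal $\Delta_{\gamma_{k+1}}$, and applying \eqref{Voevblowup} to $f\colon Y^{(k+1)}\to Y^{(k)}$ together with the recursion \eqref{recXNk} for the strata $X_\cN^{(k)}$ lets one rewrite $m(Y^{(k+1)})$ in terms of $m(Y^{(k)})$ and the motives $m(X_{\cN'}^{(k)})$ with their twists and shifts. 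Unwinding the recursion all the way down to $Y^{(0)}$, the summands organize themselves, as in \cite{li}, into a sum over $\cG_\Gamma$-nests $\cN$ and, for each $\cN$, a sum over the multi-indices $\mu \in M_\cN$ of \eqref{MN}, with total twist $\|\mu\|$ and total shift $2\|\mu\|$. Finally, as in the proof of Proposition~\ref{thm_chow}, one identifies the relevant stratum motive: under the projection $\pi\colon \konj{Conf}_\GG(X)\to X^{\BV_\Gamma}$ one has $\pi(\cap_{\gamma\in\cN} E_\gamma) = \cap_{\gamma\in\cN}\Delta_\gamma \cong X^{\BV_{\Gamma/\delta_{\cN}(\Gamma)}}$, by Definition~\ref{defGdeltaG}, which gives the motive $m(X^{\BV_{\Gamma/\delta_{\cN}(\Gamma)}})$ appearing in \eqref{voeConfGX}.

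The only genuinely new point over \cite{li} is verifying that the blowup formula \eqref{Voevblowup} and its iterated bookkeeping remain valid in the \emph{quasi-projective} (non-proper) setting, where Chow motives are not available. This is handled by working in Voevodsky's triangulated category of geometric motives $DM_{gm}$ over the base field: Proposition~3.5.3 of \cite{voe} gives \eqref{Voevblowup} for the blowup of any smooth scheme along a smooth closed subscheme, with no properness hypothesis, and the centers $\Delta_\gamma$ (and their iterated dominant transforms) are smooth closed subvarieties of the smooth varieties $Y^{(k)}$ by the results of \cite{li2} recalled in Section~2. The Gysin triangle / blowup distinguished triangle in $DM_{gm}$ splits (the relevant $\mathrm{Hom}$-groups of Tate twists vanish in the appropriate range), exactly as in the projective case, so the inductive decomposition goes through verbatim.

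The main obstacle, such as it is, is purely organizational: one must keep careful track of the combinatorics of which $\cG_\Gamma$-nests contribute at which stage of the tower and of the resulting range $1\le\mu_\gamma\le r_\gamma-1$ for the twist exponents --- but this is identical to the bookkeeping in \cite{li}, encoded in \eqref{rgamma}--\eqref{normmu} and already carried out there, so I would simply cite Lemma~3.3 and Theorem~3.1 of \cite{li} and remark that the argument applies mutatis mutandis in $DM_{gm}$ with the shift $[2k]$ attached to each twist $(k)$.
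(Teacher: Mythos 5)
Your proposal is correct and follows essentially the same route as the paper: a downward induction along Li's iterated blowup tower, substituting Voevodsky's blowup formula \eqref{Voevblowup} (Proposition~3.5.3 of \cite{voe}) for the Chow formula \eqref{Chowblowup} at each step, and citing Lemma~3.3 and Theorem~3.1 of \cite{li} for the combinatorial bookkeeping over $\cG_\Gamma$-nests. Your extra remark about the splitting of the Gysin triangle is harmless but unnecessary, since Proposition~3.5.3 of \cite{voe} already asserts the direct-sum decomposition outright for smooth (not necessarily proper) schemes, which is exactly the ingredient the paper invokes.
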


\proof This also follows immediately by the same argument used in the smooth
projective case for Chow motives in the proof of Theorem 3.1 of \cite{li}, where, 
in the downward induction on the levels $k$ of the iterated blowup describing
$\konj{Conf}_\GG(X)$, one replaces at each step with the formula \eqref{recXNk} 
the analogous formula used in \cite{li} for Chow motives. 
\endproof

We obtain then from Propositions \ref{thm_chow} and \ref{propMTvoe}
the following simple corollary.

\begin{Corollary}\label{confMT}
If the motive of the smooth (quasi)projective variety $X$ is mixed Tate,
then the motive of $\konj{Conf}_\GG(X)$ is also mixed Tate, for all graphs
$\Gamma$. In particular, for example, the motives of
$\konj{Conf}_\GG(\P^d)$, $\konj{Conf}_\GG(\A^d)$ 
and $\overline{C}_\Gamma(\A^d)$ are mixed Tate.
\end{Corollary}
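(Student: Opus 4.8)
The plan is to deduce the statement directly from the explicit motive formulas already established, namely the Chow-motive formula \eqref{Chowmot} in the smooth projective case and the Voevodsky-motive formula \eqref{voeConfGX} in the smooth quasi-projective case. Both express $h(\konj{Conf}_\GG(X))$ (resp.\ $m(\konj{Conf}_\GG(X))$) as a finite direct sum whose summands are Tate twists (and, in the Voevodsky case, shifts) of the motives $h(X^{\BV_\GG})$ and $h(X^{\BV_{\Gamma/\delta_\cN(\Gamma)}})$ for the various $\cG_\Gamma$-nests $\cN$. So the first step is simply to observe that each such summand is a motive of a finite cartesian power $X^m$ of $X$, Tate-twisted by $\|\mu\|$.

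The second step is to invoke the standard fact that the subcategory of mixed Tate motives is closed under tensor products, Tate twists, shifts, and finite direct sums. By the K\"unneth formula, if $m(X)$ is mixed Tate then so is $m(X^m)=m(X)^{\tensor m}$ for every $m\geq 0$ (including $m=0$, where $X^{\BV}$ with empty vertex set gives the unit); then each $h(X^m)(\|\mu\|)$ (resp.\ $m(X^m)(\|\mu\|)[2\|\mu\|]$) is mixed Tate; and a finite direct sum of mixed Tate motives is mixed Tate. This yields that $h(\konj{Conf}_\GG(X))$ (resp.\ $m(\konj{Conf}_\GG(X))$) is mixed Tate, for every graph $\Gamma$.

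For the specific examples: $\P^d$ is smooth projective with $h(\P^d)=\bigoplus_{k=0}^d \L^{\tensor k}$ mixed Tate, so $\konj{Conf}_\GG(\P^d)$ is mixed Tate by the projective case; $\A^d$ has $m(\A^d)=\Z(0)$ in the Voevodsky category, which is mixed Tate, so $\konj{Conf}_\GG(\A^d)$ is mixed Tate by the quasi-projective case; and $\overline{C}_\Gamma(\A^d)$, which by Lemma \ref{CGAdLem} and Corollary \ref{XPdcase} arises as a wonderful ``compactification'' of the same type (either directly or as a subvariety inside $\konj{Conf}_{\Gamma}(\P^d)$ cut out by the section-fixing conditions) is handled the same way, since the motive formula \eqref{voeConfGX} applies to it verbatim with the $G_d$-quotiented building set of Lemma \ref{CGAdLem}.

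There is really no serious obstacle here; the only point requiring mild care is that in the quasi-projective setting one must work in Voevodsky's triangulated category $DM$ and use that the thick triangulated subcategory generated by the Tate objects $\Z(k)$ is closed under the operations appearing in \eqref{voeConfGX} — in particular that a direct sum which is also a direct summand of an object known to be mixed Tate is itself mixed Tate, so that the recursion of Proposition \ref{thmMTvoe} stays inside the mixed Tate subcategory at every level. Once that closure property is granted, the corollary is immediate from Propositions \ref{thm_chow} and \ref{propMTvoe}.
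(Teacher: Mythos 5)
Your argument is correct and is essentially the same as the paper's: both deduce the corollary directly from the explicit decompositions \eqref{Chowmot} and \eqref{voeConfGX}, observing that each summand is a Tate twist (and shift) of a motive of a power of $X$, and invoking closure of the mixed Tate subcategory under products, sums, twists, and shifts. Your additional remarks on the specific examples (pointing to $h(\P^d)=\bigoplus_k \L^{\tensor k}$, $m(\A^d)=\Z(0)$, and Lemma \ref{CGAdLem} for $\overline{C}_\Gamma(\A^d)$) just spell out what the paper leaves implicit.
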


\proof This is an immediate consequence of \eqref{Chowmot} and \eqref{voeConfGX},
since the motive of $\konj{Conf}_\GG(X)$ depends upon the motive of $X$ only through
products, Tate twists, sums, and shifts. All these operations preserve the subcategory
of mixed Tate motives.
\endproof

\medskip
\subsection{Classes in the Grothendieck ring}\label{GrclassSec}

The formula for the motive of $\overline{Conf}_\Gamma(X)$ has a 
corresponding formula for a simpler invariant that captures some of
the motivic properties, the class in the Grothendieck ring of varieties 
$K_0(\cV)$. This is generated by isomorphism classes $[X]$
of quasi-projective varieties, with the relations $[X]=[Y]+[X\smallsetminus Y]$
for closed embeddings $Y\subset X$ and with product $[X\times Y]=[X]\cdot [Y]$.

Recall that  an invariant $\chi([X])$ of isomorphism classes of algebraic varieties with values in a 
commutative ring ${\mathcal R}$ is {\em motivic} if it factors through the Grothendieck ring of 
varieties, that is if it satisfies the inclusion--exclusion (or scissor congruence) and product 
relations
\begin{equation}\label{invmot}
\chi([X])=\chi([Y])+\chi([X\smallsetminus Y]) \ \ \  \text{ and } \ \ \ 
\chi([X \times Y]) =\chi([X])\cdot \chi([Y]),
\end{equation}
that is, if it defines a ring homomorphism $\chi: K_0(\cV)\to {\mathcal R}$.
The topological Euler characteristic is a prototype example of such an invariant, 
and for that reason the class $[X]$ in the Grothendieck ring can be regarded as
a {\em universal Euler characteristic}, \cite{Bittner}.

The class in the Grothendieck ring and the motive of a variety are related through
the motivic Euler characteristic. For Chow motives, this was constructed in
\cite{GilSou}, as an invariant $\chi_{mot}((X,p,m))$, satisfying the inclusion--exclusion and
product relation, which associates
to an element $(X,p,m)$ a class in the Grothendieck
ring $K_0(\cM_{Chow})$ of the pseudoabelian category $\cM_{Chow}$ of Chow motives. 
The motivic Euler characteristic of the Chow motive
$h(X)=(X,id_X,0)$ of a smooth projective variety $X$ 
factors through the class $[X]$ in the Grothendieck
ring of varieties $K_0(\cV)[\L^{-1}]$, with the Lefschetz motive inverted,
via a ring homomorphism $\chi: K_0(\cV)[\L^{-1}]\to K_0(\cM_{Chow})$, so
that $\chi_{mot}(h(X)) =\chi([X])$ in $K_0(\cM_{Chow})$.
This motivic Euler characteristic was generalized to the Voevodsky category
of mixed motives in \cite{Bond}. We denote it by $\chi_{mot}(m(X))$.

In the Grothendieck ring, the Lefschetz motive corresponds to $\L = [\A^1]$, the
class of the affine line. The subring $\Z[\L,\L^{-1}]$ of  the Grothedieck ring 
$K_0(\cV)[\L^{-1}]$ is the image of the mixed Tate motives. 

The blowup formulae \eqref{Chowblowup}  and \eqref{Voevblowup}  for motives 
have an analog for the classes in the Grothendieck ring $K_0(\cV)$ of varieties,
namely the Bittner relation \cite{Bittner}. 

These are based on the fact that, for $f: X \to Y$ a locally trivial fibration with fiber $F$, 
the class in the Grothendieck ring of varieties $K_0(\cV)$ satisfies
\begin{equation}\label{trivFibr}
[X] = [Y]\cdot [F].
\end{equation}
This follows directly from the scissor relations defining the Grothendieck ring
and Noetherian induction. This then shows (\cite{Bittner}) that, in the case of a 
blowup $f: \widetilde{Y} \to Y$ of a smooth scheme $Y$ along
a smooth closed subscheme $V \subset Y$, with exceptional divisor $E$, the class
$[\widetilde{Y}]$ in $K_0(\cV)$ satisfies the Bittner relation
\begin{equation}\label{K0blowup}
[\widetilde{Y}] = [Y] - [V] + [E] =[Y] +  [V]  ([\P^{\codim_Y(V)-1}] -1).
\end{equation}
In fact, it is shown in \cite{Bittner} that this relation can be used as a
replacement for the inclusion-exclusion relation $[X]=[Y]+[X\smallsetminus Y]$
for closed embedding, in the construction of the Grothendieck ring of
varieties.

We write this equivalently in the following form, which is more
similar to the form of \eqref{Chowblowup}  and \eqref{Voevblowup}.

\begin{Lemma}\label{K0blowup}
The class $[\widetilde{Y}]$ of a blowup $f: \widetilde{Y} \to Y$ 
of a smooth scheme $Y$ along a smooth closed subscheme $V \subset Y$ is
\begin{equation}\label{K0blowup2}
[\widetilde{Y}] =[Y] + \sum_{k=1}^{\codim_Y(V)-1} [V] \,\,\L^k.
\end{equation}
\end{Lemma}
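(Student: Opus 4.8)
The plan is to derive \eqref{K0blowup2} as an immediate reformulation of the Bittner relation \eqref{K0blowup}, which has already been recalled in the text. First I would recall that for a blowup $f:\widetilde{Y}\to Y$ of a smooth scheme $Y$ along a smooth closed subscheme $V\subset Y$, with exceptional divisor $E$, the scissor relation gives $[\widetilde{Y}]=[\widetilde{Y}\smallsetminus E]+[E]=[Y\smallsetminus V]+[E]=[Y]-[V]+[E]$, since $f$ restricts to an isomorphism $\widetilde{Y}\smallsetminus E\cong Y\smallsetminus V$.

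Next I would identify the class of the exceptional divisor. The exceptional divisor $E=\P(N_{V\subset Y})$ is the projectivization of the normal bundle of $V$ in $Y$, hence a Zariski-locally trivial fibration over $V$ with fiber $\P^{\codim_Y(V)-1}$. By \eqref{trivFibr} applied to this locally trivial fibration (which holds by the scissor relations and Noetherian induction, as recalled above), one gets $[E]=[V]\cdot[\P^{\codim_Y(V)-1}]$. Now using $[\P^{n}]=1+\L+\cdots+\L^{n}=\sum_{k=0}^{n}\L^k$ in $K_0(\cV)$, with $n=\codim_Y(V)-1$, we obtain
\begin{equation*}
[E]=[V]\sum_{k=0}^{\codim_Y(V)-1}\L^k=[V]+\sum_{k=1}^{\codim_Y(V)-1}[V]\,\L^k.
\end{equation*}

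Finally I would substitute this into $[\widetilde{Y}]=[Y]-[V]+[E]$: the term $[V]$ cancels the $-[V]$, leaving exactly $[\widetilde{Y}]=[Y]+\sum_{k=1}^{\codim_Y(V)-1}[V]\,\L^k$, which is \eqref{K0blowup2}. There is essentially no obstacle here; the only minor point worth stating carefully is that the projectivized normal bundle is \emph{Zariski}-locally trivial (not merely étale-locally), so that \eqref{trivFibr} applies to compute $[E]$ — this is standard for projective bundles. Everything else is a rearrangement of relations already established in the excerpt.
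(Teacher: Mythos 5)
Your proof is correct and follows essentially the same route as the paper: start from the Bittner relation $[\widetilde{Y}]=[Y]-[V]+[E]$, identify $[E]=[V]\,[\P^{\codim_Y(V)-1}]$ via the projective bundle structure, expand $[\P^{n}]=\sum_{k=0}^{n}\L^k$, and cancel the $[V]$ term. You actually spell out a small point the paper's one-line proof garbles (the paper writes $[E]=[V]([\P^{\codim_Y(V)-1}]-1)$ where it clearly means $[E]-[V]$), and your remark that the projective bundle is Zariski-locally trivial is exactly the hypothesis needed for \eqref{trivFibr}.
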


\proof We can write the class of the exceptional
divisor as $[E] = [V] ([\P^{\codim_Y(V)-1}] -1)$. Using
$\sum_{k=1}^{\codim_Y(V)-1} \L^k =  [\P^{\codim_Y(V)-1}] -1$ one obtains 
$[\widetilde{Y}] = [Y] +  [V] \sum_{k=1}^{\codim_Y(V)-1} \L^k$.
\endproof

In particular, through the motivic Euler characteristic, the image in $K_0(\cM)$
of the class in $K_0(\cV)$ is equal to $\chi([\tilde Y])=\chi_{\rm mot} (m(\tilde Y))$, 
so that the formula \eqref{K0blowup2} matches exactly the form of the
corresponding \eqref{Chowblowup}  and \eqref{Voevblowup}.

We then obtain the following explicit formula for the class in the Grothendieck
ring of the wonderful compactifications $\overline{Conf}_\Gamma(X)$.

\begin{Proposition}\label{GrclassConf}
Let $X$ be a quasi-projective variety and let $[X]$ denote its class in
the Grothendieck ring of varieties $K_0(\cV)$. Then,
for a given graph $\Gamma$, the class $[\overline{Conf}_\Gamma(X)]$ in 
$K_0(\cV)$ is given by 
\begin{equation}\label{classCGX}
[\konj{Conf}_\GG(X)] =[X]^{|\BV_\Gamma|} +
\sum_{\cN\in \cG_\Gamma\text{-nests}} \,\, 
[X]^{|\BV_{\Gamma/\delta_{\cN}(\Gamma)}|} \sum_{\mu \in M_\cN} \L^{\|\mu\|}.
\end{equation}
\end{Proposition}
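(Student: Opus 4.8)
The plan is to mimic the proofs of Propositions \ref{thm_chow} and \ref{propMTvoe}, carrying the whole computation out in the Grothendieck ring $K_0(\cV)$ and replacing the motivic blow-up formulas \eqref{Chowblowup}, \eqref{Voevblowup} by the Bittner relation \eqref{K0blowup2} at every step. I would keep the notation of Proposition \ref{thmMTvoe}: order $\cG_\Gamma=\{\gamma_1,\ldots,\gamma_N\}$ so that $\gamma_i\supseteq\gamma_j$ forces $i\le j$, set $Y^{(0)}=X^{\BV_\Gamma}$, let $Y^{(k)}$ be the blow-up of $Y^{(k-1)}$ along the iterated dominant transform of $\Delta_{\gamma_k}$ so that $Y^{(N)}=\konj{Conf}_\GG(X)$, and for a $\cG_\Gamma$-nest $\cN$ write $X^{(k)}_\cN=\cap_{\gamma\in\cN}E^{(k)}_\gamma$.

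The first step is to establish the $K_0(\cV)$-analogue of the recursion \eqref{recXNk}. Fixing $k$ and a $\cG_\Gamma$-nest $\cN\subseteq\{\gamma_{k+2},\ldots,\gamma_N\}$ such that $\cN'=\cN\cup\{\gamma_{k+1}\}$ is again a nest, one has, exactly as in Lemma 3.3 of \cite{li}, that $X^{(k+1)}_\cN$ is the blow-up of the smooth variety $X^{(k)}_\cN$ along the smooth closed subvariety $X^{(k)}_{\cN'}$, of codimension $r_{k,\cN}$, so \eqref{K0blowup2} gives $[X^{(k+1)}_\cN]=[X^{(k)}_\cN]+\sum_{\ell=1}^{r_{k,\cN}-1}[X^{(k)}_{\cN'}]\,\L^\ell$; if $\cN\cup\{\gamma_{k+1}\}$ is not a nest, then by Proposition \ref{strataConf} (applied at level $k$) the blow-up centre is disjoint from $X^{(k)}_\cN$, whence $[X^{(k+1)}_\cN]=[X^{(k)}_\cN]$. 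Here I only use that the exceptional divisor of a blow-up of a smooth centre $V$ is the projectivised normal bundle $\P(N_{V\subset Y})$, hence of class $[V]\cdot[\P^{\codim_Y(V)-1}]$ by \eqref{trivFibr}, which is valid over a quasi-projective base with no hypothesis on the characteristic; unlike Bittner's generators-and-relations description of $K_0(\cV)$, nothing here needs resolution of singularities.

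The second step is the downward induction on the tower $Y^{(N)}\to\cdots\to Y^{(0)}=X^{\BV_\Gamma}$, identical to the one in the proof of Theorem 3.1 of \cite{li} but with $\L^\ell$ in the role of the twisted shift $(\ell)[2\ell]$: starting from $[X^{(0)}_\emptyset]=[X]^{|\BV_\Gamma|}$ and iterating the recursion, the accumulated contributions organise into $[\konj{Conf}_\GG(X)]=[X]^{|\BV_\Gamma|}+\sum_{\cN}[\pi(X_\cN)]\sum_{\mu\in M_\cN}\L^{\|\mu\|}$, where $\pi:\konj{Conf}_\GG(X)\to X^{\BV_\Gamma}$ is the iterated blow-down and $M_\cN$ is as in \eqref{MN}. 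The last step is the identification, as in the proof of Proposition \ref{thm_chow}, of $\pi(X_\cN)=\cap_{\gamma\in\cN}\Delta_\gamma$ with the (poly)diagonal obtained by identifying the coordinates of the vertices in each connected component of $\gamma_1\cup\cdots\cup\gamma_r$ (for $\cN=\{\gamma_1,\ldots,\gamma_r\}$), hence with $X^{\BV_{\Gamma/\delta_{\cN}(\Gamma)}}$; this gives $[\pi(X_\cN)]=[X]^{|\BV_{\Gamma/\delta_{\cN}(\Gamma)}|}$ and therefore \eqref{classCGX}.

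I do not expect a genuine obstacle: the argument is a transcription of \cite{li} with one invariant replaced by another satisfying the same blow-up formula. The one point requiring care is the combinatorial bookkeeping that the codimensions $r_{k,\cN}$ arising in the successive applications of the recursion are precisely the numbers $r_\gamma$ of \eqref{rgamma}, so that the nested geometric sums collapse to $\sum_{\mu\in M_\cN}\L^{\|\mu\|}$ with $M_\cN$ as in \eqref{MN}; since this is already done in \cite{li}, I would simply cite it. Alternatively one could deduce \eqref{classCGX} directly from \eqref{voeConfGX} by applying the motivic Euler characteristics of \cite{GilSou} and \cite{Bond}, under which $m(Z)(\ell)[2\ell]$ maps to $\L^\ell\chi([Z])$ and \eqref{Voevblowup} maps to \eqref{K0blowup2}; I prefer the direct route because it produces the identity already in $K_0(\cV)$ rather than only in $K_0(\cM)$.
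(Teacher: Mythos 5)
Your proof is correct and follows essentially the same route as the paper: establish the $K_0(\cV)$-analogue of the recursion \eqref{recXNk} via the Bittner-type relation \eqref{K0blowup2}, run the downward induction of Theorem 3.1 of \cite{li}, and identify $\pi(X_\cN)$ with $X^{\BV_{\Gamma/\delta_{\cN}(\Gamma)}}$ as already done in the proof of Proposition \ref{thm_chow}. Your extra observations — that the blow-up relation in $K_0(\cV)$ needs only the projective-bundle structure of the exceptional divisor (no resolution of singularities), and that one could alternatively push \eqref{voeConfGX} through the motivic Euler characteristic — are both valid, though the paper does not spell them out.
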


\proof One can once again argue as in Lemma 3.1 of \cite{li}, using \eqref{K0blowup2}
instead of \eqref{Chowblowup} of \eqref{Voevblowup}, and obtain the analog of 
\eqref{recXNk}, with the same notation as in Proposition \ref{thmMTvoe}, namely
\begin{equation}\label{recXNkGr}
[X_\cN^{(k+1)}] = [X_\cN^{(k)}] + \sum_{\ell =1}^{r_{k,\cN}-1} [X_{\cN'}^{(k)}] \,\, \L^\ell =
[X_\cN^{(k)}] +  [X_{\cN'}^{(k)}]  ([\P^{r_{\gamma,\cN}-1}]-1).
\end{equation}
One then uses the same downward induction argument of Theorem 3.1 of \cite{li},
applying \eqref{recXNkGr} at each step and one obtains \eqref{classCGX}.
\endproof

Thus, for example, in the case of $X=\P^d$ we have the following formula for the
class in the Grothendieck ring.

\begin{Corollary}\label{GrPdConf}
For $X=\P^d$, the class $[\konj{Conf}_\GG(\P^d)]$ in  $K_0(\cV)$ is
\begin{equation}\label{classK0Pd}
[\konj{Conf}_\GG(\P^d)]=(\sum_{\ell=0}^d \L^\ell)^{|\BV_\Gamma|} +
\sum_{\cN\in \cG_\Gamma\text{-nests}} \,\,  (\sum_{\ell=0}^d \L^\ell)^{|\BV_{\Gamma/\delta_{\cN}(\Gamma)}|} \sum_{\mu \in M_\cN} \L^{\|\mu\|}.
\end{equation}
\end{Corollary}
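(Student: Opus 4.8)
The plan is to specialize Proposition \ref{GrclassConf} to $X=\P^d$; the only extra ingredient needed is the classical expression for the class of projective space in the Grothendieck ring. First I would recall that $\P^d$ admits a stratification by affine cells, obtained inductively from the decomposition of $\P^d$ into the locus where a chosen homogeneous coordinate is nonzero (a copy of $\A^d$) and its complement (a copy of $\P^{d-1}$). Applying the scissor relation $[X]=[Y]+[X\smallsetminus Y]$ repeatedly, together with $[\A^\ell]=\L^\ell$, gives
\[
[\P^d] = \sum_{\ell=0}^d \L^\ell
\]
in $K_0(\cV)$. (Equivalently, since $\P^d$ is smooth projective with Betti numbers equal to $1$ in each even degree $0,2,\dots,2d$ and $0$ otherwise, its class lies in $\Z[\L]$ and equals the same sum.)

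Next I would substitute this into formula \eqref{classCGX}. The right-hand side of that formula involves $[X]$ only through the powers $[X]^{|\BV_\Gamma|}$ and $[X]^{|\BV_{\Gamma/\delta_{\cN}(\Gamma)}|}$, together with the Lefschetz class $\L$ and the purely combinatorial data of the $\cG_\Gamma$-nests $\cN$ and the sets $M_\cN$. Replacing $[X]$ by $\sum_{\ell=0}^d \L^\ell$ in each of these powers therefore produces exactly \eqref{classK0Pd}, with no further rearrangement required.

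I do not expect any real obstacle: once Proposition \ref{GrclassConf} is in hand, this is a mechanical substitution, and the only nontrivial step — the identity $[\P^d]=\sum_{\ell=0}^d\L^\ell$ — is standard. The only point worth stating carefully is that all the terms of \eqref{classCGX} are polynomial in $[X]$ with coefficients in $\Z[\L]$, so the substitution is unambiguous and commutes with the sums over $\cG_\Gamma$-nests and over $\mu\in M_\cN$.
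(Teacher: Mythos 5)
Your proof is correct and is exactly the argument the paper has in mind: the corollary is stated without a proof because it is the immediate specialization of Proposition \ref{GrclassConf} to $X=\P^d$ via the standard identity $[\P^d]=\sum_{\ell=0}^d\L^\ell$, which is precisely what you wrote out.
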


The class in the Grothendieck ring can be written also in terms of the stratification.
This leads to interesting identities for the spaces $\overline{Conf}_\Gamma(X)$,
similar to the combinatorial identities proved in \cite{Cheah} in the 
Fulton--MacPherson case.

\begin{Lemma}\label{classK0strata}
The expression \eqref{classCGX} for the class $[\konj{Conf}_\GG(X)] $ in $K_0(\cV)$
can be equivalently written as
\begin{equation}\label{K0strataConfG}
 [\overline{Conf}_\Gamma(X)] = [Conf_\Gamma(X)] + \sum_{\cN \,\, \cG-nests} [X_{\cN}^\circ],
\end{equation}
where the $X_{\cN}^\circ$ are the open strata of \eqref{opstr}. This can then be written
equivalently as
\begin{equation}\label{K0strataConfG2}
 [\overline{Conf}_\Gamma(X)] = [Conf_\Gamma(X)] + \sum_{\cN \,\, \cG-nests} [Conf_{\Gamma/\delta_{\cN}(\Gamma)}(X)] \prod_{\gamma \in \BV_{\cT(\cN)}} 
 [ C_{\gamma/\delta_{\cN}(\gamma)} (\A^d) ],
\end{equation}
\end{Lemma}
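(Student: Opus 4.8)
The plan is to establish the two displayed identities in sequence, deriving the first from the stratification result already in hand and the second from the fiber bundle description of the strata.

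\textbf{Step 1: the identity \eqref{K0strataConfG}.} I would start from Corollary \ref{openstrConf}, which gives the set-theoretic decomposition of $\overline{Conf}_\Gamma(X)$ into the disjoint pieces $Conf_\Gamma(X)$ and the open strata $X_\cN^\circ$ as $\cN$ ranges over the $\cG_\Gamma$-nests. Since these strata are locally closed and pairwise disjoint and exhaust $\overline{Conf}_\Gamma(X)$, the scissor-congruence (inclusion--exclusion) relations defining $K_0(\cV)$ give immediately
\[
[\overline{Conf}_\Gamma(X)] = [Conf_\Gamma(X)] + \sum_{\cN \,\, \cG-nests} [X_\cN^\circ].
\]
To be careful one should note that a finite stratification into locally closed subvarieties does yield an additive relation in $K_0(\cV)$: one argues by Noetherian induction, peeling off a closed stratum at a time, exactly as in \cite{Bittner}. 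The consistency of this with \eqref{classCGX} is then a matter of matching terms, which follows once Step 2 identifies $[X_\cN^\circ]$.

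\textbf{Step 2: the identity \eqref{K0strataConfG2}.} Here I would invoke Theorem \ref{bundlestrata}, which exhibits $X_\cN^\circ$ as a fiber bundle over $Conf_{\Gamma/\delta_\cN(\Gamma)}(X)$ with fiber $\cF_\cN$, together with Corollary \ref{CGammaAdFiber}, which identifies $\cF_\cN$ as the successive fiber bundle built from the pieces $\cF_\gamma \cong C_{\gamma/\delta_\cN(\gamma)}(\A^d)$, one for each vertex $\gamma$ of the forest $\cT(\cN)$. Because all these are (Zariski) locally trivial fibrations, the product formula \eqref{trivFibr} applies repeatedly, giving
\[
[X_\cN^\circ] = [Conf_{\Gamma/\delta_\cN(\Gamma)}(X)] \cdot [\cF_\cN] = [Conf_{\Gamma/\delta_\cN(\Gamma)}(X)] \prod_{\gamma \in \BV_{\cT(\cN)}} [C_{\gamma/\delta_\cN(\gamma)}(\A^d)].
\]
Substituting this into \eqref{K0strataConfG} yields \eqref{K0strataConfG2}. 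Finally, to close the loop with \eqref{classCGX}, one checks that $[Conf_{\Gamma/\delta_\cN(\Gamma)}(X)]$ times the product of the $[C_{\gamma/\delta_\cN(\gamma)}(\A^d)]$, after using $[\A^d]=\L^d$ and the description of screen configurations as complements of diagonals in $\P(\bT_x^{\BV}/\bT_x)$, reassembles into the terms $[X]^{|\BV_{\Gamma/\delta_\cN(\Gamma)}|}\sum_{\mu\in M_\cN}\L^{\|\mu\|}$; this is a bookkeeping computation and I would only indicate it briefly.

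\textbf{Main obstacle.} The one point that genuinely needs care is the local triviality required for \eqref{trivFibr}: Theorem \ref{bundlestrata} produces $X_\cN^\circ$ from projectivized normal bundles, which are Zariski-locally trivial, and the successive open subvarieties of screen configurations are cut out by removing diagonals, so local triviality is preserved — but one must state this explicitly rather than merely invoking ``fiber bundle.'' A secondary subtlety is that \eqref{trivFibr} holds for the class even though the fibrations are only locally trivial (not products globally); this is the standard consequence of scissor relations plus Noetherian induction, and I would cite \cite{Bittner} for it. With those two points addressed, both identities follow formally from results already established in the excerpt.
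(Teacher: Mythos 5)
Your argument for \eqref{K0strataConfG} and \eqref{K0strataConfG2} is essentially the paper's: additivity over the disjoint stratification from Corollary \ref{openstrConf}, followed by multiplicativity over the fiber bundle description of Theorem \ref{bundlestrata} and Corollary \ref{CGammaAdFiber} via \eqref{trivFibr}. Your added remarks on Zariski-local triviality and on \cite{Bittner} being the right source for the additivity over locally closed strata are appropriate points of care that the paper leaves implicit.

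One caution about your closing remark: you suggest checking that each term $[Conf_{\Gamma/\delta_\cN(\Gamma)}(X)]\prod_\gamma [C_{\gamma/\delta_\cN(\gamma)}(\A^d)]$ ``reassembles into'' the corresponding term $[X]^{|\BV_{\Gamma/\delta_\cN(\Gamma)}|}\sum_{\mu\in M_\cN}\L^{\|\mu\|}$ of \eqref{classCGX}. That term-by-term identity does \emph{not} hold, and if you tried to carry out that bookkeeping it would fail. The two formulas \eqref{classCGX} and \eqref{K0strataConfG2} are derived by independent methods (iterated Bittner blowup relations versus stratification plus fibration), and they agree only as totals because both compute $[\overline{Conf}_\Gamma(X)]$; the honest comparison of the two, with terms shuffled among different $\cG_\Gamma$-nests, is exactly the content of Lemma \ref{XVandConfXK0} and Proposition \ref{idK0Conf} in the paper. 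Fortunately the lemma's statement only asserts equality of the totals, so this stray claim is not needed and the rest of your proof stands.
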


\proof 
The stratification \eqref{opstr} of $\konj{Conf}_\GG(X)$ described in \S \ref{strataSec} also
gives us a way to compute the class in the Grothedieck ring. In fact, by the inclusion-exclusion
relation in the Grothedieck ring, the disjoint union 
$$ 
\overline{Conf}_\Gamma(X) = Conf_\Gamma(X) \cup \bigcup_{\cN \,\, \cG-nests} X_{\cN}^\circ  
$$
of the open strata corresponds to a sum of classes \eqref{K0strataConfG}.
We check the compatibility of \eqref{K0strataConfG2} with the formula \eqref{K0strataConfG}.

Recall that the open stratum $X_{\cN}^\circ$ is a fiber bundle over a base given by 
the configuration
space $Conf_{\Gamma/\delta_\cN(\Gamma)}(X)$, with fiber $\cF_\cN$ that is obtained as an
iteration of bundles, each with fiber $\cF_\gamma$ the space of translations and homothety
classes of distinct labeled  $|\BV_{\gamma/\delta_{\cN}(\gamma)}|$-tuples of points in $\bT_x$.
Thus, each $\cF_\gamma$ is identified with an open subvariety of the space $\P(\bT_x^{\BV_{\gamma/\delta_{\cN}(\gamma)}}/\bT_x)$ of screen configurations, isomorphic to the
configuration space $C_{\gamma/\delta_{\cN}(\gamma)} (\A^d)$.
Thus, by \eqref{trivFibr}, we can write each class in \eqref{K0strataConfG} as
$[X_{\cN}^\circ]=[Conf_{\Gamma/\delta_\cN(\Gamma)}(X)][\cF_\cN]$, where the class of the fiber
$[\cF_\cN]$ in turn can be written as a product $\prod_\gamma [\cF_\gamma]$ over
the graphs $\gamma$ in the forest $\cT$ of nested subgraphs with $\cN=\cN(\cT)$,
as in Theorem \ref{bundlestrata}, with each $[\cF_\gamma]=[C_{\gamma/\delta_{\cN}(\gamma)}(\A^d)]$.
\endproof

By comparing the two formulae \eqref{K0strataConfG} and  \eqref{classCGX}, we obtain
some explicit combinatorial identities involving the classes of the configuration
spaces $C_{\gamma/\delta_{\cN}(\gamma)}(\A^d)$, with $\gamma$ ranging over the 
graphs decorating the vertices of the 
 the forest $\cT$ of nested subgraphs for a given $\cG_\Gamma$-nest $\cN=\cN(\cT)$, and
 the classes of the projective spaces $\P^{r_{k,\cN}-1}$, with $r_{\gamma,\cN}$ as in
 \eqref{rgamma}.
 
 \begin{Lemma}\label{idLmuK0}
 For a given graph $\Gamma$, consider a $\cG_\Gamma$-nest $\cN$. For $\gamma$ in $\cG_\Gamma$ let $r_{\gamma,\cN}$ and $\mu=(\mu_\gamma)_{\gamma\in \cG_\Gamma}\in M_{\cN}$ be as in \eqref{rgamma} and \eqref{MN}. The we have in $K_0(\cV)$ the identity
 \begin{equation}\label{idLmu}
\sum_{\mu\in M_\cN} \L^{\|\mu \|} = \prod_{\gamma \in \cG_\Gamma} ([\P^{r_{\gamma,\cN}-1}]-1)
= \prod_{\gamma\in \cG_\Gamma} \frac{\L^{r_{\gamma,\cN}}-1}{\L-1}. 
 \end{equation}
 \end{Lemma}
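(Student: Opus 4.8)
The identity is purely formal: it follows from the definitions \eqref{rgamma}, \eqref{MN}, \eqref{normmu} by elementary manipulations in the commutative ring $K_0(\cV)$, and uses no geometry of $\overline{Conf}_\Gamma(X)$. The plan is in three steps: (i) recognise that $M_\cN$ is literally a Cartesian product of integer intervals, one factor for each $\gamma$ indexing the tuples of $M_\cN$ (i.e.\ each member of the nest $\cN$); (ii) use that $\|\cdot\|$ is additive over these factors in order to split $\sum_{\mu\in M_\cN}\L^{\|\mu\|}$ as a product of one–variable sums; (iii) identify each one–variable sum with a truncated geometric series and rewrite it through the standard class $[\P^{n}]=1+\L+\cdots+\L^{n}$ of projective space in $K_0(\cV)$ --- the same identity already invoked in the proof of Lemma~\ref{K0blowup}.

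Concretely, for steps (i)--(ii): by \eqref{MN} an element $\mu\in M_\cN$ is a tuple $(\mu_\gamma)$, where $\gamma$ ranges over the subgraphs for which $r_{\gamma,\cN}$ of \eqref{rgamma} is defined, subject to $1\le\mu_\gamma\le r_{\gamma,\cN}-1$; and by \eqref{normmu} one has $\|\mu\|=\sum_\gamma\mu_\gamma$, so that $\L^{\|\mu\|}=\prod_\gamma\L^{\mu_\gamma}$. Distributivity of multiplication over addition in $K_0(\cV)$, applied to this finite product index set, then gives
\begin{equation*}
\sum_{\mu\in M_\cN}\L^{\|\mu\|}=\sum_{(\mu_\gamma)}\ \prod_{\gamma}\L^{\mu_\gamma}=\prod_{\gamma}\Bigl(\sum_{j=1}^{r_{\gamma,\cN}-1}\L^{j}\Bigr).
\end{equation*}

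For step (iii), each factor is $\sum_{j=1}^{r-1}\L^{j}=(1+\L+\cdots+\L^{r-1})-1=[\P^{r-1}]-1$ with $r=r_{\gamma,\cN}$; expanding via $[\P^{n}]=\tfrac{\L^{\,n+1}-1}{\L-1}$ and multiplying over $\gamma$ yields the right–hand side of \eqref{idLmu}. I do not expect any genuine obstacle here; the only point deserving attention is the bookkeeping of the index set --- making precise over exactly which $\gamma$ the tuples $(\mu_\gamma)$ run, consistently with the conventions of \cite{li} --- together with a quick check of the degenerate case $r_{\gamma,\cN}=1$, where the interval $\{1,\dots,0\}$ is empty and the per–factor identity reads $0=[\P^{0}]-1$, so that the overall formula remains valid.
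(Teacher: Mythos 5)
Your proof is correct and uses essentially the same argument as the paper: $M_\cN$ is a Cartesian product of integer intervals, additivity of $\|\cdot\|$ factors $\sum_{\mu\in M_\cN}\L^{\|\mu\|}$ into $\prod_{\gamma}\sum_{j=1}^{r_{\gamma,\cN}-1}\L^{j}$, and each factor is the truncated geometric series $[\P^{r_{\gamma,\cN}-1}]-1$. One small caveat you have inherited from the paper's own statement: $[\P^{r-1}]-1=\L+\cdots+\L^{r-1}=(\L^{r}-\L)/(\L-1)$, not $(\L^{r}-1)/(\L-1)$, so the rightmost member of \eqref{idLmu} carries a harmless typo; the first equality, which is what later results actually invoke, is the correct one, and your step (iii) would be exact if you dropped the claim that the expansion literally yields $\prod_\gamma(\L^{r_{\gamma,\cN}}-1)/(\L-1)$.
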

 
 \proof Each class $[\P^{r_{\gamma,\cN}-1}]-1=\sum_{\ell=1}^{r_{\gamma,\cN}-1} \L^\ell=(\L^{r_{\gamma,\cN}}-1)(\L-1)^{-1}$. Thus, their product is simply 
$$ \sum_{\mu=(\mu_\gamma): 1\leq \mu_\gamma \leq r_{\gamma,\cN}-1} \L^{\sum_\gamma \mu_\gamma} =\sum_{\mu\in M_\cN} \L^{\|\mu\|}, $$
where $\sum_\gamma \mu_\gamma=\|\mu\|$ as in \eqref{normmu}.
\endproof

Moreover, we can express the classes of $X^{\BV_\Gamma}$ and $X^{\BV_{\Gamma/\delta_{\cN}(\Gamma)}}$ in the formula \eqref{classCGX} in terms of the classes of $Conf_\Gamma(X)$ and
$Conf_{\Gamma/\delta_{\cN}(\Gamma)}(X)$  of \eqref{K0strataConfG} in the following way.

\begin{Lemma}\label{XVandConfXK0}
For a given graph $\Gamma$ and a given $\cG_\Gamma$-nest $\cN$, we have the following
identities in $K_0(\cV)$:
\begin{equation}\label{XVConfGamma}
[X^{\BV_\Gamma}] =[Conf_\Gamma(X)]+ \sum_{\cN \in \cG_\Gamma\text{-nests}} [Conf_{\Gamma/\delta_{\cN}(\Gamma)}(X)],
\end{equation}
\begin{equation}
[X^{\BV_{\Gamma/\delta_{\cN}(\Gamma)}}] = [Conf_{\Gamma/\delta_{\cN}(\Gamma)}(X)]+
\sum_{\cN'\in \cG_\Gamma\text{-nests}\,:\,\, \cN \subset \cN'}
 [Conf_{\Gamma/\delta_{\cN'}(\Gamma)}(X)] .
\end{equation}
\end{Lemma}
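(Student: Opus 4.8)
The plan is to deduce both identities from a single stratification of the Cartesian power $X^{\BV_\Gamma}$ by ``collision type'', and then pass to classes in the Grothendieck ring via the scissor relation \eqref{invmot}. The sum over $\cG_\Gamma$-nests in the statement will arise by re-indexing the strata, each stratum being isomorphic to a configuration space of a quotient graph.

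First I would build the stratification. For a point $f\in X^{\BV_\Gamma}$, thought of as a map $f\colon \BV_\Gamma\to X$ as in Remark \ref{degenmapsG}, let $\gamma(f)\subseteq\Gamma$ be the subgraph whose edges are exactly those $e\in\BE_\Gamma$ with $f\in\GD_e$, together with their endpoints. The point to verify is that $\gamma(f)$ lies in $\widehat{\BS\BG}(\Gamma)$: if two vertices lie in the same connected component of $\gamma(f)$ they are joined by a path of collapsed edges, hence have equal $f$-value, so every edge of $\Gamma$ joining them is collapsed and therefore already lies in $\gamma(f)$; thus each connected component of $\gamma(f)$ is a connected \emph{induced} subgraph, and distinct components have disjoint vertex sets. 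For fixed $\gamma\in\widehat{\BS\BG}(\Gamma)$ set $Z_\gamma:=\{f\in X^{\BV_\Gamma}\ :\ \gamma(f)=\gamma\}$. Using Lemma \ref{diagslem1} to identify $\hat\GD_\gamma\cong X^{\BV_{\Gamma//\gamma}}$, and using that inside a component of $\gamma$ every $\Gamma$-edge is already an edge of $\gamma$ (this is exactly where the restriction to induced subgraphs is needed, cf. Remark \ref{needinduced}), one sees that $Z_\gamma$ is the locally closed subvariety of $\hat\GD_\gamma$ obtained by deleting the diagonals of the edges of $\Gamma//\gamma$, so that by Definition \ref{CGammaDef}, $Z_\gamma\cong Conf_{\Gamma//\gamma}(X)$; in particular $Z_\emptyset = Conf_\Gamma(X)$.

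Since each $f$ determines a unique $\gamma(f)$, the $Z_\gamma$ are pairwise disjoint and cover $X^{\BV_\Gamma}$, so by \eqref{invmot}
\begin{equation*}
[X^{\BV_\Gamma}] \;=\; \sum_{\gamma\in\widehat{\BS\BG}(\Gamma)}[Z_\gamma]
\;=\; [Conf_\Gamma(X)] + \sum_{\emptyset\neq\gamma\in\widehat{\BS\BG}(\Gamma)}[Conf_{\Gamma//\gamma}(X)].
\end{equation*}
To put the sum in the stated form I would identify a nonempty $\gamma\in\widehat{\BS\BG}(\Gamma)$ with the $\cG_\Gamma$-nest $\cN$ consisting of the biconnected components of the connected components of $\gamma$ (Proposition \ref{biconnGset}); since a connected graph is the union of its blocks, $\gamma=\gamma_1\cup\cdots\cup\gamma_r$ for $\cN=\{\gamma_1,\dots,\gamma_r\}$, whence $\Gamma//\gamma = \Gamma/\delta_\cN(\Gamma)$ in the notation of Definition \ref{defGdeltaG}, giving \eqref{XVConfGamma}. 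For the second identity it then suffices to apply \eqref{XVConfGamma} with $\Gamma$ replaced by $\Gamma':=\Gamma/\delta_\cN(\Gamma)$, and match indices: pulling back along the quotient map $\BV_\Gamma\to\BV_{\Gamma'}$ identifies $\widehat{\BS\BG}(\Gamma')$ with the subgraphs $\gamma''\in\widehat{\BS\BG}(\Gamma)$ that contain $\gamma_\cN$ (a connected induced subgraph of $\Gamma'$ is the same datum as a connected induced subgraph of $\Gamma$ that swallows every $\gamma_\cN$-component it meets), and under this identification $\Gamma'//\gamma'=\Gamma//\gamma'' = \Gamma/\delta_{\cN'}(\Gamma)$, with $\cN'$ the nest of $\gamma''$; this is the relation $\cN\subseteq\cN'$ appearing in the statement.

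The routine parts are the scissor/Noetherian-induction bookkeeping in $K_0(\cV)$; the main obstacle is the combinatorial dictionary. Concretely, the step that needs care is checking that $\gamma(f)$ genuinely lands in $\widehat{\BS\BG}(\Gamma)$ and that $Z_\gamma$ is \emph{exactly} $Conf_{\Gamma//\gamma}(X)$ (no spurious collapsed edge), both of which rest on the induced-subgraph hypothesis; and then verifying that the correspondence between disjoint unions of connected induced subgraphs, the associated $\cG_\Gamma$-nests, and — for the second formula — the nests $\cN'$ with $\cN\subseteq\cN'$, is a bijection compatible with the operation $\gamma\mapsto\Gamma//\gamma$. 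Once these identifications are pinned down, each of the two identities is just inclusion--exclusion for the disjoint stratification.
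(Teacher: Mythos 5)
Your stratification of $X^{\BV_\Gamma}$ by collision type $\gamma(f)$ is the right idea and is essentially what the paper has in mind (the paper projects the strata $X_\cN^\circ$ of $\overline{Conf}_\Gamma(X)$ down via $\pi$; your construction describes the resulting partition of $X^{\BV_\Gamma}$ directly and more carefully). It correctly yields $[X^{\BV_\Gamma}]=[Conf_\Gamma(X)]+\sum_{\emptyset\neq\gamma\in\widehat{\BS\BG}(\Gamma)}[Conf_{\Gamma//\gamma}(X)]$. The gap is in the reindexing step. The map $\gamma\mapsto\cN(\gamma)$ that sends a nonempty $\gamma\in\widehat{\BS\BG}(\Gamma)$ to the set of biconnected components of its connected components is injective into the set of $\cG_\Gamma$-nests, but it is \emph{not} surjective: its image consists precisely of the nests with no proper containments (those coming from length-one flags), whereas a general $\cG_\Gamma$-nest, by definition \eqref{GTnest}, is the collection of $\cG$-factors of a genuine flag and can contain properly nested subgraphs. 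Concretely, for $\Gamma=K_3$ with edges $e_1,e_2,e_3$, the nonempty $\cG_\Gamma$-nests are $\{e_1\},\{e_2\},\{e_3\},\{\Gamma\},\{\Gamma,e_1\},\{\Gamma,e_2\},\{\Gamma,e_3\}$ (the last three come from flags $\Delta_\Gamma\subset\hat\Delta_{e_i}$), but only the first four arise as $\cN(\gamma)$ for some $\gamma\in\widehat{\BS\BG}(\Gamma)$. Since each nest containing $\Gamma$ gives $\Gamma/\delta_\cN(\Gamma)=pt$, the right-hand side of \eqref{XVConfGamma} evaluates to $[Conf_\Gamma(X)]+3([X]^2-[X])+4[X]=[X]^3+3[X]\neq[X]^3$, so the identity as written over all $\cG_\Gamma$-nests overcounts.

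To be fair, this is not only a defect of your proposal: the statement of the lemma itself (and the paper's one-line proof, which asserts that the images $\pi(X_\cN^\circ)$ form a disjoint decomposition of $X^{\BV_\Gamma}$) has the same problem, since for example $\pi(X_{\{\Gamma\}}^\circ)$ and $\pi(X_{\{\Gamma,e_1\}}^\circ)$ coincide. Your stratification proves the correct identity, where the sum runs over nonempty $\gamma\in\widehat{\BS\BG}(\Gamma)$, equivalently over $\cG_\Gamma$-nests whose elements are pairwise non-nested. If you keep that indexing (and similarly in the second identity, where one should likewise take nests of the quotient graph with no containments), the argument closes: the step that remains to make precise, and on which everything rests, is that $\gamma\mapsto\cN(\gamma)$ is a bijection onto this \emph{restricted} class of nests, which follows from the fact that biconnected components of a graph never nest.
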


\proof The first identity is an immediate consequence of the stratification of
$\overline{Conf}_\Gamma(X)$ by open strata $X_\cN^\circ$ and the fact that,
under the projection $\pi: \overline{Conf}_\Gamma(X) \to X^{\BV_\Gamma}$
the $X_\cN^\circ$ map to the $Conf_{\Gamma/\delta_{\cN}(\Gamma)}(X)$,
together with the additivity of Grothendieck classes over disjoint unions.
The second statement follows in the same way, with $\Gamma$ replaced by
its quotient $\Gamma/\delta_{\cN}(\Gamma)$, with the observation that the
$\cG_\Gamma$-nests $\cN'$ for $\Gamma$ that contain the nest $\cN$
can be identified with the $\cG_{\Gamma/\delta_{\cN}(\Gamma)}$-nests,
after identifying $$ \Gamma/\delta_{\cN'}(\Gamma) =
(\Gamma/\delta_{\cN}(\Gamma))/\delta_{\cN'}(\Gamma/\delta_{\cN}(\Gamma)). $$
\endproof

We then obtain the following identity.

\begin{Proposition}\label{idK0Conf}
The following identity holds between the classes of the configuration spaces
$C_{\gamma/\delta_{\cN}(\gamma)}(\A^d)$:
\begin{equation}\label{sumNclasses}
\begin{array}{ll}
\displaystyle{\sum_{\cN \in \cG_\Gamma\text{-nests}}   [Conf_{\Gamma/\delta_{\cN}(\Gamma)}(X)]} 
& \displaystyle{\prod_{\gamma \in \BV_{\cT(\cN)}} [C_{\gamma/\delta_{\cN}(\gamma)}(\A^d)]}  = \\[3mm] 
\displaystyle{\sum_{\cN \in \cG_\Gamma\text{-nests}} [Conf_{\Gamma/\delta_{\cN}(\Gamma)}(X)]} & \displaystyle{\left( 1+
\sum_{\cN' \in \cG_\Gamma\text{-nests}\,:\, \cN'\subset \cN} \prod_{\gamma \in \cG_\Gamma}
([\P^{r_{\gamma,\cN'}-1}]-1) \right).}
\end{array}
\end{equation}
\end{Proposition}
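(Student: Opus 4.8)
The plan is to derive \eqref{sumNclasses} by equating the two expressions for $[\overline{Conf}_\Gamma(X)]$ that are already available: formula \eqref{classCGX} of Proposition~\ref{GrclassConf} and formula \eqref{K0strataConfG2} of Lemma~\ref{classK0strata}. The first step is to rewrite \eqref{classCGX} entirely in terms of classes of configuration spaces. Using Lemma~\ref{idLmuK0} one replaces each inner sum $\sum_{\mu\in M_\cN}\L^{\|\mu\|}$ by the product $\prod_{\gamma\in\cG_\Gamma}([\P^{r_{\gamma,\cN}-1}]-1)$, with $r_{\gamma,\cN}$ as in \eqref{rgamma}. Next, the first identity of Lemma~\ref{XVandConfXK0} expands $[X^{\BV_\Gamma}]$ as $[Conf_\Gamma(X)]+\sum_{\cN}[Conf_{\Gamma/\delta_{\cN}(\Gamma)}(X)]$, and its second identity expands each $[X^{\BV_{\Gamma/\delta_{\cN}(\Gamma)}}]$ as $[Conf_{\Gamma/\delta_{\cN}(\Gamma)}(X)]+\sum_{\cN'\supsetneq\cN}[Conf_{\Gamma/\delta_{\cN'}(\Gamma)}(X)]$, where the sum runs over $\cG_\Gamma$-nests $\cN'$ strictly containing $\cN$, identified as in the proof of Lemma~\ref{XVandConfXK0} with the $\cG_{\Gamma/\delta_{\cN}(\Gamma)}$-nests through $\Gamma/\delta_{\cN'}(\Gamma)=(\Gamma/\delta_{\cN}(\Gamma))/\delta_{\cN'}(\Gamma/\delta_{\cN}(\Gamma))$. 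After these substitutions, \eqref{classCGX} takes the form of $[Conf_\Gamma(X)]$, plus the ``bare'' terms $\sum_{\cN}[Conf_{\Gamma/\delta_{\cN}(\Gamma)}(X)]$ coming from the first identity, plus a double sum over pairs of nests $\cN\subseteq\cN'$ of the products $[Conf_{\Gamma/\delta_{\cN'}(\Gamma)}(X)]\,\prod_{\gamma\in\cG_\Gamma}([\P^{r_{\gamma,\cN}-1}]-1)$.

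The second step is to reorganize this double sum by summing over the larger nest first. Fixing $\cN'$ and letting $\cN$ range over all nests with $\cN\subseteq\cN'$, the coefficient of $[Conf_{\Gamma/\delta_{\cN'}(\Gamma)}(X)]$ becomes $1+\sum_{\cN\subseteq\cN'}\prod_{\gamma\in\cG_\Gamma}([\P^{r_{\gamma,\cN}-1}]-1)$, the summand $1$ being exactly the bare term attached to $\cN'$. Renaming $\cN'$ as $\cN$ and the inner index as $\cN'$, the expression derived from \eqref{classCGX} reads $[Conf_\Gamma(X)]+\sum_{\cN}[Conf_{\Gamma/\delta_{\cN}(\Gamma)}(X)](1+\sum_{\cN'\subseteq\cN}\prod_{\gamma\in\cG_\Gamma}([\P^{r_{\gamma,\cN'}-1}]-1))$. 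On the other hand, formula \eqref{K0strataConfG2} states that $[\overline{Conf}_\Gamma(X)]=[Conf_\Gamma(X)]+\sum_{\cN}[Conf_{\Gamma/\delta_{\cN}(\Gamma)}(X)]\prod_{\gamma\in\BV_{\cT(\cN)}}[C_{\gamma/\delta_{\cN}(\gamma)}(\A^d)]$. Equating these two expressions in the ring $K_0(\cV)$ and cancelling the common summand $[Conf_\Gamma(X)]$ gives precisely \eqref{sumNclasses}.

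The step I expect to demand the most care is the bookkeeping of the re-indexing, in particular keeping track of strict versus non-strict inclusions of nests, so that the ``diagonal'' contributions $\cN'=\cN$ (which produce the terms $\prod_{\gamma\in\cG_\Gamma}([\P^{r_{\gamma,\cN}-1}]-1)$ on the right-hand side of \eqref{sumNclasses}) and the lone $1$ are each counted exactly once, and nothing is double-counted when the order of summation is swapped. The identification of $\cG_\Gamma$-nests containing $\cN$ with $\cG_{\Gamma/\delta_{\cN}(\Gamma)}$-nests furnished by Lemma~\ref{XVandConfXK0} is what legitimizes this swap; beyond that, everything is a formal manipulation in the commutative ring $K_0(\cV)$, where in particular $[Conf_\Gamma(X)]$ may be subtracted from both sides.
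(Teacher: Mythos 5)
Your proof is correct and follows exactly the paper's own route: substitute Lemma~\ref{idLmuK0} and Lemma~\ref{XVandConfXK0} into \eqref{classCGX}, equate with \eqref{K0strataConfG2}, cancel $[Conf_\Gamma(X)]$, and reindex the double sum. Your extra care in distinguishing strict from non-strict containment of nests (the paper's $\subset$ in \eqref{sumNclasses} is in fact $\subseteq$, while in Lemma~\ref{XVandConfXK0} it is strict) is a worthwhile clarification of a point the paper leaves implicit.
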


\proof Using Lemmata \ref{classK0strata}, \ref{idLmuK0}, and \ref{XVandConfXK0}, we 
obtain an identity
$$ [Conf_\Gamma(X)] + \sum_{\cN \in \cG_\Gamma\text{-nests}}   [Conf_{\Gamma/\delta_{\cN}(\Gamma)}(X)] \prod_{\gamma \in \BV_{\cT(\cN)}} [C_{\gamma/\delta_{\cN}(\gamma)}(\A^d)]  $$
$$ = [Conf_\Gamma(X)]+ \sum_{\cN \in \cG_\Gamma\text{-nests}} [Conf_{\Gamma/\delta_{\cN}(\Gamma)}(X)] + $$
$$ \sum_{\cN \in \cG_\Gamma\text{-nests}}  \left( [Conf_{\Gamma/\delta_{\cN}(\Gamma)}(X)]+
\sum_{\cN'\in \cG_\Gamma\text{-nests}\,:\,\, \cN \subset \cN'}
 [Conf_{\Gamma/\delta_{\cN'}(\Gamma)}(X)] \right) 
\prod_{\gamma \in \cG_\Gamma} ([\P^{r_{\gamma,\cN}-1}]-1) . $$
We subtract the $[Conf_\Gamma(X)]$ on both sides and rearrange and reindex
the terms in the second summation on the right hand side in such a way that each 
$\cG_\Gamma$-nest $\cN$ appears once in the
summation, with the corresponding class $[Conf_{\Gamma/\delta_{\cN}(\Gamma)}(X)]$
multiplied by the sum of the classes $\prod_{\gamma \in \cG_\Gamma} ([\P^{r_{\gamma,\cN'}-1}]-1)$, one for each $\cN' \subset \cN$. There is then an additional $+1$ term coming
from the single contribution of a class $[Conf_{\Gamma/\delta_{\cN}(\Gamma)}(X)]$
from the first summation on the right hand side of the formula above. This gives the
formula on the right hand side of \eqref{sumNclasses}.
\endproof

\subsection{Mixed Hodge structures and virtual Hodge polynomials}\label{HodgeSec}

The discussion of the motives in the previous section can also be adapted to
working with Hodge polynomials and mixed Hodge structures, instead of
classes in the Grothendieck ring.

In the case of the Fulton--MacPherson compactification, the mixed Hodge structures and
Hodge polynomials were computed explicitly in \cite{Cheah}, \cite{Getzler}, \cite{Manin}.
In particular, in that case, one knows that there is a nice way to write a generating 
function for the Hodge polynomials. In our case we do not get as explicit an answer,
but we can see that the relation of Proposition \ref{idK0Conf} provides a partial
analog in our setting.   

We recall that the virtual Hodge polynomial of an algebraic variety is defined as 
\begin{equation}\label{HodgePoly}
e(X) (x,y)= \sum_{p,q=0}^d e^{p,q}(X) x^p y^q, \ \ \ \text{ with } \ \ \ 
e^{p,q}(X) =\sum_{k=0}^{2d} (-1)^k h^{p,q}(H^k_c(X)),    
\end{equation}
where for each pair of integers $(p,q)$ the $h^{p,q}(H^k_c(X))$ are the Hodge numbers
of the mixed Hodge structure on the cohomology with compact support of $X$. If $X$ is
smooth projective, then the virtual Hodge polynomial reduces to the Poincar\'e polynomial, 
with $e^{p,q}(X)=(-1)^{p+q}h^{p,q}(X)$ the classical pure Hodge structure. 
It is well known that the virtual Hodge polynomial is, like the Euler characteristic, a
motivic invariant in the sense recalled at the beginning of \S \ref{GrclassSec} above,
namely it factors through the Grothendieck ring $K_0(\cV)$.

This means that, having an explicit formula for the class of a variety in the
Grothendieck ring, one can use it to compute the virtual Hodge polynomial.
The computation of the classes in the Grothendieck ring of varieties we obtained
in the previous section then gives us a formula for the Hodge polynomials 
of the graph configuration spaces we are considering here.

\begin{Proposition}\label{Hodgepoly}
The virtual Hodge polynomial $e(\overline{Conf}_\Gamma(X))(x,y)$ is given, as a
function of $e(X)(x,y)$, by the formula
\begin{equation}\label{eConfX}
e(\overline{Conf}_\Gamma(X)) = e(X)^{|\BV_\Gamma|} +\sum_{\cN} e(X)^{|\BV_{\Gamma/\delta_\cN(\Gamma)}|} \prod_{\gamma \in \cG_\Gamma} (e(\P^{r_{\gamma,\cN}-1}) -1).
\end{equation}
Moreover, the Hodge polynomials $e(\overline{Conf}_{\Gamma/\delta_{\cN}(\Gamma)}(X))(x,y)$
and $e(C_{\gamma/\delta_{\cN}(\gamma)}(\A^d))(x,y)$ satisfy the relation
\begin{equation}\label{eConfXAd}
\begin{array}{ll}
\displaystyle{\sum_{\cN \in \cG_\Gamma\text{-nests}}   e(Conf_{\Gamma/\delta_{\cN}(\Gamma)}(X))} 
& \displaystyle{\prod_{\gamma \in \BV_{\cT(\cN)}} e(C_{\gamma/\delta_{\cN}(\gamma)}(\A^d))}  = \\[3mm] 
\displaystyle{\sum_{\cN \in \cG_\Gamma\text{-nests}} e(Conf_{\Gamma/\delta_{\cN}(\Gamma)}(X))} & \displaystyle{\left( 1+
\sum_{\cN' \in \cG_\Gamma\text{-nests}\,:\, \cN'\subset \cN} \prod_{\gamma \in \cG_\Gamma}
(e(\P^{r_{\gamma,\cN'}-1})-1) \right).}
\end{array}
\end{equation}
\end{Proposition}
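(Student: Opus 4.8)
The plan is to deduce both statements formally from the fact, already recalled in this section, that the virtual Hodge polynomial is a motivic invariant: it factors through the Grothendieck ring as a ring homomorphism $e: K_0(\cV)\to \Z[x,y]$ with $e([X])=e(X)(x,y)$, sending sums to sums and products to products. Since the formulas \eqref{eConfX} and \eqref{eConfXAd} are obtained from identities we have already established in $K_0(\cV)$ simply by replacing each Grothendieck class with the corresponding virtual Hodge polynomial, both will follow by applying $e$ term by term. Note that no localization at $\L$ is needed: the $K_0(\cV)$-identities in question are polynomial in $\L$, hence live in $K_0(\cV)$ itself, and $e$ applies directly.

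For the formula \eqref{eConfX}, I would start from the expression \eqref{classCGX} of Proposition \ref{GrclassConf} for $[\overline{Conf}_\Gamma(X)]$ in $K_0(\cV)$, and first rewrite its inner sum using Lemma \ref{idLmuK0}, which gives $\sum_{\mu\in M_\cN}\L^{\|\mu\|}=\prod_{\gamma\in\cG_\Gamma}([\P^{r_{\gamma,\cN}-1}]-1)$. This puts $[\overline{Conf}_\Gamma(X)]$ in the form $[X]^{|\BV_\Gamma|}+\sum_{\cN}[X]^{|\BV_{\Gamma/\delta_\cN(\Gamma)}|}\prod_{\gamma}([\P^{r_{\gamma,\cN}-1}]-1)$. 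Applying $e$ and using $e([X]^k)=e(X)^k$, $e([\P^m])=e(\P^m)$, and multiplicativity over the product indexed by $\gamma\in\cG_\Gamma$, yields \eqref{eConfX}.

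For the relation \eqref{eConfXAd}, the plan is even more direct: apply $e$ to both sides of the identity \eqref{sumNclasses} of Proposition \ref{idK0Conf}. On the left-hand side, $e$ distributes over the sum over $\cG_\Gamma$-nests and over the product over the vertices $\gamma\in\BV_{\cT(\cN)}$, replacing $[Conf_{\Gamma/\delta_\cN(\Gamma)}(X)]$ by $e(Conf_{\Gamma/\delta_\cN(\Gamma)}(X))$ and $[C_{\gamma/\delta_\cN(\gamma)}(\A^d)]$ by $e(C_{\gamma/\delta_\cN(\gamma)}(\A^d))$; on the right-hand side it likewise replaces the classes $[Conf_{\Gamma/\delta_\cN(\Gamma)}(X)]$ and $[\P^{r_{\gamma,\cN'}-1}]$ by their virtual Hodge polynomials, leaving the constant $1$ unchanged. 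This is precisely \eqref{eConfXAd}.

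I do not anticipate any genuine obstacle: the argument is purely formal once the Grothendieck-ring identities of Propositions \ref{GrclassConf} and \ref{idK0Conf} and Lemma \ref{idLmuK0} are in hand. The only point requiring a word of justification is that the virtual Hodge polynomial defined by \eqref{HodgePoly} really is motivic, i.e.\ satisfies \eqref{invmot}; this is the standard behaviour of mixed Hodge structures on compactly supported cohomology under the scissor and product relations, which has already been invoked above, so no new input is needed.
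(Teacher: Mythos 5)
Your proposal is correct and follows essentially the same route as the paper: apply the ring homomorphism $e:K_0(\cV)\to\Z[x,y]$ to the Grothendieck-ring identities \eqref{classCGX} (after rewriting the inner sum via Lemma \ref{idLmuK0}) and \eqref{sumNclasses}. The paper's proof is precisely this observation, stated more briefly.
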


\proof The result follows directly from the Grothendieck ring calculations in \eqref{classCGX}
and \eqref{sumNclasses}, using the fact that the virtual Hodge polynomial defines a ring homomorphism $e: K_0(\cV) \to \Z[x,y]$.
\endproof

\section{Residues of Feynman integrals}

We now consider the Feynman integrals in configuration space and discuss the
relevance of the motivic point of view discussed in the previous sections.
The regularization and renormalization of Feynman amplitudes in configuration
space, using the wornderful compactifications of \cite{DecoPro}, \cite{li2}, was
recently analyzed in the paper of Bergbauer--Brunetti--Kreimer \cite{BerBruKr}.
Here we concentrate on the {\em residues} of the Feynman amplitudes, rather
than on their renormalized values as in \cite{BerBruKr}. We then find a setting that
parallels to some extent the analysis in terms of Hodge structures for the
Feynman amplitudes in momentum space given in \cite{BK}.

\begin{Definition}\label{logdivprimdef}
Suppose given an underlying variety $X$ of dimension $d=\dim X$.
A connected graph $\Gamma$ is logarithmically divergent (log divergent) if
it satisfies the condition
\begin{equation}\label{logdiv}
d \, b_1(\Gamma) = 2 \, |\BE_\Gamma|,
\end{equation}
or equivalently (for connected graphs)
\begin{equation}\label{logdiv2}
 (d-2)\, | \BE_\Gamma| = d\, (|\BV_\Gamma| -1),
\end{equation} 
and all subgraphs $\gamma \subseteq \Gamma$ satisfy
\begin{equation}\label{subgrlogdiv}
d \, b_1(\gamma) \leq 2 \, |\BE_\gamma|,
\end{equation}
which for a connected subgraph means $(d-2) |\BE_\gamma| \leq  d |\BV_\gamma| -d$.
A subgraph $\gamma \subseteq \Gamma$ is divergent if it satisfies
$d \, b_1(\gamma) = 2 \, |\BE_\gamma|$.
A primitive graph is a log divergent graph that contains no divergent subgraphs.
A graph with $d \, b_1(\Gamma) > 2 \, |\BE_\Gamma|$ is said to have worse 
than logarithmic divergences. For connected graphs this corresponds to
$(d-2) |\BE_\Gamma| > d \, |\BV_\Gamma| -d$.
\end{Definition}

In the four-dimensional case $d=4$ the log divergent condition recovers
the usual condition that the graph has $n$ loops and $2n$ edges. Renormalization
in momentum space for graphs with worse than logarithmic divergences was
considered from a Hodge theoretic point of view in \cite{BK}.

In \cite{BerBruKr}, the regularization and renormalization of Feynman integrals
in configuration spaces is obtained in the primitive case by a simple subtraction,
whereby the Feynman density is pulled back to the wonderful compactification
$\overline{Conf}_\Gamma(X)$ and regularized there to a meromorphic function
of a complex parameter $s$ with a pole at $s=1$, whose residue is supported on
the exceptional divisor of the blowup. This is then subtracted (local minimal subtraction)
and the resulting density is pushed forward to a regular density on $X^{\BV_\Gamma}$ 
whose value at $s=1$ is the renormalized density (see Theorem 3.1 of \cite{BerBruKr}). 
The case of log-divergent, non-primitive graphs is more complicated because 
the stratification of the exceptional divisor of the blowup plays an important 
role and the regularization and renormalization procedure is given by a local
minimal subtraction in every factor of a product indexed over $\cG$-nests, 
see Theorem 5.3 of \cite{BerBruKr}.

\subsection{Weights of Feynman graphs}\label{weightsSec}

We consider as above a (quasi)projective variety $X$ of dimension $d=\dim X$.
We write $X(\C)$ for its complex points and $M=X(\R)$ for its real part. In particular,
one can consider the case where  $X = \P^d(\C)$ and $M=\P^d(\R)$, as a compactification 
of the $d$-dimensional spacetime $\R^d$. 

We also consider a scalar quantum field theory where the Lagrangian has a 
potential $U$ given by a polynomial in field $\phi$,
\begin{eqnarray*}
U = \sum_{k=1}^s U_k \phi^k.
\end{eqnarray*}

Let then $\GG$ be a connected Feynman diagram of the quantum field 
theory having no multiple edges or tadpoles (looping edges). 
Let $Conf_\Gamma(X)$ and $\overline{Conf}_\Gamma(X)$ be the
configuration space and its wonderful compactification, as in the
previous sections. We also consider $Conf_\Gamma(M)$ and 
$\overline{Conf}_\Gamma(M)$, which are the real
loci of $Conf_\Gamma(X)$ and $\overline{Conf}_\Gamma(X)$, 
respectively.

\begin{rem}\label{reallocus}
Notice that the real locus we consider here is not the ``real blowup" of
$M^{\BV_\Gamma}$ in the sense of \cite{AxSing2} and \cite{BerBruKr},
but the real locus of the complex blowup $\overline{Conf}_\Gamma(X)$
of the complex manifold $X^{\BV_\Gamma}$. The real blowup, as 
shown in \cite{AxSing2}, is a real manifold with corners, hence it defines a
chain with boundary. The real locus of the complex blowup is a real 
algebraic variety. Thus, it defines a
middle dimensional {\em cycle} in the complex variety. However,
the real variety $\overline{Conf}_\Gamma(M)$ may be non-orientable.
\end{rem}

Feynman rules assign a weight to a graph $\GG$ as follows: 

\begin{itemize}
\item The vertices are labelled by the coordinates $x_1,\dots,x_n$ of $Conf_\Gamma(M)$.
\item To each edge with $\dd_\GG (e) = \{x,y\}$, one assigns a massless Euclidean
propagator 
\begin{equation}\label{propagator}
G(x-y) = i  \left( \frac{1}{(x-y)^2} \right)^{\frac{d-2}{2}}
\end{equation}
These are rational functions when the dimension $d$ is even. 
\item The {\it (unregularized)  weight} of the graph $\Gamma$ is defined as
\begin{eqnarray}
^0W_\GG := \int_{M^{\BV_\GG}}  \Go_\GG
 \label{eqn_weights}
\end{eqnarray}
where $M^{\BV_\Gamma}=X^{\BV_\Gamma}(\R)$ is the real locus, and
\begin{equation}\label{omega0dens}
 \Go_\GG := \prod_{v \in \BV_\GG} U_{|v|} \times 
\prod_{{ \dd_\GG (e) = \{v_e.v^e\}} \atop {e \in \BE_\GG} } G(x_{v_e} - x_{v^e}) 
\prod_{v \in\BV_\GG} dx_v,
\end{equation}
where $U_{|v|}$ is the coefficient of the monomial $\phi^k$ in the potential $U$ with
$k=|v|$ the valence of the vertex $v$.
\end{itemize}

\subsection{Graph hypersurfaces and divergences}
Let $\pi_\Gamma:  \overline{Conf}_\Gamma(X) \to X^{\BV_\GG}$ be the rational map 
inductively constructed in \S \ref{blowupSec} as iterated blowups, and let $\pi_\Gamma^\R:  \overline{Conf}_\Gamma(M) \to M^{\BV_\GG}$ be its restriction to the real locus. 

\begin{Lemma}\label{divMGamma}
The divergent locus of the density $\Go_\GG$ of \eqref{omega0dens} in $M^{\BV_\Gamma}$ is given by the union of diagonals $\bigcup_{e \in \BE_\GG} \GD_e$.
\end{Lemma}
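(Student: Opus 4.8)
The plan is to read off the divergent locus directly from the explicit form of $\Go_\GG$ in \eqref{omega0dens}. The density is a product over vertices of the (constant) coupling factors $U_{|v|}$, a product over edges of the propagators $G(x_{v_e}-x_{v^e})$, and the volume form $\prod_v dx_v$, which is smooth and nowhere vanishing on $M^{\BV_\Gamma}$. So the only source of singularities is the product of propagators, and by \eqref{propagator} each propagator $G(x-y) = i\,((x-y)^2)^{-(d-2)/2}$ is a smooth (indeed, when $d$ is even, rational) function away from the locus $\{x=y\}$ and blows up precisely there, since $(x-y)^2\to 0$ exactly as $x\to y$ on the real Euclidean space. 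Hence the factor $G(x_{v_e}-x_{v^e})$ associated to an edge $e$ with $\Bdd_\GG(e)=\{v_e,v^e\}$ is singular exactly along the diagonal $\GD_e$ of \eqref{Deltae}.

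First I would note that, since $\Go_\GG$ is a product of these edge factors (times nowhere-zero terms), its singular locus is contained in the union $\bigcup_{e\in\BE_\GG}\GD_e$: away from that union every edge factor is regular, so the product is regular. Conversely, for the reverse inclusion, I would fix an edge $e_0$ and argue that $\Go_\GG$ genuinely blows up along $\GD_{e_0}$ and is not cancelled by the other factors: the other propagators $G(x_{v_e}-x_{v^e})$ for $e\neq e_0$ are either still singular along $\GD_{e_0}$ (which only makes the blow-up worse, not better) or are regular and \emph{nonzero} in a neighbourhood of a generic point of $\GD_{e_0}$ (because generically on $\GD_{e_0}$ the endpoints of a different edge are distinct, so $(x_{v_e}-x_{v^e})^2\neq 0$ there), while the couplings $U_{|v|}$ are nonzero constants (assuming the relevant monomials appear in the potential $U$) and the volume form is nonvanishing. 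Thus no cancellation occurs and $\GD_{e_0}$ is contained in the divergent locus. Taking the union over $e_0\in\BE_\GG$ gives $\bigcup_{e\in\BE_\GG}\GD_e\subseteq$ divergent locus, and together with the first inclusion this proves equality.

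The only subtlety I anticipate is the reverse inclusion: in principle, at special points of $\GD_{e_0}$ where several other diagonals also pass, one must be a little careful that the blow-up along $\GD_{e_0}$ is not somehow masked; but this is handled by testing at a \emph{generic} point of $\GD_{e_0}$, where no other diagonal passes, so that all other edge factors are finite and nonzero there and the pole of $G(x_{v_{e_0}}-x_{v^{e_0}})$ survives. There is also the harmless implicit hypothesis that the couplings $U_{|v|}$ appearing are nonzero (graphs with a zero coupling do not contribute), and the observation that $G$ has no zeros, only poles. With these remarks the statement is immediate from the product structure of $\Go_\GG$.
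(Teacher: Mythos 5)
Your proposal is correct, and it is essentially the same argument as in the paper: both rest on the observation that the only singular factors in $\Go_\GG$ are the propagators, and that over the real Euclidean points the quadratic form $(x-y)^2 = \sum_i (x_i-y_i)^2$ vanishes only when $x=y$, so each propagator blows up precisely along the corresponding diagonal $\GD_e$ in $M^{\BV_\Gamma}$. One small difference in emphasis: the paper first records the pole locus in the \emph{complex} variety $X^{\BV_\GG}$ as the graph hypersurface $Z_\GG=\{\prod_e (x_{v_e}-x_{v^e})^2=0\}$ (a codimension-one union of quadric cones, much larger than the codimension-$d$ diagonals), and then observes that $Z_\GG(\C)\cap M^{\BV_\GG}=\bigcup_e\GD_e$ because the defining polynomial is real and non-negative on real points. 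This complex-versus-real contrast is set up deliberately, since $Z_\Gamma$ reappears later when discussing the motive of the hypersurface complement. Your supplementary remark about checking, at a generic point of $\GD_{e_0}$, that the other factors are finite and nonzero so the pole is not cancelled is a sensible addition that the paper's proof treats as implicit.
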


\proof For massless Euclidean field theories, 
the {\it graph hypersurface}  of $\GG$ (that is, the pole locus $\{\Go_\GG = \infty\}$ in 
$X^{\BV_\GG}$) is simply the union of quadrics
\begin{eqnarray}
Z_\GG := \left\{\prod_{{ \dd_\GG (e) = \{v_e.v^e\}} \atop {e \in \BE_\GG} } (x_{v_e} - x_{v^e})^2
= 0 \right\}.
\label{eqn_g_hyp}
\end{eqnarray}
The defining equation (\ref{eqn_g_hyp}) of $Z_\GG$ is a real polynomial with
non-negative values on real points, hence the intersection $Z_\GG(\C) \bigcap M[\GG]$
is given by $x_{v_e} = x_{v^e}$ i.e., it is the union of diagonals 
$\bigcup_{e \in \BE_\GG} \GD_e \subset  M^{|\BV_\GG|}$.
\endproof

\subsection{Order of poles in the blowups}

In the following, assuming $d$ even, we use the notation
\begin{equation}\label{fe}
f_e (x) = (x_{v_1} - x_{v_2}) \ \ \ \text{ for } \ \  \{ v_1, v_2 \} = \dd(e),
\end{equation}
so that the propagator $G$ of
\eqref{propagator} is given by $G(x_{v_1} - x_{v_2})= f_e^{2-d}(x)$, 
as in \eqref{omega0dens}. The function $f_e$ is also the defining
function of the diagonal $\Delta_e=\{ f_e =0 \}$, which is a codimension 
$d$ subvariety in $X^{\BV_\Gamma}$.

\begin{Proposition}
\label{prop_form}
Let $\Gamma$ be a primitive, biconnected, log divergent graph. Then
the proper transform $\widetilde{\Go}_\GG=\pi_\Gamma^*(\omega_\Gamma)$ 
of the form $\Go_\GG$ of \eqref{omega0dens} to the blowup of $X^{\BV_\Gamma}$
along the deepest diagonal $\Delta_\Gamma$ has a pole of order one along
the exceptional divisor, while the pullback to the blowups along (the dominant 
transforms of) the (poly)diagonals $\Delta_\gamma$, with $\gamma \subset \Gamma$
have no other poles along the exceptional divisors $E_\gamma$.  
\end{Proposition}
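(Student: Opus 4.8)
The plan is to work locally near the exceptional divisors and count the order of vanishing of the numerator against the order of vanishing of the denominator in the pulled-back form. The form $\Go_\GG$ has denominator $\prod_{e\in\BE_\Gamma} f_e^{d-2}$ (up to the constant factors $U_{|v|}$), and numerator the volume form $\prod_v dx_v$. First I would recall that $\Delta_\Gamma$ has codimension $d(|\BV_\Gamma|-1)$ in $X^{\BV_\Gamma}$, and that blowing up along $\Delta_\Gamma$ introduces an exceptional divisor $E_\Gamma$ which in local coordinates is cut out by a single function $t$, with the normal coordinates to $\Delta_\Gamma$ scaling linearly in $t$. Concretely, if we pick a spanning tree and use as coordinates the differences associated to tree edges (which, after translating so that one vertex is fixed, span the normal directions to $\Delta_\Gamma$), then on the blowup each such coordinate becomes $t\,u_i$ with $(u_i)$ the fiber coordinates on the projectivized normal bundle. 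Every edge variable $f_e$ is a linear combination of these tree-edge differences, so $f_e = t\,\ell_e(u)$ for a linear form $\ell_e$ not identically zero (since $\Gamma$ has no looping edges).

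Next I would carry out the order count for $E_\Gamma$. The denominator $\prod_e f_e^{d-2}$ pulls back to $t^{(d-2)|\BE_\Gamma|}\prod_e \ell_e(u)^{d-2}$. The volume form $\prod_v dx_v$, written in the tree-difference coordinates plus the fixed-vertex coordinate, pulls back with a Jacobian factor $t^{\,d(|\BV_\Gamma|-1)-1}$ coming from the $d(|\BV_\Gamma|-1)$ normal coordinates each contributing one power of $t$, minus one because one of the radial directions is absorbed into $dt$. Hence the pullback $\widetilde\Go_\Gamma$ carries the net power $t^{\,d(|\BV_\Gamma|-1)-1-(d-2)|\BE_\Gamma|}$. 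Using the log-divergence identity \eqref{logdiv2}, namely $(d-2)|\BE_\Gamma| = d(|\BV_\Gamma|-1)$, this exponent is exactly $-1$, so $\widetilde\Go_\Gamma$ has a pole of order precisely one along $E_\Gamma$. (One should also note the $\prod_e \ell_e(u)^{d-2}$ factor is not identically zero on $E_\Gamma$, so the pole is genuinely order one, not less.)

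For the remaining exceptional divisors $E_\gamma$ with $\gamma\subsetneq\Gamma$ biconnected induced, I would run the identical computation but now the relevant coordinates are only those normal to $\Delta_\gamma$, i.e.\ the differences internal to $\gamma$; coordinates transverse to $\Delta_\gamma$ but outside $\gamma$ are unaffected, and edges $e\notin\BE_\gamma$ pull back to units (their $f_e$ do not vanish on the generic point of $\Delta_\gamma$). The net power of the local equation of $E_\gamma$ is then $d(|\BV_\gamma|-1)-1-(d-2)|\BE_\gamma|$. Because $\Gamma$ is primitive, $\gamma$ is not divergent, so \eqref{subgrlogdiv} holds with \emph{strict} inequality, $(d-2)|\BE_\gamma| < d(|\BV_\gamma|-1)$, i.e.\ $(d-2)|\BE_\gamma| \le d(|\BV_\gamma|-1)-d \le d(|\BV_\gamma|-1)-1$ (using $d\ge 1$), which makes the exponent $\ge 0$: no pole along $E_\gamma$. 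To be careful about the iterated nature of the blowup, I would argue inductively along the tower \eqref{Yks}, ordering the diagonals by decreasing number of vertices as in Proposition \ref{blowupYn}, and at each stage check that a divisor already produced at an earlier (deeper) stage does not acquire new poles: this follows because passing to dominant transforms only multiplies local equations by units along the strata where the earlier exceptional divisors live, together with the transversality of the relevant diagonals established in Lemmata \ref{disjunLem} and \ref{lemsubgr}.

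\textbf{Main obstacle.} The genuinely delicate point is bookkeeping the Jacobian of the volume form under the iterated blowups and making sure the ``$-1$'' in the exponent (the absorbed radial direction) is counted exactly once per blowup and not, say, once globally; equivalently, one must verify that the chart descriptions of the iterated dominant transforms are compatible so that the per-divisor order counts above are the correct ones and do not interfere with one another. The cleanest way to handle this is to fix, for a given $\cG_\Gamma$-nest, the explicit local model of $\overline{Conf}_\Gamma(X)$ coming from Theorem 1.3 of \cite{li2} (a product of coordinate charts, one radial parameter $t_\gamma$ per nested subgraph), express each $f_e = \prod_{\gamma\ni e}\! t_\gamma \cdot (\text{unit})$, and then the volume-form Jacobian factors as $\prod_\gamma t_\gamma^{\,c_\gamma-1}$ with $c_\gamma$ the codimension of the relevant diagonal \emph{inside the next one up}; the exponent of $t_\gamma$ in $\widetilde\Go_\Gamma$ is then $(c_\gamma-1)-(d-2)\,|\{e:\gamma\ni e\}|$ restricted appropriately, and the log-divergence/primitivity inequalities give exactly $-1$ for $\gamma=\Gamma$ and $\ge 0$ otherwise, as above.
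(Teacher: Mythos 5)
Your proposal is correct and follows essentially the same route as the paper: both compute, in local blowup coordinates, the exponent $(d-2)|\BE_\gamma| - d(|\BV_\gamma|-1) + 1$ along $E_\gamma$ by comparing the $(d-2)|\BE_\gamma|$ from the propagators with the codimension-minus-one Jacobian factor, then invoke \eqref{logdiv2} for $E_\Gamma$ and the strict subgraph inequality (from primitivity) to rule out poles on the other $E_\gamma$. One small slip: the chain $(d-2)|\BE_\gamma| < d(|\BV_\gamma|-1) \Rightarrow (d-2)|\BE_\gamma| \le d(|\BV_\gamma|-1)-d$ is not justified; what you actually need and what integrality gives is $\le d(|\BV_\gamma|-1)-1$, which suffices.
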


\begin{proof} In the model case of a 
coordinate linear space $L$ defined by equations 
$\{z_1= \cdots = z_p =0\} \subset \C^{d \, |\BV_\GG|}$, one can choose coordinate
charts in the blowup with coordinates $w_i$, so that $w_i = z_i$ for $i=p,\ldots,  d |\BV_\GG|$
and $w_i w_p= z_i$ for $i<p$, so that, in these coordinates, the exceptional divisor is defined 
by $w_p=0$. Thus, the orientation form satisfies
\begin{eqnarray} 
\pi^*(dz_1 \wedge \cdots \wedge d z_{d |\BV_\GG|}) &=& d(w_p w_1) \wedge \cdots \wedge d(w_p w_{p-1})
 						\wedge dw_p\wedge \cdots \wedge d w_{d |\BV_\GG|}  \nonumber \\
						&=& w^{p-1}_p dw_1 \wedge \cdots \wedge d w_{p-1} 
						\wedge dw_p\wedge \cdots \wedge d w_{d |\BV_\GG|}.  \nonumber
\end{eqnarray}
This has a zero of order ${\rm codim}(L) -1$ along the exceptional divisor of the blowup.

Let $\Delta_\gamma$ be the diagonal associated to a connected subgraph 
$\gamma \subset \Gamma$. One obtains a minimal set of equations defining
$\Delta_\gamma$ by choosing a spanning tree $\tau$ for $\gamma$. Then
\begin{equation}\label{spantreeDeltag}
\Delta_\gamma =\{ f_e =0 \, |\, e\in \BE_\tau \},
\end{equation}
with $f_e$ as in \eqref{fe}. This gives ${\rm codim}(\Delta_\gamma) = d \, |\BE_\tau|$.
For a spanning tree we have $|\BE_\tau|=|\BV_\gamma| -1$, since $\gamma$ is
connected, so this gives $\dim(\Delta_\gamma)= d\, (|\BV_\Gamma|-|\BV_\gamma|+1)$,
as we saw in the previous sections.

The form $\omega_\Gamma$ of \eqref{omega0dens} has order of pole along $\Delta_\gamma$
given by
\begin{equation}\label{poleomegaDg}
{\rm ord}_\infty(\omega_\Gamma,\Delta_\gamma) = (d-2) |\BE_\gamma|,
\end{equation}
coming from the factors $f_e^{2-d}$ with $e\in \BE_\gamma$.

When we consider the blowup $\pi_\gamma: Bl_{\Delta_\gamma}(X^{\BV_\Gamma}) \to X^{\BV_\Gamma}$ and we pull back the singular differential form $\omega_\Gamma$ along $\pi_\gamma$, we obtain a form $\pi_\gamma^* (\omega_\Gamma)$ that has order of pole along the exceptional
divisor $E_\gamma$ of the blowup given by
\begin{equation}\label{poleomegaDg}
{\rm ord}_\infty(\pi^*_\gamma(\omega_\Gamma),E_\gamma) = (d-2) |\BE_\gamma| - d |\BE_\tau| +1
= (d-2) |\BE_\gamma| - d (|\BV_\gamma| -1) +1.
\end{equation}

If the graph $\Gamma$ is a primitive, biconnected, log divergent graph, then $(d-2) |\BE_\Gamma| = d |\BV_\Gamma| -d$ and $\Gamma$ contains no divergent subgraphs, so that
$(d-2) |\BE_\gamma| < d |\BV_\gamma| -d$,  for all subgraphs $\gamma \subset \Gamma$.
Thus, in this case, the pullback $\pi_\Gamma^*(\omega_\Gamma)$ along the map that corresponds to the blowup along the deepest diagonal $\Delta_\Gamma$ 
has a pole of order one along the exceptional divisor, while all the further blowups along
the dominant transforms of the $\Delta_\gamma$ do not contribute any poles.
\end{proof}

This corresponds to the case analyzed in Theorem 3.1 of \cite{BerBruKr}, where one needs 
just one pole subtraction in order to renormalize the Feynman amplitude. Here it comes from
the subtraction of the simple pole along the exceptional divisor $E_\Gamma$ of the
blowup along the deepest diagonal $\Delta_\Gamma$. 

In the case where $\Gamma$ is log divergent but no longer primitive, the pullback of $\omega_\Gamma$
to the blowups along (the dominant transforms of) the $\Delta_\gamma$ with
$(d-2) |\BE_\gamma| = d (|\BV_\gamma| -1)$ has a pole of order one along the exceptional
divisor $E_\gamma$. 
This is the more general situation analyzed in \S 5 of \cite{BerBruKr}.
 
In the even more general case of graphs $\Gamma$ that have worse than logarithmic
singularities, one finds that the order of pole along the exceptional divisors of
the iterated chain of blowups that define the wonderful model $\overline{Conf}_\Gamma(X)$
is given by the following.

\begin{Corollary}\label{worselog}
Let $\Gamma$ be a connected graph which has worse than logarithmic divergences. Then
for every connected induced subgraph $\gamma \subset \Gamma$ that has 
$(d-2) |\BE_\Gamma| > d \, |\BV_\Gamma| -d$, the pullback
$\pi_\gamma^*(\omega_\Gamma)$ of the form $\omega_\Gamma$ of \eqref{omega0dens}
to the blowup along the (dominant transform of) $\Delta_\gamma$ has poles of higher order
\begin{equation}\label{ordmore1}
{\rm ord}_\infty (\pi_\gamma^*(\omega_\Gamma), E_\gamma) = 
(d-2) |\BE_\gamma| - d (|\BV_\gamma| -1) +1 > 1
\end{equation}
along the exceptional divisors $E_\gamma$ in the blowup.
\end{Corollary}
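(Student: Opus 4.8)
The plan is to read the statement off the pole-order computation already carried out in the proof of Proposition~\ref{prop_form}, after noting that that computation never used the primitivity or log-divergence hypotheses on $\Gamma$, only the connectedness of $\gamma$. Concretely, for a connected subgraph $\gamma \subseteq \Gamma$ choose a spanning tree $\tau$ of $\gamma$; then $\Delta_\gamma = \{ f_e = 0 \mid e \in \BE_\tau \}$ as in \eqref{spantreeDeltag}, so $\Delta_\gamma$ has codimension $d\,|\BE_\tau| = d\,(|\BV_\gamma| - 1)$ in $X^{\BV_\Gamma}$, while the form $\omega_\Gamma$ of \eqref{omega0dens} has a pole of order $(d-2)|\BE_\gamma|$ along $\Delta_\gamma$, coming from the factors $f_e^{2-d}$ with $e \in \BE_\gamma$ (with $f_e$ as in \eqref{fe}; note that a non-tree edge $e'$ has $f_{e'}$ a linear combination of the $f_e$, $e\in\BE_\tau$, so it contributes the same order). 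Since the order of pole of a meromorphic form along an irreducible divisor is computed at a generic point, one may work near a generic point of the exceptional divisor $E_\gamma$; there the iterated blowup reduces to the single blowup of $\Delta_\gamma$, and the iterated pullback of $\omega_\Gamma$ reduces to $\omega_\Gamma$ itself, since the other exceptional divisors and the remaining poles of $\omega_\Gamma$ (the diagonals $\Delta_e$, $e\notin\BE_\gamma$, and $\Delta_{\gamma'}$ for $\gamma'$ not comparable to $\gamma$) meet $\Delta_\gamma$ only along proper subvarieties, hence avoid a generic point of $E_\gamma$. The local model computation of the proof of Proposition~\ref{prop_form}, namely $\pi^*(dz_1 \wedge \cdots \wedge dz_{d|\BV_\Gamma|}) = w_p^{\,p-1}\, dw_1 \wedge \cdots \wedge dw_{d|\BV_\Gamma|}$ with $p = {\rm codim}(\Delta_\gamma)$, then shows the blowup contributes a zero of order ${\rm codim}(\Delta_\gamma) - 1$ along $E_\gamma$. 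Adding up the three contributions gives
\[
{\rm ord}_\infty\big(\pi_\gamma^*(\omega_\Gamma),\, E_\gamma\big) = (d-2)|\BE_\gamma| - d\,(|\BV_\gamma| - 1) + 1,
\]
which is \eqref{ordmore1}.

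It then remains only to check the strict inequality. By hypothesis $\gamma$ satisfies $(d-2)|\BE_\gamma| > d\,|\BV_\gamma| - d = d\,(|\BV_\gamma| - 1)$; since both sides are integers, this forces $(d-2)|\BE_\gamma| - d\,(|\BV_\gamma| - 1) \geq 1$, and hence ${\rm ord}_\infty(\pi_\gamma^*(\omega_\Gamma), E_\gamma) \geq 2 > 1$, as claimed. (By Definition~\ref{logdivprimdef} a connected graph with worse than logarithmic divergences is precisely one admitting such a subgraph $\gamma$ --- for instance $\gamma = \Gamma$ itself.) There is essentially no obstacle here: the whole content is contained in the proof of Proposition~\ref{prop_form}, and the only point worth stressing is the one already noted, that the derivation of the pole-order formula there is insensitive to the global hypotheses on $\Gamma$ and was specialized to the primitive log-divergent case only in its final lines.
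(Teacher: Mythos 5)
Your proof is correct and matches the paper's (implicit) reasoning: the paper states the corollary without proof because the pole-order formula $(d-2)|\BE_\gamma| - d(|\BV_\gamma|-1)+1$ was already derived in the proof of Proposition~\ref{prop_form} for an arbitrary connected subgraph $\gamma$, with the primitivity and log-divergence hypotheses entering only when evaluating the sign of this quantity. Your added observations — that the order is computed at a generic point of $E_\gamma$ where the iterated blowup reduces to a single blowup, and that non-tree edges of $\gamma$ contribute the same order because their $f_{e'}$ are linear combinations of the tree $f_e$'s — are correct clarifications of steps the paper leaves implicit, and the final integrality argument for the strict inequality is exactly right.
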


\subsection{The Poincar\'e residue}

We discuss briefly the residues of Feynman amplitudes, first in the primitive and the log divergent case and then
in the more general case of graphs with worse than logarithmic divergences. We want to
remain within the setting of algebraic varieties and periods, hence we describe the residues
of Feynman amplitudes in terms of Poincar\'e residues and Hodge structures. 

We recall the basic definition of the Poincar\'e residue of a differential
form with simple poles along a hypersurface (see \cite{GrHa}, p.147).
Given a hypersurface $Y$ in a smooth $n$-dimensional 
projective variety $X$, locally defined by an equation $\{ f (z) =0 \}$,  
an $n$-form 
\begin{equation}\label{omegagf}
 \omega = \frac{g(z) \, dz_1 \wedge \cdots \wedge dz_n}{f(z)}  \in \Omega^n(X) 
\end{equation} 
can always be written as 
\begin{equation}\label{omegafdf}
 \omega = \frac{df}{f} \wedge \omega', 
\end{equation}
where $\omega'$ can be taken of the form
$$ \omega' =(-1)^{i-1} \frac{g(z)\, dz_{[i]}}{\frac{\partial f}{\partial z_i}}, $$
for any $i$ such that $\frac{\partial f}{\partial z_i} \neq 0$, with
$$ d z_{[i]} = dz_1 \wedge \cdots \wedge dz_{i-1} \wedge dz_{i+1} \wedge\cdots
\wedge dz_n. $$
The Poincar\'e residue of $\omega$ is then the $(n-1)$-form on $V$ defined by
\begin{equation}\label{PoiRes}
{\rm Res}[\omega] =\left. (-1)^{i-1} \frac{g(z)\, dz_{[i]}}{\frac{\partial f}{\partial z_i}} \right|_{f=0} \in \Omega^{n-1}(Y).
\end{equation}

\begin{Proposition}\label{uniqueRes}
Let $\Gamma$ be a biconnected, primitive, log divergent graph. Then the pullback of
the differential form $\omega_\Gamma$ of \eqref{omega0dens} to the wonderful model 
$\overline{Conf}_\Gamma(X)$ has a unique residue, 
which is a $(d|\BV_\Gamma|-1)$-form on the exceptional divisor $E_\Gamma$ of the
blowup of the deepest diagonal, 
\begin{equation}\label{Restildeomega}
{\rm Res}[\pi^*(\omega_\Gamma)]  \in \Omega^{d \, |\BV_\Gamma| -1}(E_\Gamma).
\end{equation}
\end{Proposition}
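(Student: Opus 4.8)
The plan is to combine the pole-order computation already carried out in Proposition \ref{prop_form} with the construction of the Poincar\'e residue, checking that everything is well-defined after pullback to $\overline{Conf}_\Gamma(X)$. First I would invoke Proposition \ref{prop_form}: since $\Gamma$ is biconnected, primitive and log divergent, the pullback $\widetilde{\omega}_\Gamma=\pi_\Gamma^*(\omega_\Gamma)$ has a pole of order exactly one along the exceptional divisor $E_\Gamma$ of the blowup of the deepest diagonal $\Delta_\Gamma$, and \emph{no} poles along any of the other exceptional divisors $E_\gamma$ for $\gamma\subsetneq\Gamma$. This is precisely because the primitivity condition gives the strict inequality $(d-2)|\BE_\gamma| < d(|\BV_\gamma|-1)$ for every proper subgraph, so the order-of-pole formula \eqref{poleomegaDg} yields a negative exponent (i.e.\ a zero, not a pole) along each $E_\gamma$ with $\gamma\subsetneq\Gamma$. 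Hence on $\overline{Conf}_\Gamma(X)$ the form $\widetilde{\omega}_\Gamma$ is a meromorphic top-form whose polar locus is the single smooth hypersurface $E_\Gamma$, with a pole of order one there.

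Next I would apply the Poincar\'e residue construction recalled above: a meromorphic $n$-form on a smooth variety with a simple pole along a smooth hypersurface $Y$ has a well-defined residue ${\rm Res}[\widetilde{\omega}_\Gamma]\in\Omega^{n-1}(Y)$, obtained locally by writing $\widetilde{\omega}_\Gamma=\frac{df}{f}\wedge\omega'$ and restricting $\omega'$ to $\{f=0\}$. Here $n=d\,|\BV_\Gamma|=\dim X^{\BV_\Gamma}$, and $Y=E_\Gamma$, which is smooth (it is the exceptional divisor of a blowup along the smooth center $\Delta_\Gamma$, and by Proposition \ref{strataConf} the deepest diagonal is blown up first when $\Gamma$ is biconnected, so at that stage $E_\Gamma$ is just the projectivized normal bundle $\P(N_{\Delta_\Gamma\subset X^{\BV_\Gamma}})$). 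The residue is therefore a $(d\,|\BV_\Gamma|-1)$-form on $E_\Gamma$, giving \eqref{Restildeomega}. Uniqueness of the residue is the standard fact that ${\rm Res}[\omega]$ depends only on $\omega$ and not on the local choice of defining equation $f$ or of the auxiliary form $\omega'$ (different choices change $\omega'$ by a form divisible by $f$, which restricts to zero on $Y$); since $E_\Gamma$ is the \emph{only} polar divisor, there is just this one residue to take, which is the assertion that the residue is ``unique''.

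The main point requiring care --- and the step I would expect to be the real content --- is verifying that the pullback $\widetilde{\omega}_\Gamma$ genuinely has a \emph{simple} pole along $E_\Gamma$ and is \emph{regular} along all the other exceptional divisors, after the full iterated sequence of blowups, not just after the single blowup of $\Delta_\Gamma$. For the blowup of $\Delta_\Gamma$ this is exactly \eqref{poleomegaDg} with $\gamma=\Gamma$, using $(d-2)|\BE_\Gamma|=d(|\BV_\Gamma|-1)$, which gives order of pole $(d-2)|\BE_\Gamma|-d(|\BV_\Gamma|-1)+1=1$. For each subsequent blowup along a dominant transform of $\Delta_\gamma$ with $\gamma\subsetneq\Gamma$, one must check that pulling back a form that is already regular along $E_\gamma$ (by the strict primitivity inequality) does not introduce new poles: this follows because the Jacobian factor $w_p^{\,{\rm codim}-1}$ computed in the model-case calculation of Proposition \ref{prop_form} is a \emph{positive} power, so blowups only create zeros, never poles, along the new exceptional divisors, and the order of vanishing of the density along the proper transform of $\Delta_\gamma$ only decreases under further blowups. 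Once these facts are assembled, the residue exists, lives on $E_\Gamma$, and has the stated degree, completing the proof.
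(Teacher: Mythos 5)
Your proposal is correct and follows essentially the same route as the paper: invoke Proposition \ref{prop_form} to get a simple pole along $E_\Gamma$ and regularity along every other exceptional divisor (using the strict primitivity inequality), then apply the Poincar\'e residue construction on the first blowup to land in $\Omega^{d|\BV_\Gamma|-1}(E_\Gamma)$. One small remark: your final paragraph re-derives part of Proposition \ref{prop_form} rather than simply citing it, and the phrase about the order of vanishing ``only decreasing under further blowups'' is imprecisely worded --- the relevant point, which the paper states directly, is that the later blowup centers (the dominant transforms of $\Delta_\gamma$ for $\gamma\subsetneq\Gamma$) meet $E_\Gamma$ transversely and the positive Jacobian powers can only add zeros, so the residue form on $E_\Gamma$ simply acquires zeros at $E_\Gamma\cap E_\gamma$ without any new residues appearing.
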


\proof
We have seen in Proposition \ref{prop_form} that, in the case of a primitive graph $\Gamma$,
the pullback $\tilde \omega_\Gamma =\pi^*_\Gamma(\omega_\Gamma)$ to the blowup of
$X^{\BV_\Gamma}$ along the deepest diagonal $\Delta_\Gamma \simeq X$, is a differential
form as in \eqref{omegagf}, with a simple pole along the exceptional divisor $E_\Gamma$, with\,\, $f=0$ the defining equation of $E_\Gamma$. Therefore $\tilde \omega_\Gamma$ can be rewritten in the form \eqref{omegafdf} and it has a well defined Poincar\'e residue ${\rm Res}[\tilde\omega_\Gamma]$, which is a $(d|\BV_\Gamma|-1)$-form on the exceptional divisor $E_\Gamma$.
The successive blowups along the dominant transforms of
the $\Delta_\gamma$, for $\gamma$ ranging over $\cG_\Gamma$,
do not contribute any further poles, since the graph has no
subdivergences. Moreover, because the order of the sequence of blowup is
determined by ordering $\cG_\Gamma$ in such a way that $i \leq j$ if $\gamma_i \supseteq \gamma_j$, so that $\Delta_{\gamma_i}\subseteq \Delta_{\gamma_j}$, any two diagonals
$\Delta_i$ and $\Delta_j$ that intersect along $\Delta_{\gamma_i\cup \gamma_j}$ have
dominant transforms that no longer intersect, once the blowup along $ \Delta_{\gamma_i\cup \gamma_j}$ has been performed already, and intersect transversely the exceptional
divisor of this blowup. Thus, after the first blowup along the deepest diagonal $\Delta_\Gamma$,
one obtains a residue 
${\rm Res}[\pi^*_\Gamma(\omega_\Gamma)] \in \Omega^{d |\BV_\Gamma| -1} (E_\Gamma)$. 
The pullback of this form along the successive blowups gives a $(d |\BV_\Gamma| -1)$-form supported on the dominant transform $E_\Gamma$ in $\overline{Conf}_\Gamma(X)$, which has zeros at the intersections of $E_\Gamma$ with the other exceptional
divisors $E_\gamma$.
\endproof

Consider next the case where the graph $\Gamma$ is logarithmically divergent, but not
primitive. Let $\cG_\Gamma$ be
ordered in such a way that  $i \leq j$ if $\gamma_i \supseteq \gamma_j$, as before, with
$\pi: \overline{Conf}_\Gamma(X)\to X^{\BV_\Gamma}$ the iterated blowups along the $\Delta_\gamma$, with $\gamma \in \cG_\Gamma$ in the assigned ordering.

This means that there are connected induced
subgraphs $\gamma \subset \Gamma$ for which the pullback to the blowup along 
(the dominant transform of) $\Delta_\gamma$ of the form $\omega_\Gamma$
has poles of order one along the exceptional divisor $E_\gamma$.  They are
precisely those satisfying the divergence condition.

Let us denote by $\cG_\Gamma^{\log}$
the subset $\cG_\Gamma^{\log} \subset \cG_\Gamma$ of 
subgraphs $\gamma$, satisfying the logarithmic divergence condition
\begin{equation}\label{divcond} 
\cG_\Gamma^{\log} :=\{ \gamma \in \cG_\Gamma\,|\, (d-2) |\BE_\gamma| = d (|\BV_\gamma| -1) \}.
\end{equation}
We then have the following result on the residues of the Feynman amplitude.

\begin{Proposition}\label{nonprimLogDivRes}
Let $\Gamma$ be a logarithmically divergent, non-primitive graph.  Then the pullback $\pi^*(\omega_\Gamma)$ of the form \eqref{omega0dens} has Poincar\'e residues along each divisor $E_\gamma$ for $\gamma \in \cG_\Gamma^{\log}$. Then the residue is given by a form
\begin{equation}\label{ResLogDiv}
{\rm Res}^\ell [\pi^*(\omega_\Gamma)] \in \Omega^{n-\ell} (E_{\gamma_1}\cap \cdots \cap E_{\gamma_\ell}).
\end{equation}
This is trivial unless the set $\cG_\Gamma^{\log}$ is a $\cG_\Gamma$-nest.
\end{Proposition}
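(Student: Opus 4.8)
The plan is to read off the pole orders of $\pi^*(\omega_\Gamma)$ along the exceptional divisors of $\overline{Conf}_\Gamma(X)$ from Proposition \ref{prop_form}, and then to run the resulting form through the normal-crossings geometry of $\overline{Conf}_\Gamma(X)$ supplied by Proposition \ref{strataConf}. First I would record that, since $\Gamma$ is logarithmically divergent, condition \eqref{subgrlogdiv} gives $(d-2)|\BE_\gamma|\le d(|\BV_\gamma|-1)$ for every connected subgraph $\gamma$. Hence, by the pole-order formula \eqref{poleomegaDg} together with the discussion of the iterated non-primitive case following Proposition \ref{prop_form}, the order of pole of $\pi^*(\omega_\Gamma)$ along $E_\gamma$, for $\gamma\in\cG_\Gamma$, equals $1$ exactly when $\gamma\in\cG_\Gamma^{\log}$ and is $\le 0$ (no pole) otherwise. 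Since $\omega_\Gamma$ of \eqref{omega0dens} is holomorphic on $Conf_\Gamma(X)$, whose complement in $\overline{Conf}_\Gamma(X)$ is $\bigcup_{\gamma\in\cG_\Gamma}E_\gamma$ by Proposition \ref{strataConf} (its only poles being along the diagonals, Lemma \ref{divMGamma}), this shows $\pi^*(\omega_\Gamma)$ is a meromorphic top-degree form on $\overline{Conf}_\Gamma(X)$ whose polar divisor is reduced and supported on $D:=\bigcup_{\gamma\in\cG_\Gamma^{\log}}E_\gamma$.

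The next step is to note that $D$ is a normal crossings divisor: each $E_\gamma$ is smooth, and by the second part of Proposition \ref{strataConf} a finite intersection $E_{\gamma_1}\cap\dots\cap E_{\gamma_\ell}$ with $\gamma_i\in\cG_\Gamma^{\log}$ is nonempty only if $\{\gamma_1,\dots,\gamma_\ell\}$ is a $\cG_\Gamma$-nest, in which case the $E_{\gamma_i}$ meet transversely. Thus near any point of $D$ there are local coordinates $(z_1,\dots,z_n)$, $n=d|\BV_\Gamma|$, in which the branches of $D$ through the point are $\{z_1=0\},\dots,\{z_\ell=0\}$; as $\pi^*(\omega_\Gamma)$ is a top form with at worst simple poles along each component, in these coordinates it reads $g(z)\,dz_1\wedge\dots\wedge dz_n/(z_1\cdots z_\ell)$ with $g$ holomorphic, i.e.\ it is logarithmic along $D$. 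This makes the Poincar\'e residue \eqref{PoiRes} well defined along each $E_\gamma$ with $\gamma\in\cG_\Gamma^{\log}$, and shows that it is again logarithmic along the remaining components; iterating \eqref{PoiRes} then produces, for any $\gamma_1,\dots,\gamma_\ell\in\cG_\Gamma^{\log}$, a holomorphic form ${\rm Res}^\ell[\pi^*(\omega_\Gamma)]\in\Omega^{n-\ell}(E_{\gamma_1}\cap\dots\cap E_{\gamma_\ell})$, equal in the coordinates above to the restriction to $\{z_1=\dots=z_\ell=0\}$ of $\pm\, g(z)\,dz_{\ell+1}\wedge\dots\wedge dz_n$, hence independent up to sign of the order in which the residues are taken.

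Finally, the triviality assertion is immediate from \eqref{intersGnest}: $E_{\gamma_1}\cap\dots\cap E_{\gamma_\ell}=\emptyset$ unless $\{\gamma_1,\dots,\gamma_\ell\}$ is a $\cG_\Gamma$-nest, so ${\rm Res}^\ell[\pi^*(\omega_\Gamma)]$ is the zero form whenever it is not; applying this to the whole collection $\cG_\Gamma^{\log}$ shows the residue vanishes identically unless $\cG_\Gamma^{\log}$ itself is a $\cG_\Gamma$-nest.

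I expect the one genuinely delicate point to be the bookkeeping of pole orders under the \emph{iterated} blowup: one must verify that, after the earlier blowups along the dominant transforms of the larger diagonals, the pole order of $\pi^*(\omega_\Gamma)$ along each $E_\gamma$ is still the one given by \eqref{poleomegaDg}, and that taking one Poincar\'e residue does not create higher-order poles along the remaining exceptional divisors. Both are handled by working in the coordinates adapted simultaneously to all exceptional divisors passing through a point of a fixed $\cG_\Gamma$-nest — which exist precisely because of the transversality in Proposition \ref{strataConf} — in which $\pi^*(\omega_\Gamma)$ is visibly a product of simple poles times a holomorphic top-degree form.
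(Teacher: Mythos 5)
Your proof follows essentially the same route as the paper's: establish from the pole-order formula \eqref{poleomegaDg} and the logarithmic-divergence hypothesis that $\pi^*(\omega_\Gamma)$ has simple poles along exactly the $E_\gamma$ with $\gamma\in\cG_\Gamma^{\log}$; invoke the transversality in Proposition \ref{strataConf} to make sense of iterated Poincar\'e residues; and deduce triviality from \eqref{intersGnest}. The paper packages the iteration step by citing \cite{ATYu}, while you unwind it in adapted coordinates, but there is no methodological difference.

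There is, however, a genuine inaccuracy in your second paragraph. You assert that $\omega_\Gamma$ is holomorphic on $Conf_\Gamma(X)$ and that, consequently, the polar divisor of $\pi^*(\omega_\Gamma)$ on $\overline{Conf}_\Gamma(X)$ is reduced and supported on $D=\bigcup_{\gamma\in\cG_\Gamma^{\log}}E_\gamma$, citing Lemma \ref{divMGamma}. But Lemma \ref{divMGamma} concerns the \emph{real} locus $M^{\BV_\Gamma}$ only: over $\R$ the quadric $(x_{v_e}-x_{v^e})^2=0$ cuts out the diagonal $\Delta_e$, but over $\C$ it cuts out a genuine hypersurface $Z_\Gamma$ of \eqref{eqn_g_hyp} which strictly contains (indeed is a cone over) the codimension-$d$ diagonals. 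So $\omega_\Gamma$ is holomorphic only on $X^{\BV_\Gamma}\smallsetminus Z_\Gamma$, not on all of $Conf_\Gamma(X)$, and the polar divisor of $\pi^*(\omega_\Gamma)$ also contains the proper transform of $Z_\Gamma$. The paper's proof is careful here: it says $\pi^*(\omega_\Gamma)$ is defined on $\overline{Conf}_\Gamma(M)\smallsetminus\bigcup_\gamma E_\gamma$, the \emph{real} locus, precisely to avoid this issue. Your error does not destroy the statement — taking a Poincar\'e residue along $E_\gamma$ is a local operation requiring only that the pole along $E_\gamma$ itself be simple, which you have correctly established — but it does falsify your claim that the local expression $g(z)\,dz_1\wedge\cdots\wedge dz_n/(z_1\cdots z_\ell)$ has $g$ holomorphic and that the iterated residue is a holomorphic form. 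The residues are rational forms on $E_{\gamma_1}\cap\cdots\cap E_{\gamma_\ell}$, with remaining poles along the trace of (the proper transform of) $Z_\Gamma$. You should state them as such; the ``remaining singularities coming from $Z_\Gamma$'' in fact reappear explicitly in the proof of Proposition \ref{ambiguity}, so the distinction is not merely cosmetic for the paper's purposes.
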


\proof First notice that the form $\pi^*(\omega_\Gamma)$ has poles of order one 
along $E_\gamma$, for each $\gamma \in \cG_\Gamma$ 
satisfying $(d-2) |\BE_\gamma| = d (|\BV_\gamma| -1)$.   Thus, $\pi^*(\omega_\Gamma)$ is defined on
$\overline{Conf}_\Gamma(M) \smallsetminus \cup_{\gamma \in \cG_\Gamma^{\log}} E_\gamma$.
By iterating the procedure used to rewrite a form \eqref{omegagf} as \eqref{omegafdf}, one can
define {\em iterated residues} (see for instance Theorem 1.1 of \cite{ATYu}). 
For an $n$-form $\omega$ with a pole of order one along each component $Y_i$ of a hypersurface $Y=Y_1 \cup \cdots \cup Y_\ell$, where the $Y_i$ intersect transversely, the iterated residue
gives an $(n-\ell)$-form 
\begin{equation}\label{iterRes}
{\rm Res}^\ell[\omega] \in \Omega^{n-\ell}(Y_1 \cap \cdots \cap Y_\ell).
\end{equation}
We know that the intersection $\cap_{\gamma \in \cG_\Gamma^{\log}} E_\gamma$ is
non-empty if and only if the set $\cG_\Gamma^{\log}$ is a $\cG_\Gamma$-nest. Thus, one
obtains the residue \eqref{ResLogDiv}.
\endproof

In the more general case, where the graph has more than logarithmically divergent subgraphs, one has to deal with a form $\pi^*(\omega_\Gamma)$ that has poles of higher order along some of the exceptional divisors $E_\gamma$.

In affine space $\A^N$ a differential form 
$$ \omega = \frac{P(z)\, dz_1\wedge \cdots \wedge dz_N}{Q_1^{r_1}(z) \cdots Q_m^{r_m}(z)} $$
with poles of higher order $r_k$ along the hypersurfaces $Y_i$ defined by $Q_i =0$ is
cohomologous to a form with only poles of order one,
$$ \omega' = \sum_J \frac{P_J(z) \,   dz_1\wedge \cdots \wedge dz_N}{Q_{j_1}(z) \cdots Q_{j_k}(z)}, $$
with $J=\{ j_1, \ldots, j_k\}$, $k\leq N$. (See for instance Theorem 1.8 of \cite{ATYu}.)

This is not true in general for the complement of a hypersurface in a smooth projective 
variety, by using rational forms.  However, in the case of a smooth hypersurface $Y$ in $\P^n$, 
it was shown by Griffiths in \cite{Griff}
that there are Poincar\'e residues for forms with higher order poles. The $n$-forms 
$$ \omega = \frac{P(z)\, dz_1\wedge \cdots \wedge dz_n}{Q^{r+1}(z)} $$
with poles of order $r+1$ along the smooth hypersurface $Y=\{ Q=0 \}$ generate a
subspace $F^{n-r}H^n(\P^n\smallsetminus Y)$ of the cohomology $H^n(\P^n\smallsetminus Y)$
whose image under the Poincar\'e residue gives the pieces of the Hodge filtration on the 
primitive cohomology of the hypersurface,
$$ {\rm Res}(F^{n-r}H^n(\P^n\smallsetminus Y))= F^{(n-1-r)} H^{n-1}_{prim}(Y). $$

This result relating the pole filtration to the Hodge filtration was further generalized to
the complement of normal crossings divisors in smooth projective varieties by Deligne 
in \cite{Deligne} II \S 3.13, and the comparison between pole and Hodge filtration
for singular hypersurfaces was further analyzed by Deligne and Dimca in \cite{DelDim}
and Dimca and Saito in \cite{DimSa}.

Thus, if we momentarily ignored the other divergences coming from
the rest of $Z_\Gamma$ in $\overline{Conf}_\Gamma(X)$, we would
conclude that for a subgraph $\gamma \subseteq \Gamma$ that has worse than
logarithmic divergences, the pullback $\pi^*_\gamma(\omega_\Gamma)$ 
of the Feynman density \eqref{omega0dens} determines an element in
the polar filtration of the complement of the exceptional divisor $E_\gamma$
in $\overline{Conf}_\Gamma(X)$.  
Through Poincar\'e residues, this would then determine
an element in the Hodge filtration of the primitive cohomology of $E_\gamma$. 
The situation is in fact made more complicated by the presence of the additional
singularities coming from the hypersurface $Z_\Gamma$ of \eqref{eqn_g_hyp}.

\subsection{Regularization of contours by Leray coboundaries}

We propose here a regularization procedure for the divergent Feynman
amplitudes \eqref{eqn_weights}, where instead of regularizing the
form as in \cite{BerBruKr} we regularize the domain of integration using
Leray coboundaries, see \cite{Mar}.

Let $E_\gamma$ be one of the exceptional divisors in $\overline{Conf}_\Gamma(X)$
along which the pullback $\pi^*(\omega_\Gamma)$ of the Feynman amplitude 
\eqref{omega0dens} has poles (possibly of higher order). 

The unregularized Feynman weight \eqref{eqn_weights} is given by the
integral over the middle dimensional
cycle in $X^{\BV_\Gamma}$ given by the real locus
$\sigma= X^{\BV_\GG}(\R)=M^{\BV_\Gamma}$, see Remark \ref{reallocus}.

\begin{rem}\label{orientreal}
In the case of even dimensional spacetime,  the real locus 
$\overline{Conf}_\Gamma(\R^d)$ of the configuration space 
$\overline{Conf}_\Gamma(\A^d)$ is non-orientable. Thus, the
configuration spaces $\overline{Conf}_\Gamma(M)$ that
contain $\overline{Conf}_\Gamma(\R^d)$ are non-orientable.
However, in such cases, one can define the regularized weights 
in the same way that is described here below, after passing to a 
double cover of $\overline{Conf}_\Gamma(X)$, branched along 
$\bigcup_{\Gg \in \cG_{\Gamma}} E_\Gg$. The real locus of
this branched cover is orientable. With a slight abuse of notation, 
in the following we do not distinguish explicitly between 
$\overline{Conf}_\Gamma(M)$ and its orientable double cover.
\end{rem}

In particular, as we have seen in
Lemma \ref{divMGamma}, the divergences along the domain of integration
come from the real locus of $\cup_e \Delta_e$,
and in particular, within this locus, from the intersection $\sigma \cap \Delta_\gamma 
=\Delta_\gamma(\R)$, for $\gamma \subseteq \Gamma$ a divergent subgraph. 

Let $\tilde \sigma_\gamma = \pi^{-1}(\sigma \cap \Delta_\gamma) \subset E_\gamma$.
This is a $d |\BV_\Gamma|-1$-cycle in $E_\gamma$. The Leray coboundary
$\cL_\epsilon(\tilde \sigma_\gamma)$ of $\tilde \sigma_\gamma$ is a $d |\BV_\Gamma|$-cycle 
in $\overline{Conf}_\Gamma(X)$ obtained as follows. Let $\partial D_\epsilon(E_\gamma)$
be the boundary of a tubular neighborhood of radius $\epsilon$ around $E_\gamma$. This
is a circle bundle $\pi_\epsilon: \partial D_\epsilon(E_\gamma) \to E_\gamma$ over $E_\gamma$
and one sets $\cL_\epsilon(\tilde \sigma_\gamma) = \pi_\epsilon^{-1}(\tilde \sigma_\gamma)$.
The preimage $\overline{Conf}_\Gamma(M)=\pi^{-1}(\sigma) \subset 
\overline{Conf}_\Gamma(X)$ of the real locus $\sigma = M^{\BV_\Gamma}$ intersects 
$\partial D_\epsilon(E_\gamma)$ in its real points. 

Let then $\Sigma_\epsilon \subset \cL_\epsilon(\tilde\sigma_\Gamma)$ be
a deformation to $\partial D_\epsilon(E_\gamma)$ of $\overline{Conf}_\Gamma(M) 
\cap D_\epsilon(E_\gamma)$, with fixed $\overline{Conf}_\Gamma(M) \cap \partial 
D_\epsilon(E_\gamma)$.
If $\Sigma_\epsilon$ does not intersect the locus $\tilde Z_\Gamma \cap 
\cL_\epsilon(\tilde\sigma_\Gamma)$, where $\tilde Z_\Gamma= \pi_\Gamma^{-1}(Z_\Gamma)$ 
is the preimage of the graph hypersurface of \eqref{eqn_g_hyp} along which the 
form $\omega_\Gamma$ is singular,  one can regularize the integral 
$$ \int_{\overline{Conf}_\Gamma(M)} \pi^*(\omega_\Gamma) $$
by replacing the part
$$ \int_{\overline{Conf}_\Gamma(M) \cap D_\epsilon(E_\gamma)} \pi^*(\omega_\Gamma) $$
of the integral with an integration along the Leray coboundary 
\begin{equation}\label{regLeray}
 \int_{\Sigma_\epsilon} \pi^*(\omega_\Gamma).
\end{equation}

There is an ambiguity involved in the choice of this regularization of the domain
of integration, as in the choice of contours that avoid poles in the one dimensional
setting, which is measured in terms of residues.

\begin{Proposition}\label{ambiguity}
Let $\Gamma$ be a logarithmically divergent graph with $\gamma \subseteq \Gamma$
a divergent subgraph. Then the regularization \eqref{regLeray} is defined up to an
ambiguity measured by the integral
\begin{equation}\label{resambi}
2\pi i \int_{\tilde\sigma_\gamma} {\rm Res}[\pi^*_\gamma(\omega_\Gamma)]
\end{equation}
of the Poincar\'e residue ${\rm Res}[\pi^*_\gamma(\omega_\Gamma)]\in \Omega^{d |\BV_\Gamma|-1}(E_\gamma)$ along the cycle $\tilde\sigma_\gamma = \pi^{-1}(\sigma \cap \Delta_\gamma) \subset E_\gamma$. These ambiguities are given by periods of $E_\gamma$.

In the more general case, if $\gamma \subset \Gamma$ is a subgraph
with worse than logarithmic divergences, so that the pullback $\pi^*_\gamma(\omega_\Gamma)$
has a pole of order $k$ along $E_\gamma$, then the ambiguities in the contour
regularization of the Feynman amplitude are given by periods of the Hodge filtration 
of the primitive part of the cohomology, $F^{(d |\BV_\Gamma| -1 -k)} 
H^{d |\BV_\Gamma| -1}_{prim}(E_\gamma)$.
\end{Proposition}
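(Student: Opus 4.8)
The plan is to localize the analysis near a single exceptional divisor $E_\gamma$ and reduce to the classical picture of contour ambiguity for a form with a first-order pole along a smooth hypersurface, which is governed by the Leray coboundary/residue exact sequence. First I would fix a divergent subgraph $\gamma\subseteq\Gamma$ and work in a tubular neighborhood $D_\epsilon(E_\gamma)$ inside $\overline{Conf}_\Gamma(X)$; by Proposition \ref{prop_form} (and its refinement in the log-divergent non-primitive case), the pullback $\pi^*_\gamma(\omega_\Gamma)$ has a pole of order one along $E_\gamma$ precisely when $\gamma$ satisfies the divergence condition, so locally it has the shape $\frac{df}{f}\wedge\omega'$ of \eqref{omegafdf} with $f$ the defining function of $E_\gamma$. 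The key point is that the two regularizations of the domain — the one where one pushes $\overline{Conf}_\Gamma(M)$ past $E_\gamma$ ``on one side'' and the one where one replaces $\overline{Conf}_\Gamma(M)\cap D_\epsilon(E_\gamma)$ by the deformed piece $\Sigma_\epsilon$ lying on $\partial D_\epsilon(E_\gamma)$ — differ by a chain whose boundary, after the part lying in $Z_\Gamma$ is excluded by the transversality hypothesis $\Sigma_\epsilon\cap\tilde Z_\Gamma\cap\cL_\epsilon(\tilde\sigma_\gamma)=\emptyset$, is exactly a small circle bundle over $\tilde\sigma_\gamma=\pi^{-1}(\sigma\cap\Delta_\gamma)\subset E_\gamma$, i.e.\ the Leray coboundary $\cL_\epsilon(\tilde\sigma_\gamma)$.

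The next step is the standard computation: integrating $\pi^*_\gamma(\omega_\Gamma)=\frac{df}{f}\wedge\omega'$ over this circle bundle gives, by Fubini over the fibers and the residue theorem in the normal $\P^1$-direction, the factor $2\pi i$ times the integral of the restricted form over the base. Since the base is $\tilde\sigma_\gamma$ and the restricted form is by definition the Poincar\'e residue \eqref{PoiRes}, this yields precisely $2\pi i\int_{\tilde\sigma_\gamma}{\rm Res}[\pi^*_\gamma(\omega_\Gamma)]$. To see that this is a period of $E_\gamma$, I would note that ${\rm Res}[\pi^*_\gamma(\omega_\Gamma)]$ is an algebraic differential form on $E_\gamma$ (a regular form on the complement of $E_\gamma\cap\tilde Z_\Gamma$), that $\tilde\sigma_\gamma$ is a real semialgebraic cycle of the complementary dimension, and hence the pairing is a period in the sense of Kontsevich--Zagier; one should remark that $\tilde\sigma_\gamma$ may only be a cycle relative to $E_\gamma\cap\tilde Z_\Gamma$ and the real boundary, but after passing to the orientable double cover as in Remark \ref{orientreal} and using the fact that ${\rm Res}[\pi^*_\gamma(\omega_\Gamma)]$ is closed on $E_\gamma\smallsetminus\tilde Z_\Gamma$, this still defines a well-defined period class.

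For the higher-order-pole case, the argument is analogous but with the first-order residue replaced by the Griffiths--Deligne description of Poincar\'e residues for forms with higher-order poles recalled just above. By Corollary \ref{worselog}, if $\gamma$ has worse than logarithmic divergences the pole order is $k=(d-2)|\BE_\gamma|-d(|\BV_\gamma|-1)+1>1$, so $\pi^*_\gamma(\omega_\Gamma)$ represents a class in the polar filtration $F^{d|\BV_\Gamma|-1-k}H^{d|\BV_\Gamma|-1}(\overline{Conf}_\Gamma(X)\smallsetminus E_\gamma)$, and (modulo the subtlety that $E_\gamma$ is only smooth, not a hypersurface in projective space, so one invokes Deligne's \cite{Deligne} generalization rather than Griffiths' original statement) its residue lands in $F^{d|\BV_\Gamma|-1-k}H^{d|\BV_\Gamma|-1}_{prim}(E_\gamma)$. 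Pairing this filtered cohomology class against $\tilde\sigma_\gamma$ gives the stated periods. The main obstacle I expect is making the boundary/deformation argument in the first step fully rigorous: one must check that the difference of the two domains of integration is genuinely homologous (relative to $\tilde Z_\Gamma$ and to the unchanged part of $\overline{Conf}_\Gamma(M)$) to the Leray coboundary $\cL_\epsilon(\tilde\sigma_\gamma)$, and that the neglected contributions near $E_\gamma\cap\tilde Z_\Gamma$ and near the intersections with the other exceptional divisors $E_{\gamma'}$ really do not contribute — this uses the transversality of the dominant transforms established in Proposition \ref{strataConf} together with the hypothesis on $\Sigma_\epsilon$, and is where the non-orientability issue of Remark \ref{orientreal} has to be handled with care.
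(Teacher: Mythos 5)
Your proposal matches the paper's own argument in all essentials: both reduce the claim to the Leray--Poincar\'e duality $\frac{1}{2\pi i}\int_{\cL(\sigma)}\omega=\int_\sigma{\rm Res}[\omega]$, identify the ambiguity in the choice of $\Sigma_\epsilon$ with integer multiples of the coboundary integral, interpret the result as a period pairing on $E_\gamma$, and invoke Griffiths--Deligne to pass from the polar filtration to $F^{d|\BV_\Gamma|-1-k}H^{d|\BV_\Gamma|-1}_{prim}(E_\gamma)$ when the pole order exceeds one. The only difference is cosmetic: you rederive the duality via Fubini in the normal circle direction and explicitly flag the relative-cycle and orientability caveats, whereas the paper simply quotes the duality and defers the orientability issue to Remark \ref{orientreal}.
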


\proof The Poincar\'e residue is dual to the Leray coboundary, in the sense that, if
$\omega$ is an $n$-form with logarithmic poles along a hypersurface $Y\subset X$, and
$\sigma$ is an $(n-1)$-chain in $Y$, then
$$ \frac{1}{2\pi i} \int_{\cL(\sigma)} \omega = \int_\sigma {\rm Res}[\omega]. $$
Thus, the ambiguity in the choice of a domain of integration $\Sigma_\epsilon$ as in
\eqref{regLeray}, which is up to the value of the integral 
$$ \int_{\cL_\epsilon(\tilde\sigma_\Gamma)} \pi^*(\omega_\Gamma) $$
is measured by (integral multiples of) \eqref{resambi}. 

In general, the form $\pi^*(\omega_\Gamma)$ has further
singularities on $\cL_\epsilon(\tilde\sigma_\Gamma)$. These come from
the intersections of $\cL_\epsilon(\tilde\sigma_\Gamma)$ with the preimage
$\tilde Z_\Gamma$ of the graph hypersurface of \eqref{eqn_g_hyp}. 

For sufficiently small $\epsilon >0$, up to a locus of codimension at least two,
the intersections $\tilde Z_\Gamma \cap \cL_\epsilon(\tilde\sigma_\Gamma)$
are coming from the components of $Z_\Gamma$ associated to those 
exceptional divisors $E_{\gamma'}$ 
that have non-empty intersection $E_\gamma \cap E_{\gamma'}\neq \emptyset$,
and such that $\gamma'$ is also a divergent subgraph of $\Gamma$.

In the model case where there would be only one divergent graph $\gamma$,
which is a logarithmic divergence, the form $\pi^*(\omega_\Gamma)$ would have no
further singularities on $\cL_\epsilon(\tilde\sigma_\Gamma)$ and 
the values of the integral \eqref{resambi} would then be periods 
$$ \begin{array}{ccccc}
H^{d|\BV_\Gamma|-1}(E_\gamma) & \times & H_{d|\BV_\Gamma|-1}(E_\gamma) & \to & \C \\[3mm]
{\rm Res}[\pi^*_\gamma(\omega_\Gamma)] &  & \tilde\sigma_\gamma & \mapsto &
\int_{\tilde\sigma_\gamma} {\rm Res}[\pi^*_\gamma(\omega_\Gamma)] .
\end{array} $$
In the more general case of a higher order pole, the resulting period pairing would be 
with the part of the cohomology that comes from ${\rm Res}(F^{d |\BV_\Gamma| -k} 
H^n( \overline{Conf}_\Gamma(X) \smallsetminus E_\gamma )$ which gives the 
piece of the primitive cohomology
$F^{(d |\BV_\Gamma| -1 -k)} H^{d |\BV_\Gamma| -1}_{prim}(E_\gamma)$, as in \cite{Griff},
\cite{Deligne}.

However, in general, there will be other divergent subgraphs $\gamma'$ with 
$E_{\gamma'}\cap E_\gamma \neq \emptyset$. In this case, assuming only
log divergences are present, one ends up with an iterated residue as in \eqref{iterRes},
with values in the cohomology of the intersection of all the corresponding
exceptional divisors.
\endproof

The integrals along the Leray coboundaries measure residues 
around the exceptional divisors $E_\gamma$ of the blowups, in a way
similar to what happens with the toric blowups of \cite{BK} for the
Feynman integrals in momentum space. The formulae described in the
previous sections for the motive of the wonderful compactification of
the configuration spaces show that, if the underlying smooth (quasi)projective
variety $X$ is mixed Tate as a motive, then the $E_\gamma$, their intersections, 
and the complements $\overline{Conf}_\Gamma(X) \smallsetminus E_\gamma$ that 
appear in the above are also mixed Tate, so that the ambiguities (the residues) in the
Leray regularization of the Feynman amplitudes are by periods of mixed Tate motives.

However, more generally, one considers the full integral  
$$ \int_{\overline{Conf}_\Gamma(M)} \pi^*(\omega_\Gamma) $$
and its regularization 
$$ \int_{\overline{Conf}_\Gamma(M)} \pi^*(\omega_\Gamma) 
-\sum_{\gamma\in \cG_\Gamma^{log}}\left( \int_{\overline{Conf}_\Gamma(M) \cap D_\epsilon(E_\gamma)} \pi^*(\omega_\Gamma) 
- \int_{\Sigma_\epsilon(\tilde\sigma_\gamma)} \pi^*(\omega_\Gamma) \right).
$$
In order to view these integrations as period computations, one needs to work
with the complement $\overline{Conf}_\Gamma(X)\smallsetminus Z_\Gamma$,
for which we do not have a comparably simple description of the motive. In particular,
the components of the graph hypersurface $Z_\Gamma$ are cones, which are 
simple to understand when one restricts them to a tubular neighborhood of one
of the divisors $E_\gamma$, as we have seen above. However, these cones
intersect in complicated ways outside of these tubular neighborhoods, so that
one does not have a good control over the motivic nature of these intersections.

\bigskip

{\bf Acknowledgments.} Part of this work was carried out during a visit of the
first author to the California Institute of Technology and during a visit of
both authors to the Max Planck Institute for Mathematics in Bonn. The
first author is partially  supported by a NWO grant; the second author is
partially supported by NSF grants DMS-0651925, DMS-0901221, and DMS-1007207.

\end{document}